\newcolumntype{L}{>{$}l<{$}} % math-mode version of "l" column type
\newcolumntype{R}{>{$}r<{$}} % math-mode version of "r" column type
\newcolumntype{C}{>{$}c<{$}} % math-mode version of "c" column type
\newtheorem{observation}{Observation}[section]
\newtheorem{definition}[Definition]{Definition}
\renewcommand{\paragraph}{%
  \@startsection{paragraph}{4}%
  {\z@}{0.8ex \@plus 1ex \@minus .2ex}{-1em}%
  {\normalfont\normalsize\bfseries}%
}
\title{A Breezing Proof of the KMW Bound}
\author{Corinna Coupette\thanks{MPI for Informatics; Saarbr\"{u}cken Graduate School of Computer Science; coupette@mpi-inf.mpg.de.}
	\and Christoph Lenzen\thanks{MPI for Informatics; clenzen@mpi-inf.mpg.de.}}
\date{}
\renewcommand*{\@opargbegintheorem}[3]{\trivlist
	\item[\hskip \labelsep{\sc #1\ #2.}] \textsc{(#3)}\ \itshape}
\newenvironment{xproof}[1]{\noindent\textit{Proof of #1.}}{\qquad\vbox{\hrule height0.6pt\hbox{%
			\vrule height1.3ex width0.6pt\hskip0.8ex
			\vrule width0.6pt}\hrule height0.6pt
	}\outerparskip 0pt\endtrivlist}
\begin{document}

\maketitle
	
	% Copyright Statement
	% When submitting your final paper to a SIAM proceedings, it is requested that you include 
	% the appropriate copyright in the footer of the paper.  The copyright added should be 
	% consistent with the copyright selected on the copyright form submitted with the paper.
	% Please note that "20XX" should be changed to the year of the meeting.
	
	% Default Copyright Statement
	%\fancyfoot[R]{\scriptsize{Copyright \textcopyright\ 20XX by SIAM\\
	%		Unauthorized reproduction of this article is prohibited}}
	
	% Depending on which copyright you agree to when you sign the copyright form, the copyright 
	% can be changed to one of the following after commenting out the default copyright statement
	% above.
	
%	\fancyfoot[R]{\scriptsize{Copyright \textcopyright\ 2021\\
%	Copyright for this paper is retained by authors}}
	
	%\fancyfoot[R]{\scriptsize{Copyright \textcopyright\ 20XX\\
	%Copyright retained by principal author's organization}}
	
	\pagenumbering{arabic}
	\setcounter{page}{1}%Leave this line commented out.
	\cfoot{\thepage}

\begin{abstract}
In their seminal paper from 2004, Kuhn, Moscibroda, and Wattenhofer (KMW) proved a hardness result for several fundamental graph problems in the LOCAL model: 
For any (randomized) algorithm, 
there are graphs with $n$ nodes and maximum degree $\Delta$ 
on which $\Omega(\min\{\sqrt{\log n/\log \log n},\log \Delta/\log \log \Delta\})$ (expected) communication rounds are required 
to obtain polylogarithmic approximations to a minimum vertex cover, 
minimum dominating set, or maximum matching.
Via reduction, this hardness extends to symmetry breaking tasks like finding maximal independent sets or maximal matchings.

Today, more than $15$ years later,
there is still no proof of this result that is easy on the reader. 
Setting out to change this, in this work, we provide a fully self-contained and $\mathit{simple}$ proof of the KMW lower bound.
Our key argument is algorithmic, and it relies on an invariant that can be readily verified from the generation rules of the lower bound graphs.
\end{abstract}
%\newpage

% TODO make narrative coherent (currently blocks from arXiv v3 and SOSA in camera pasted together)
% TODO streamline terminology and make sure all needed definitions are introduced
% TODO add missing lemmas (only a few)
% !TeX spellcheck = en_US
% !TeX root = cluster_trees_sosa_format.tex
\section{Introduction and Related Work}\label{sec:intro}

A key property governing the complexity of distributed graph problems is their \emph{locality}: 
the distance up to which the nodes running a distributed algorithm need to explore the graph to determine their local output. 
Under the assumption that nodes have unique identifiers, the locality of any task is at most $D$, the diameter of the graph. 
However, many problems of interest have locality $o(D)$, 
and understanding the locality of such problems in the LOCAL model of computation has been a main objective of the distributed computing community since the inception of the field.

A milestone in these efforts is the 2004 article by Kuhn, Moscibroda, and Wattenhofer, 
proving a lower bound of $\Omega(\min\{\sqrt{\log n/\log \log n}, \log \Delta/\log \log \Delta\})$ on the locality of several fundamental graph problems~\cite{kuhn2004}, 
where $n$ is the number of nodes and $\Delta$ is the maximum degree of the input graph. 
The bound holds under both randomization and approximation, 
and it is the first result of this generality
beyond the classic $\Omega(\log^* n)$ bound on $3$-coloring cycles \cite{linial1992}. %
%\footnote{%
%	We disregard the bounds of $\Omega(n)$ on $2$-coloring cycles and $\Omega(d)$ on $(\sqrt{\Delta}-1)$-coloring $\Delta$-regular trees of depth $d$, also due to Linial. 
%	Essentially, they show that these tasks are \emph{global}, 
%	i.e., have locality $\Theta(D)$. 
%	Many tasks were known to be global early on, which typically follows from trivial arguments.
%}
% Linial's bound has received much attention, 
% with various extensions~\cite{naor1995,lenzen2008,czygrinow2008,goos2013,goos2017}
% and alternative proofs~\cite{laurinharju2014,suomela2019}. % ramsey
% CL @ CC: See http://ac.informatik.uni-freiburg.de/teaching/ss_16/netalg/LectureNotes/chapter7.pdf or https://disco.ethz.ch/courses/fs18/podc/lecturenotes/chapter6.pdf regarding Ramsey theory-based arguments. Admittedly, I don't know what reference to use - maybe https://users.ics.aalto.fi/suomela/dda/dda-screen.pdf ?
% CC @ CL: the newest version is da-screen (2019), and we cite that now
% A recent wave of major results~\cite{brandt2016,chang2018,balliu2019}, 
% based on what has been termed \emph{round elimination}, 
% can be seen as generalizing Linial's technique further.

\begin{figure*}[t]
	\centering
	\begin{subfigure}[b]{0.49\textwidth}
		\centering
		% !TeX spellcheck = en_US
% !TeX root = ../cluster_trees_disc.tex
		\begin{tikzpicture}[shorten >=1pt,node distance=2cm and 3cm,on grid,auto] 
		\node[state] (c00) [align=center,line width=0.5mm,rectangle] {$C_0$};
		\node[] (level0) [left=4.5cm of c00] {Level $0$};
		\node[] (level1) [below= of level0] {Level $1$};
		\node[] (level2) [below= of level1] {Level $2$};
		\node[state,fill=gray!30] (c01) [align=center,line width=0.5mm,rectangle, below left=of c00] {$C_1$};
		\node[state,fill=gray!30] (c02) [align=center, below=of c00] {$C_2$};
		\node[state,fill=gray!80] (c03) [align=center, below=of c01] {$C_3$};
		\path[->]
		(c01) edge [bend left=30, pos=0.6, align=left] node[below=0.5] {$0$} (c00)
		(c00) edge [bend right=15, align=left] node[above=0.5] {$1$} (c01)
		(c00) edge [bend left=15, align=left] node {$1$} (c02)
		(c02) edge [bend left=15, align=left] node {$2$} (c00)
		(c01) edge [bend left=15, align=left] node {$0$} (c03)
		(c03) edge [bend left=15, align=left] node {$1$} (c01)
		;
	\end{tikzpicture}
		\caption{Hierarchical representation}
	\end{subfigure} %
	\begin{subfigure}[b]{0.49\textwidth}
		\centering
		% !TeX spellcheck = en_US
% !TeX root = ../cluster_trees.tex
\begin{tikzpicture}[shorten >=1pt,node distance=1cm and 2cm,on grid,auto] 
\node[state] (c00) [align=center,line width=0.5mm,rectangle] {$C_0$};
\node[state] (c01) [align=center,fill=gray!30,line width=0.5mm,rectangle, right=of c00] {$C_1$};
\node[state] (c02) [align=center,fill=gray!30, left=of c00] {$C_2$};
\node[state] (c03) [align=center,fill=gray!80, right=of c01] {$C_3$};
\path
	(c00) edge [pos=0.25, align=left,below] node {$0$} (c01)
	(c01) edge [pos=0.25, align=left,below] node {$1$} (c00)
	(c00) edge [pos=0.25, align=left,below] node {$1$} (c02)
	(c02) edge [pos=0.25, align=left,below] node {$2$} (c00)
	(c01) edge [pos=0.25, align=left,below] node {$0$} (c03)
	(c03) edge [pos=0.25, align=left,below] node {$1$} (c01)
	;
\end{tikzpicture}
		\caption{Flat representation}
	\end{subfigure}
	\caption[Representations of $CT_1$]{%
		Representations of $CT_1$, which is parametrized by $\beta$, shaded by cluster size (darker means smaller). % 
		Cluster shapes indicate cluster position (\emph{internal} or \emph{leaf}). Edge label $i$ is short for $\beta^i$,
		the number of neighbors that nodes in one cluster have in another. 
		For example, nodes in cluster $C_0$ have $\beta^0$ neighbors in cluster $C_1$, and nodes in cluster $C_1$ have $\beta^1$ neighbors in cluster $C_0$.}\label{fig:ct1}
\end{figure*}
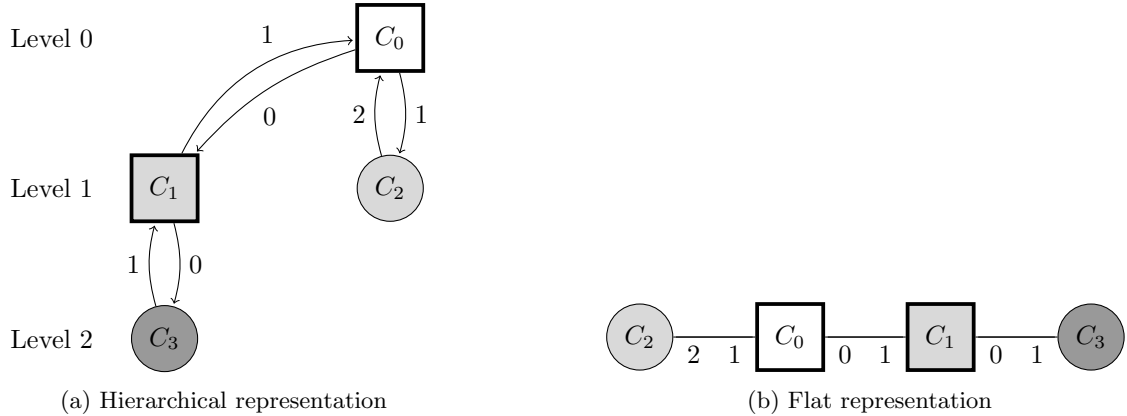

\subsection{A Brief Recap of the KMW Lower Bound}

In a nutshell, in~\cite{kuhn2016}, the authors reason as follows.

\begin{enumerate}[wide,itemsep=0pt]
	\item \textbf{Define Cluster Tree (CT) graph family.} 
	This graph family is designed such that in high-girth CT graphs, 
	the $k$-hop neighborhoods of many nodes that are \emph{not} part of a solution to,  
	e.g., the minimum vertex cover problem, 
	are isomorphic to the $k$-hop neighborhoods of nodes that \emph{are} part of a solution.\label{step:definition}
	\item \textbf{Prove that high-girth CT graphs have isomorphic node views.}
	If a CT graph $G_k$ has girth at least $2k+1$, 
	the isomorphisms mentioned in Step~\ref{step:definition} exist.
	This implies that a distributed algorithm running for $k$ rounds,
	which needs to determine the output at nodes based on their $k$-hop neighborhood,
	cannot distinguish between such nodes based on the graph topology.\label{step:isomorphism}
	\item \textbf{Show existence of high-girth CT graphs.}
	For each $k\in \mathbb{N}$,
	there exists a CT graph $G_k$ with girth at least $2k+1$ that has sufficiently few nodes and low maximum degree.\label{step:existence}
	\item \textbf{Infer lower bounds.}
	Under uniformly random node identifiers,%
	\footnote{In the LOCAL model,
		nodes have unique identifiers.
		Without these,
		even basic tasks like computing the size of the graph are impossible.}
	on a CT graph with girth at least $2k+1$,
	a $k$-round distributed algorithm cannot achieve a small expected approximation ratio for minimum vertex cover, maximum matching, or minimum dominating set,
	and it cannot find a maximal independent set or maximal matching with a small probability of failure.\label{step:bounds}
\end{enumerate}

The core of the technical argument lies in Step~\ref{step:isomorphism}.
A bird's-eye view of the reasoning for each of the steps is as follows.

\begin{enumerate}[wide,noitemsep]
	\item \textbf{Define Cluster Tree (CT) graph family.} 
	We want to have a large independent set of nodes%
	---referred to as \emph{cluster} $C_0$---%
	which contains most of the nodes in the graph.
	The $k$-hop neighborhoods of these nodes should be isomorphic not only to each other but also to the $k$-hop neighborhoods of nodes in a smaller cluster $C_1$.
	Each node in $C_0$ should have one neighbor in $C_1$,
	and the edges between the nodes from both clusters should form a biregular graph.
	In this situation, 
	a $k$-round distributed algorithm computing, e.g., a vertex cover, 
	cannot distinguish between the endpoints of edges connecting $C_0$ and $C_1$ 
	based on the graph topology.
	This is all we need for Step~\ref{step:bounds} to succeed.
	
	However, choosing the ratio $\beta := |C_0|/|C_1|$ larger than $1$ entails that nodes in $C_1$ have more neighbors in $C_0$ than vice versa.
	To maintain the indistinguishability of nodes in $C_0$ and $C_1$ for a $k$-round distributed algorithm,
	we add clusters $C_2$ and $C_3$ providing the ``right'' number of additional neighbors to $C_0$ and $C_1$, respectively,
	which are by a factor of $\beta$ smaller than their neighboring cluster to keep the overall number of non-$C_0$ nodes small.
	Now the nodes in $C_0$ and $C_1$ have the same number of neighbors,
	which implies that one round of communication is insufficient to distinguish between them.%
	\footnote{%
		This only applies if nodes do \emph{not} know the identities of their neighbors initially,
		known as KT0 (initial knowledge of topology up to distance $0$).
		It is common to assume KT1, i.e.,
		nodes \emph{do} know the identifiers of their neighbors at the start of the algorithm.
		However, this weakens the lower bound by one round only,
		not affecting the asymptotics.}
	See Figure~\ref{fig:ct1} (p.~\pageref{fig:ct1}) for an illustration of the resulting structure, $CT_1$.
	
	Unfortunately, looking up to distance two will now reveal the difference in degrees of neighbors:
	``Hiding'' the asymmetry between $C_0$ and $C_1$ by adding $C_2$ and $C_3$ enforces a similar asymmetry between $C_2$ and $C_3$.
	This is overcome by inductively ``growing'' a \emph{skeleton tree} structure on clusters,
	which encodes the topological requirements for moving the asymmetry in degrees further and further away from $C_0$ and $C_1$.
	
	Because in a graph of girth at least $2k+1$,
	the $k$-hop neighborhoods of all nodes are trees,
	the symmetry in degrees thus established is sufficient to result in isomorphic $k$-hop neighborhoods between nodes in $C_0$ and $C_1$.
	The growth rules of the skeleton tree are chosen to meet the topological requirements,
	while increasing degrees and the total number of nodes as little as possible.\medskip
	
	\item \textbf{Prove that high-girth CT graphs have isomorphic node views.}
	Using that $k$-hop neighborhoods of high-girth CT graphs are trees,
	the task of showing that $v\in C_0$ and $w\in C_1$ have isomorphic $k$-hop neighborhoods boils down to finding a \emph{degree-preserving bijection} between these neighborhoods that maps $v$ to $w$.
	At first glance,
	this seems straightforward:
	By construction, 
	%all nodes in leaf clusters of the skeleton tree that are within $k$ hops of $C(v)$ and $C(w)$ have the same degree towards their parent cluster and vice versa,
	%while 
	nodes in inner clusters of the skeleton tree have degrees of $\beta^0,\beta^1,\ldots,\beta^k$ towards their $k+1$ adjacent clusters,
	and for each leaf cluster that lies at distance $d \leq k$ from $C(v)$ and has a degree of $\beta^x$ towards its parent cluster, 
	we can find a leaf cluster with the same degree towards its parent cluster at distance $d$ from $C(w)$.
	Hence, 
	mapping a node $v''$ with parent $v'$ to a node $w''$ with parent $w'$ if (1) the clusters $C(v')$ and $C(w')$ lie at the same distance $d'<k$ from $C(v)$ resp.~$C(w)$ and (2) $C(v')$ and $C(w')$ have the same outdegree towards $C(v'')$ resp.~$C(w'')$
	%according to their cluster membership,
	%where inner nodes match their adjacent clusters according to their outdegree towards them,
	seems to be a promising approach for finding the desired bijection.
	
	However,
	when rooting the $k$-hop neighborhood of $v\in C_0$ ($w\in C_1$) at $v$ ($w$) and constructing the isomorphism by recursing on subtrees,
	for each processed node, the image of its parent under the isomorphism has already been determined.
	The asymmetry discussed in Step~\ref{step:definition} also shows up here:
	Some children of $v\in C_0$ and $w\in C_1$ that are mapped to each other will have different degrees towards their parents' clusters.
	This results in a mismatch for one pair of \emph{their} neighbors when processing a node according to the proposed strategy.
	
	Nonetheless,
	it turns out that mapping such ``mismatched'' nodes to each other results in the desired bijection.
	Proving this is,
	by a margin,
	the technically most challenging step in obtaining the KMW lower bound.\medskip

	\item \textbf{Show existence of high-girth CT graphs.}
	In order to show that sufficiently small and low-degree CT graphs $G_k$ of girth $2k+1$ exist,
	Kuhn et al.\ make use of \emph{graph lifts.}%
	\footnote{In the original paper~\cite{kuhn2004},
		they instead use subgraphs of a high-girth family of graphs $D(r,q)$ given in~\cite{lazebnik1995}.
		Utilizing lifts as outlined here was proposed by Mika G\"o\"os and greatly simplifies a self-contained presentation.}
	Graph $H$ is a lift of graph $G$
	if there exists a \emph{covering map} from $H$ to $G$,
	i.e., a surjective graph homomorphism that is bijective when restricted to the neighborhood of each node of $H$.
	These requirements are stringent enough to ensure that a lift of a CT graph $G_k$
	(i) is again a CT graph,
	(ii) has at least the same girth,
	and (iii) has the same maximum degree.
	On the other hand, they are lax enough to allow for \emph{increasing} the girth.%
	\footnote{For instance, the cycle $C_{3t}$ on $3t$ nodes is a lift of $C_3$,
		where the covering map sends the $i$\textsuperscript{th} node of $C_{3t}$ to the $(i \bmod 3)$\textsuperscript{th} node of $C_3$.
		Any graph $G$ has an acyclic lift that is an infinite tree $T$,
		by adding a new ``copy'' of node $v\in V(G)$ to $T$ for each walk leading to $v$ (when starting from an arbitrary fixed node of $G$ whose first copy is the root of $T$).
		The challenge lies in finding \emph{small} lifts of high girth.}
	This is exploited by a combination of several known results as follows.
	\begin{enumerate}[noitemsep]
		\item Construct a low-girth CT graph $G_k$ by connecting nodes in clusters that are adjacent in the skeleton tree using the edges of disjoint complete bipartite graphs whose dimensions are prescribed by the edge labels of the skeleton tree.%
		\footnote{%
		E.g., the nodes in clusters $C_0$ and $C_1$, which themselves are connected by an edge with labels $(\beta^0,\beta^1)$ (cf. Figure~\ref{fig:ct1}, p.~\pageref{fig:ct1}), 
		are connected using the edges of $|C_0|/\beta^1$ copies of $K_{\beta^0,\beta^1}$.
		} 
		Choose the smallest such $G_k$.
		\item Embed $G_k$ into a marginally larger regular graph,
		whose degree is the maximum degree of $G_k$ (this is a folklore result).
		\item There exist $\Delta$-regular graphs of girth $g$ and fewer than $\Delta^g$ nodes~\cite{erdos1963}.
		\item For any two $\Delta$-regular graphs of $n_1$ and $n_2$ nodes,
		there is a common lift with $O(n_1 n_2)$ nodes~\cite{angluin1981}.
		Apply this to the above two graphs to obtain a high-girth lift of a supergraph of $G_k$.
		\item Restrict the covering map of this lift to the preimage of $G_k$ to obtain a high-girth lift of $G_k$,
		which itself is a CT graph.
	\end{enumerate}
	Doing the bookkeeping yields size and degree bounds for the obtained CT graph as a function of $k$.\medskip

	\item \textbf{Infer lower bounds.}
	With the first three steps complete,
	the lower bound on the number of rounds for minimum vertex cover approximations follows by showing that the inability to distinguish nodes in $C_0$ and $C_1$ forces the algorithm to choose a large fraction of nodes from $C_0$,
	while a much smaller vertex cover exists.
	The former holds because under a uniformly random labeling,
	nodes in $C_0$ and $C_1$ are equally likely to be selected,
	while each edge needs to be covered with probability $1$.
	Thus, at least $|C_0|/2$ nodes are selected in expectation.
	At the same time,
	the CT graph construction ensures that $C_0$ contributes the vast majority of the nodes.
	Hence, choosing all nodes \emph{but} the independent set $C_0$ results in a vertex cover much smaller than $|C_0|/2$.
	The lower bounds for other tasks follow by similar arguments and reductions.%
	\footnote{%
		For example, as any maximal matching yields a $2$-approxi\-mation to a minimum vertex cover,
		the minimum vertex cover lower bound extends to maximal matching.}
\end{enumerate}

\subsection{Our Contribution}

Despite its significance, apart from an early extension to maximum matching by the same authors \cite{kuhn2006}, 
the KMW lower bound has not inspired follow-up results. 
We believe that one reason for this is that the result is not as well-understood as the construction by Linial~\cite{linial1992},
which inspired many extensions~\cite{naor1995,lenzen2008,czygrinow2008,goos2013,goos2017,brandt2016,chang2018,balliu2019} and alternative proofs~\cite{laurinharju2014,suomela2019}. 
History itself appears to drive this point home: 
In a 2010 arXiv article~\cite{kuhn2010}, 
an improvement to $\Omega(\min\{\sqrt{\log n},\log \Delta\})$ was claimed,
which was refuted in 2016 by Bar-Yehuda et al.~\cite{baryehuda2016}.
2016 was also the year when finally a journal article covering the lower bound was published~\cite{kuhn2016}%
---over a decade after the initial construction! 
In the journal article,
the technical core of the proof spans six pages,
involves convoluted notation,
and its presentation suffers from a number of minor errors impeding the reader.%
\footnote{The refutation of the improved lower bound in~\cite{baryehuda2016} came to the attention of the authors of~\cite{kuhn2016} \emph{after} the article had been accepted by \emph{J. ACM} with the incorrect result;
the authors were forced to revise the article on short notice before publication,
leading to the corrected material receiving no review~\cite{kuhn2020}.
Taking into account the complexity of the proof in~\cite{kuhn2016},
despite minor flaws, 
we feel that the authors did a commendable job.
} 
% While there is no objective measure of simplicity,
% we believe that a more digestible proof is highly desirable.

%###
\paragraph{A constructive proof of the key graph isomorphism.}
%###

In this work, we present a novel proof for Step~\ref{step:isomorphism} of the KMW bound. 
That is, we revise the heart of the argument, which
shows that nodes in $C_0$ and $C_1$ have indistinguishable $k$-hop neighborhoods. 
The proof in \cite{kuhn2016} uses an inductive argument that is based on a number of notation-heavy derivation rules to describe the $k$-hop neighborhoods of nodes in $C_0$ and $C_1$ and map subtrees of these neighborhoods onto each other. 
%For example, one of the derivation rules (taken from Lemma~7 in \cite{kuhn2016}) reads 
%\begin{align*}
%\mathcal{B}^{\uparrow}_{i-2,k-i,2}\subseteq \big[\mathcal{F}_{\{i-1\},k-i,2}\uplus \mathcal{D}_{k-i,2}\uplus (\delta_{i-1}-1)\cdot \mathcal{B}^{\downarrow}_{i-1,k-i+1,1}\big]~~~(i \geq 2),
%\end{align*}
%where $\mathcal{F}$ and $\mathcal{D}$ are defined as
%\begin{align*}
%\mathcal{F}_{W,d,\lambda} := \underset{j\in\{0,\dots,k-d+1\}\setminus W}{\biguplus} \delta_j\cdot\mathcal{B}^{\uparrow}_{j,d-1,\lambda+1}&~~~\text{and}~~~&
%\mathcal{D}_{d,\lambda} := \overset{k}{\underset{j=k-d+2}{\biguplus}} \delta_j \cdot \mathcal{B}^{\uparrow}_{j,k-j,\lambda+1}.
%\end{align*}
The proofs of the derivation rules, which together enable the inductive argument, 
rely crucially on notation and verbal description.

In contrast,
our proof is based on a simple algorithmic invariant. 
We give an algorithm that constructs the graph isomorphism between the nodes' neighborhoods in the natural way suggested by the CT graph construction.
The key observation is that one succinct invariant is sufficient to overcome the main obstacle, 
namely the ``mismatched'' nodes that are mapped to each other by the constructed isomorphism.
This not only substantially simplifies the core of the proof,
it also has explanatory power:
In the proof from~\cite{kuhn2016},
the underlying intuition is buried under heavy notation and numerous indices.

%###
\paragraph{Simplified notation and improved visualization.}
%###
Capitalizing on the new proof of the key graph isomorphism,
as a secondary contribution, 
we clean up and simplify notation also outside of the indistinguishability argument.
We complement this effort with improved visualizations of the utilized graph structures.
Overall, we expect these modifications to make the lower bound proof much more accessible,
and we hope to provide a solid foundation for work extending the KMW result.

\subsection{Further Related Work}

The KMW bound applies to fundamental graph problems that are locally checkable in the sense of Naor and Stockmeyer \cite{naor1995}.
Balliu et al.\ give an overview of the known time complexity classes for such problems \cite{balliu2018,balliu2018b,balliu2020}, 
extending a number of prior works \cite{fraigniaud2013,chang2016,feuilloley2016,ghaffari2017,ghaffari2018,chang2019b,chang2019,rozhon2020}, 
and Suomela surveys the state of the art attainable via constant-time algorithms \cite{suomela2013}. 
Bar-Yehuda et al.\ provide algorithms that compute $(2+\varepsilon)$-approximations to minimum (weighted) vertex cover and maximum (weighted) matching in $\mathcal{O}(\log \Delta / \varepsilon \log \log \Delta)$ and $\mathcal{O}(\log \Delta / \log \log \Delta)$ deterministic rounds, respectively \cite{baryehuda2016,baryehuda2017}, 
demonstrating that the KMW bound is tight when parametrized by $\Delta$ even for constant approximation ratios. 
For symmetry breaking tasks,
the classic algorithm by Panconesi and Rizzi \cite{panconesi2001} to compute maximal matchings and maximal independent sets in $\mathcal{O}(\log^*n+\Delta)$ deterministic rounds has recently been shown to be optimal for a wide range of parameters \cite{balliu2019}. 

\subsection{Organization of this Article}

This article gives a complete and self-contained proof of the KMW bound, 
supplementing the version focusing on the indistinguishability argument published at SOSA 2021.
After introducing basic graph theoretical concepts and notation as well as our computational model in Section~\ref{sec:prelim}, 
we define the lower bound graphs in Section~\ref{macro:definition}. 
This sets the stage for our main contribution:
In Section~\ref{macro:indistinguishability}, we prove the indistinguishability of the $k$-hop neighborhoods of nodes in the clusters $C_0$ and $C_1$ under the assumption of high girth. 

We infer the order and maximum degree of the lower bound graphs, 
which play important roles in the lower bound derivation, in Section~\ref{macro:properties}.
To ensure that lower bound graphs with high girth exist, we construct such graphs with low girth in Section~\ref{micro:low-girth} and lift them to high girth in Section~\ref{micro:high-girth} with the help of regular graphs introduced in Section~\ref{micro:regular}.
We obtain the KMW bound for polylogarithmic approximations to a minimum vertex cover in Section~\ref{sec:bounds}. 
The appendix provides extensions to minimum dominating set, maximum matching, maximal matching, and maximal independent set.

% !TeX spellcheck = en_US
% !TeX root = cluster_trees_sosa_format.tex
% !TeX spellcheck = en_US
% !TeX root = cluster_trees_sosa_format.tex
% NOTATION TABLE BEGIN
\begin{table*}[thb!]
	\setlength{\tabcolsep}{1.5pt}
	\centering\footnotesize\renewcommand{\arraystretch}{1.1}
	\begin{tabular}{RCp{0.35\textwidth}p{0.46\textwidth}}\hline
		\mathbf{Symbol}&&\textbf{Definition}&\textbf{Meaning}\\\cline{1-4}
		% SETS OF INTEGERS
		[k]&:=&$\{i \in \mathbb{N} \mid i \leq k\}$&Set of positive integers not larger than $k$\\
		\ [k]_0&:=&$\{i \in \mathbb{N}_0 \mid i \leq k\}$&Set of nonnegative integers not larger than $k$\\
% 		\binom{X}{2}&:=&$\{\{a,b\}\mid a \in X\wedge b \in X \wedge a \neq b \}$&Set of two-element subsets of a set $X$\\
		\cline{1-4}
		% BASIC G, induced subgraphs, n, m
		G&:=&$(V(G), E(G))$&Graph with node set $V(G)$ and edge set $E(G)$\\
		G[S]&:=&$(S,\{ \{ v,w \} \in E(G)\mid v,w\in S \}$%
		&Subgraph of $G$ induced by $S\subseteq V(G)$\\
		% NEIGHBORHOODS, DEGREES, ETC.
		\Gamma_G(v)&:=&$\{w \in V(G) \mid \{v,w\}\in E(G)\}$&Neighborhood of $v$ in $G$ (non-inclusive)\\
		\Gamma_G^k(v)&:=&$\{w\in V(G)\mid d_G(v,w)\leq k \}$&$k$-hop neighborhood of $v$ in $G$ (inclusive)\\
		G^k(v)&:=&$G[\Gamma^k(v)]\setminus \{\{w,u\} \in E(G)$\newline $\mid 
		d_G(v,w)=d_G(v,u)= k\}$&$k$-hop subgraph of a node $v$ in $G$\\\hline
		% PATHS, DISTANCES, GIRTH
		\exists p_G(u,w,k)&:=&$\exists(\{u,v_1\}, \{v_1,v_2\}, \dots, \{v_{k-1}, w\})$\newline $\in E(G)^k$
		&Existence of $k$-hop path from $u$ to $w$ in $G$\\
		d_G(u,w)&:=&$\min \{k \mid \exists p_G(u,w,k)\}$&Distance between node $u$ and node $w$ in $G$\\
		g_G&:=&$\inf \{k > 0\mid \exists v\in V(G), p_G(v,v,k)\}$&Girth of $G$ (length of its shortest cycle)\\\hline
	\end{tabular}
	\caption[General Notation]{General notation used in this work (subscript or parenthesized $G$ may be omitted when clear from context).}\label{tab:notation}
	\renewcommand{\arraystretch}{1.0}
\end{table*}
% NOTATION TABLE END

\section{Preliminaries}\label{sec:prelim}
The basic graph theoretic notation used in this work is summarized in Table~\ref{tab:notation} (p.~\pageref{tab:notation}); 
all our graphs are finite and simple.

We operate in the LOCAL model of computation, our presentation of which follows Peleg \cite{peleg2000}.
The LOCAL model is a stylized model of network communication designed to capture the locality of distributed computing. 
In this model, a communication network is abstracted as a simple graph $G=(V,E)$, 
with nodes representing network devices and edges representing bidirectional communication links.  
To eliminate all computability restrictions that are not related to locality, the model makes the following assumptions:

\begin{itemize}[label=--, noitemsep]
	\item Network devices have unique identifiers and unlimited computation power.
	\item Communication links have infinite capacity.
	\item Computation and communication takes place in synchronous rounds.
	\item All network devices start computing and communicating at the same time. 
	\item There are no faults. 
\end{itemize}

In each round, a node can 

\begin{enumerate}[noitemsep]
	\item perform an internal computation based on its currently available information, 
	\item send messages to its neighbors, 
	\item receive all messages sent by its neighbors, and
	\item potentially terminate with some local output.
\end{enumerate}

A $k$-round distributed algorithm in the LOCAL model can be interpreted as a function from $k$-hop subgraphs to local outputs:
\begin{definition}[$k$-round distributed algorithm]\label{def:distributed-algorithm}
	A \emph{$k$-round distributed algorithm} $\mathcal{A}$ is a function mapping $k$-hop subgraphs $G^k(v)$, 
	labeled by unique node identifiers (and potentially some local input), 
	to local outputs.
	For a randomized algorithm, nodes are also labeled by (sufficiently long) strings of independent, unbiased random bits.
\end{definition}

We assume that at the start of the algorithm, nodes do \emph{not} know their incident edges.
Assuming that nodes \emph{do} know these edges in the beginning weakens the lower bound by one round only, not affecting the asymptotics.

The key concept used to show that a graph problem is difficult to solve (exactly or approximately) for a $k$-round distributed algorithm in the LOCAL model is the \emph{$k$-hop indistinguishability} of nodes' neighborhoods.%
\footnote{%
	LOCAL algorithms may also make use of nodes' local inputs and identifiers.
	However, so far, the KMW construction has been applied to tasks without additional inputs only.
	For such tasks,
	assigning node identifiers uniformly at random translates the stated purely topological notion of indistinguishability to identical \emph{distributions} of $k$-hop subgraphs labeled by identifiers.
}
\begin{definition}[$k$-hop indistinguishability in $G$]\label{def:indistinguishability} 
	Two~nodes $v$ and $w$ in $G$ are indistinguishable to a $k$-round distributed algorithm (\emph{$k$-hop indistinguishable}) 
	if and only if there exists an isomorphism $\phi: V(G^k(v))\rightarrow V(G^k(w))$ with $\phi(v) = w$. 
	%their $k$-hop subgraphs (labeled by inputs) are isomorphic %, i.e., $G^k(v) \cong G^k(w)$, and 
\end{definition}
Accordingly, our goal in Section~\ref{macro:indistinguishability} will be to establish that the nodes in $C_0$ and $C_1$ are $k$-hop indistinguishable.
% !TeX spellcheck = en_US
% !TeX root = cluster_trees_sosa_format.tex
\section{Cluster Trees}\label{sec:macro_level}
Cluster Trees (CTs) are the main concept in the derivation of the KMW bound.
For $k\in\mathbb{N}$, 
the \emph{Cluster Tree skeleton} $CT_k$ describes sufficient constraints on the topology of graphs $G_k$ (beyond high girth, which ensures that the $k$-hop neighborhoods of all nodes are trees) to enable the indistinguishability proof in Section~\ref{macro:indistinguishability}.
\begin{definition}[Cluster Trees]\label{def:ct-skeleton}
	For $k\in\mathbb{N}$, a cluster tree skeleton (\emph{CT skeleton}) is a tree $CT_k = (\mathcal{C}_k, \mathcal{A}_k)$,
	rooted at $C_0\in \mathcal{C}_k$,
	which describes constraints imposed on a corresponding \emph{CT graph} $G_k$.
	\begin{itemize}[noitemsep,label=--]
	  \item For each \emph{cluster}%
	  \footnote{``Cluster'' here is used in the sense of ``associated set of nodes,''
	  referring to its role in $G_k$.
	  We use the term to refer to both nodes in $CT_k$ and the corresponding independent sets in $G_k$.}
	  $C\in\mathcal{C}_k$,
	  there is a corresponding independent set in $G_k$.
	  \item An edge connecting clusters $C$ and $C'$ in $CT_k$ is labeled with $\{(C,x),(C',y)\}$ for $x,y\in \mathbb{N}$.
	  This expresses the constraint that in $G_k$, 
	  $C$ and $C'$ must be connected as a biregular bipartite graph,
	  where each node in $C$ has $x$ neighbors in $C'$ and each node in $C'$ has $y$ neighbors in $C$.
	  We say that $C$ $(C')$ is connected to $C'$ $(C)$ via \emph{outgoing label} $x$ $(y)$.
	  \item $G_k$ contains no further nodes or edges.
	\end{itemize}
\end{definition}
Note that $CT_k$ imposes many constraints on $G_k$.
Choosing the size of $C_0$ determines the number of nodes and edges in $G_k$,
and node degrees are fully determined by $CT_k$ as well.
However, there is substantial freedom regarding how to realize the connections between adjacent clusters.
As mentioned earlier,
this permits leveraging graph lifts to obtain cluster tree graphs $G_k$ of high girth in Step~\ref{step:existence} of the KMW construction.

\subsection{Construction of Cluster Tree Skeletons}\label{macro:definition}

Definition~\ref{def:ct-skeleton} (p.~\pageref{def:ct-skeleton}) does not detail the structure of $CT_k$. 
To specify this structure, we use the following terminology.

\begin{figure*}[hbt!]
	\centering
	\begin{subfigure}[b]{0.95\textwidth}
		\centering
		% !TeX spellcheck = en_US
% !TeX root = ../thesis.tex
\begin{tikzpicture}[shorten >=1pt,node distance=2cm and 2cm,on grid,auto] 
\node[state] (c00) [align=center,line width=0.5mm,fill=gray!30,rectangle] {$C_0$};
\node[state] (c01) [align=center,fill=gray!30,line width=0.5mm,rectangle, right=of c00] {$C_1$};
\node[state] (c02) [align=center,fill=gray!30, rectangle, left=of c00] {$C_2$};
\node[state] (c03) [align=center,fill=gray!30, rectangle, right=of c01] {$C_3$};
\node[state] (c11) [align=center,fill=black, below=of c00] {};
\node[state] (c12) [align=center,fill=black, below=of c01] {};
\node[state] (c22) [align=center,fill=green!30!darkgray, left=of c02] {};
\node[state] (c21) [align=center,fill=green!30!darkgray, below=of c02] {};
\node[state] (c32) [align=center,fill=green!30!darkgray, below=of c03] {};
\node[state] (c31) [align=center,fill=green!30!darkgray, right=of c03] {};
\path
	(c00) edge [pos=0.25, align=left,below] node {$0$} (c01)
	(c01) edge [pos=0.25, align=left,below] node {$1$} (c00)
	(c00) edge [pos=0.25, align=left,below] node {$1$} (c02)
	(c02) edge [pos=0.25, align=left,below] node {$2$} (c00)
	(c01) edge [pos=0.25, align=left,below] node {$0$} (c03)
	(c03) edge [pos=0.25, align=left,below] node {$1$} (c01)
	(c02) edge [pos=0.25, align=left,below] node {$1$} (c22)
	(c22) edge [pos=0.25, align=left,below] node {$2$} (c02)
	(c03) edge [pos=0.25, align=left,below] node {$0$} (c31)
	(c31) edge [pos=0.25, align=left,below] node {$1$} (c03)
	(c02) edge [pos=0.25, align=left,left] node {$0$} (c21)
	(c21) edge [pos=0.25, align=left,left] node {$1$} (c02)
	(c02) edge [pos=0.25, align=left,left] node {$0$} (c21)
	(c21) edge [pos=0.25, align=left,left] node {$1$} (c02)
	(c00) edge [pos=0.25, align=left,left] node {$2$} (c11)
	(c11) edge [pos=0.25, align=left,left] node {$3$} (c00)
	(c01) edge [pos=0.25, align=left,left] node {$2$} (c12)
	(c12) edge [pos=0.25, align=left,left] node {$3$} (c01)
	(c03) edge [pos=0.25, align=left,left] node {$2$} (c32)
	(c32) edge [pos=0.25, align=left,left] node {$3$} (c03)
	;
\end{tikzpicture}
		\caption{Flat representation of $CT_2$}
	\end{subfigure}
	\begin{subfigure}[b]{0.95\textwidth}
		\centering
		% !TeX spellcheck = en_US
% !TeX root = ../cluster_trees.tex
\begin{tikzpicture}[shorten >=0pt,node distance=2cm and 2cm,on grid,auto] 
\node[state] (c00) [align=center,fill=gray!30,line width=0.5mm,rectangle] {$C_0$};
\node[state] (c01) [align=center,fill=gray!30,line width=0.5mm,rectangle, right=of c00] {$C_1$};
\node[state] (c02) [align=center,fill=gray!30, rectangle, left=of c00] {$C_2$};
\node[state] (c03) [align=center,fill=gray!30, rectangle, right=of c01] {$C_3$};
\node[state] (c11) [align=center,fill=gray!30, rectangle, below=of c00] {};
\node[state] (c12) [align=center,fill=gray!30, rectangle, below=of c01] {};
\node[state] (c22) [align=center,fill=gray!30, rectangle, left=of c02] {};
\node[state] (c21) [align=center,fill=gray!30, rectangle, below=of c02] {};
\node[state] (c32) [align=center,fill=gray!30, rectangle, below=of c03] {};
\node[state] (c31) [align=center,fill=gray!30, rectangle, right=of c03] {};
\node[state] (c41) [align=center,fill=green!30!darkgray, right=0.74cm of c31,scale=0.5] {};
\node[state] (c42) [align=center,fill=green!30!darkgray, above=0.74cm of c31,scale=0.5] {};
\node[state] (c43) [align=center,fill=green!30!darkgray, below=0.74cm of c31,scale=0.5] {};
\node[state] (c44) [align=center,fill=green!30!darkgray, left=0.74cm of c22,scale=0.5] {};
\node[state] (c45) [align=center,fill=green!30!darkgray, above=0.74cm of c22,scale=0.5] {};
\node[state] (c46) [align=center,fill=green!30!darkgray, below=0.74cm of c22,scale=0.5] {};
\node[state] (c87) [align=center,fill=green!30!darkgray, left=0.74cm of c32,scale=0.5] {};
\node[state] (c88) [align=center,fill=green!30!darkgray, right=0.74cm of c32,scale=0.5] {};
\node[state] (c89) [align=center,fill=green!30!darkgray, below=0.74cm of c32,scale=0.5] {};
\node[state] (c57) [align=center,fill=green!30!darkgray, left=0.74cm of c21,scale=0.5] {};
\node[state] (c58) [align=center,fill=green!30!darkgray, right=0.74cm of c21,scale=0.5] {};
\node[state] (c59) [align=center,fill=green!30!darkgray, below=0.74cm of c21,scale=0.5] {};
\node[state] (c67) [align=center,fill=green!30!darkgray, left=0.74cm of c11,scale=0.5] {};
\node[state] (c68) [align=center,fill=green!30!darkgray, right=0.74cm of c11,scale=0.5] {};
\node[state] (c69) [align=center,fill=green!30!darkgray, below=0.74cm of c11,scale=0.5] {};
\node[state] (c77) [align=center,fill=green!30!darkgray, left=0.74cm of c12,scale=0.5] {};
\node[state] (c78) [align=center,fill=green!30!darkgray, right=0.74cm of c12,scale=0.5] {};
\node[state] (c79) [align=center,fill=green!30!darkgray, below=0.74cm of c12,scale=0.5] {};
\node[state] (c47) [align=center,fill=black, above=0.74cm of c00,scale=0.5] {};
\node[state] (c48) [align=center,fill=black, above=0.74cm of c01,scale=0.5] {};
\node[state] (c49) [align=center,fill=black, above=0.74cm of c02,scale=0.5] {};
\node[state] (c50) [align=center,fill=black, above=0.74cm of c03,scale=0.5] {};
\path
	(c00) edge [pos=0.25, align=left,below] node {$0$} (c01)
	(c01) edge [pos=0.25, align=left,below] node {$1$} (c00)
	(c00) edge [pos=0.25, align=left,below] node {$1$} (c02)
	(c02) edge [pos=0.25, align=left,below] node {$2$} (c00)
	(c01) edge [pos=0.25, align=left,below] node {$0$} (c03)
	(c03) edge [pos=0.25, align=left,below] node {$1$} (c01)
	(c02) edge [pos=0.25, align=left,below] node {$1$} (c22)
	(c22) edge [pos=0.25, align=left,below] node {$2$} (c02)
	(c03) edge [pos=0.25, align=left,below] node {$0$} (c31)
	(c31) edge [pos=0.25, align=left,below] node {$1$} (c03)
	(c02) edge [pos=0.25, align=left,left] node {$0$} (c21)
	(c21) edge [pos=0.25, align=left,left] node {$1$} (c02)
	(c02) edge [pos=0.25, align=left,left] node {$0$} (c21)
	(c21) edge [pos=0.25, align=left,left] node {$1$} (c02)
	(c00) edge [pos=0.25, align=left,left] node {$2$} (c11)
	(c11) edge [pos=0.25, align=left,left] node {$3$} (c00)
	(c01) edge [pos=0.25, align=left,left] node {$2$} (c12)
	(c12) edge [pos=0.25, align=left,left] node {$3$} (c01)
	(c03) edge [pos=0.25, align=left,left] node {$2$} (c32)
	(c32) edge [pos=0.25, align=left,left] node {$3$} (c03)
	% BLACK
	(c47) edge [] node {} (c00)
	(c48) edge [] node {} (c01)
	(c49) edge [] node {} (c02)
	(c50) edge [] node {} (c03)
	% GREEN RIGHT
	(c41) edge [] node {} (c31)
	(c42) edge [] node {} (c31)
	(c43) edge [] node {} (c31)
	% GREEN LEFT
	(c44) edge [] node {} (c22)
	(c45) edge [] node {} (c22)
	(c46) edge [] node {} (c22)
	% GREEN LOWER RIGHT
	(c87) edge [] node {} (c32)
	(c88) edge [] node {} (c32)
	(c89) edge [] node {} (c32)
	% GREEN LOWER LEFT
	(c57) edge [] node {} (c21)
	(c58) edge [] node {} (c21)
	(c59) edge [] node {} (c21)
	% GREEN LOWER MID LEFT
	(c67) edge [] node {} (c11)
	(c68) edge [] node {} (c11)
	(c69) edge [] node {} (c11)
	% GREEN LOWER MID RIGHT
	(c77) edge [] node {} (c12)
	(c78) edge [] node {} (c12)
	(c79) edge [] node {} (c12)
	;
\end{tikzpicture}
		\caption{Flat representation of $CT_3$}
	\end{subfigure}
	\caption[Representations of $CT_2$ and $CT_3$]{Representations of $CT_2$ and $CT_3$, colored by cluster types; grey: internal, black: first growth rule, green: second growth rule.}\label{fig:ct2}
\end{figure*}
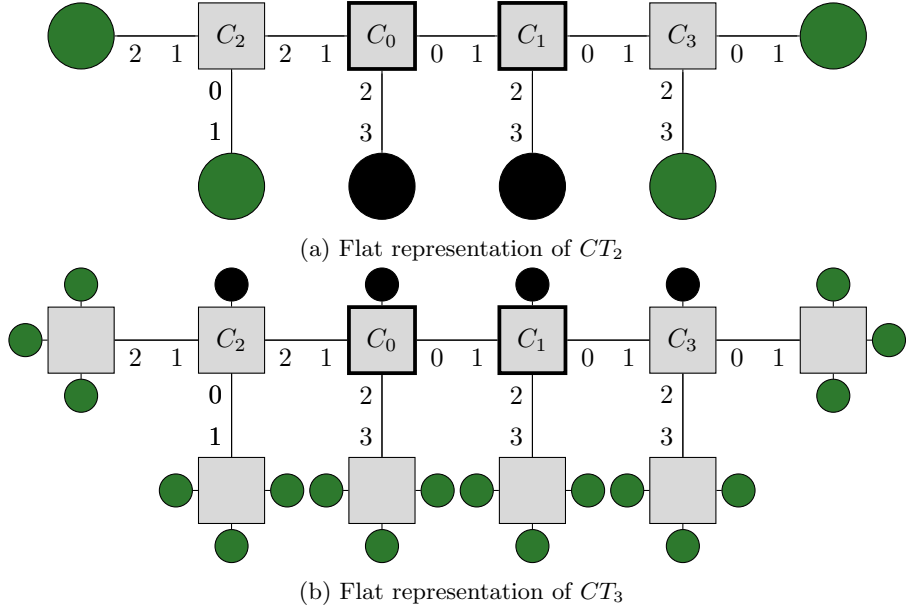

\begin{definition}[Cluster position, level, and parent]\label{def:cluster-position}
	A leaf cluster $C$ in $CT_k$ has position \emph{leaf,}
	while internal clusters have position \emph{internal}.
	The \emph{level} of $C$ is its distance to $C_0$.
	The \emph{parent cluster} of $C\neq C_0$ is its parent in $CT_k$.
\end{definition}

Given $\beta \geq 2(k+1)$,
% \footnote{%
% 	In \cite{kuhn2004,kuhn2016}, $\beta$ is referred to as $\delta$. 
% 	We avoid $\delta$ because it is often used to denote the degree of a node.
% }
the structure of $CT_k$ is now defined inductively.
The base case of the construction is $CT_1$.

\begin{definition}[Base case $CT_1$]\label{def:ct-trunk}
	$CT_1 = (\mathcal{C}_1,\mathcal{A}_1)$, where $\mathcal{C}_1 := \{C_0,C_1,C_2,C_3\}$ and
	\begin{align*}
	\mathcal{A}_1 := &\{\{(C_0,\beta^0),(C_1, \beta^1)\}, \{(C_0,\beta^1),(C_2, \beta^2)\},\\ 
	&\{(C_1,\beta^0),(C_3, \beta^1)\}\}.
	\end{align*}
\end{definition}  

Based on $CT_{k-1}$, for $k\geq 2$, $CT_k$ is grown as follows.
\begin{definition}[Growth rules for $CT_k$ given $CT_{k-1}$]\label{def:ct-growth}%\ \\[-3ex]
	\begin{enumerate}[noitemsep,topsep=0em]
		\item To each internal cluster $C$ in $CT_{k-1}$, attach a new neighboring cluster $C'$ via an edge $\{ (C,\beta^k), (C',\beta^{k+1})\}$.\label{enum:ctgrowth:rule1}
		\item To each leaf cluster $C$ in $CT_{k-1}$ that is connected to its parent cluster via outgoing label $\beta^q$, 
		add a total of $k$ neighboring clusters: 
		one cluster $C'$ with an edge $\{ (C,\beta^p), (C',\beta^{p+1})\}$ for each $p\in [k]_0\setminus \{q\}$.\label{enum:ctgrowth:rule2}
	\end{enumerate}
\end{definition}

Note that with this definition, $CT_k$ is a regular tree but a CT graph $G_k$ is not regular.
Figure~\ref{fig:ct1} (p.~\pageref{fig:ct1}) shows $CT_1$ in its hierarchical and flat representations, 
and flat representations of $CT_2$ and $CT_3$ are given in Figure~\ref{fig:ct2} (p.~\pageref{fig:ct2}) to illustrate the growth process.\footnote{% 
	%For readability, 
	The labels of the edges connecting leaf clusters in $CT_3$ to the rest of $CT_3$ are omitted in the drawing. 
	%By construction, 
	They are such that every internal cluster has outgoing labels $\{\beta^i\mid i\in [3]_0\}$, 
	and if a leaf cluster $C$ is connected to an internal cluster $C'$ with label $\beta^i$ outgoing from $C'$, then $C$ has outgoing label $\beta^{i+1}$.
}
In all figures, we write $i$ for outgoing label $\beta^i$ to reduce visual clutter,
and in the flat representations, 
outgoing labels are depicted like port numbers,
i.e., the edge label corresponding to $C$ is depicted next to $C$.

\subsection{Indistinguishability given High Girth}\label{macro:indistinguishability}

As observed by Kuhn et al.\ \cite{kuhn2004,kuhn2016}, 
showing $k$-hop indistinguishability becomes easier when the nodes' $k$-hop subgraphs %(in \cite{kuhn2016}: $k$-hop neighborhoods) 
are trees, i.e., the girth is at least $2k+1$. 
Notably, in a CT graph $G_k$ with $g\geq 2k+1$, 
the topology of a node's $k$-hop subgraph is determined entirely by the structure of the skeleton $CT_k$. 
Hence, we will be able to establish the following theorem without knowing the details of $G_k$.

% !TeX spellcheck = en_US
% !TeX root = ../SOSA-in-camera/cluster_trees_sosa_format.tex
\begin{algorithm2e}[p!]
	\DontPrintSemicolon % Some LaTeX compilers require you to use \dontprintsemicolon instead
%	\KwIn{Graph $G_k$, nodes $v_0\in C_0$, $v_1\in C_1$, parameter $k$}
%	\KwOut{Isomorphism $\phi: G_k^k(v_0)\rightarrow G_k^k(v_1)$}
	\SetKwFunction{FMain}{{\sc FindIsomorphism}}
	\SetKwProg{Fn}{Function}{:}{}
	\Fn{\FMain{$G_k$, $k$, $v_0$, $v_1$}}{
		\KwIn{A CT graph $G_k$ with $g\geq 2k+1$, $k\in \mathbb{N}$, $v_0\in C_0$, $v_1\in C_1$}
		\KwOut{Isomorphism $\phi: V(G_k^k(v_0))\rightarrow V(G_k^k(v_1))$}
		$\phi \gets$ empty map\;
		$\phi(v_0) \gets v_1$\;
		%$T_0 \gets \{v_0\}$\;
		%$T_1 \gets \{v_1\}$\;
		\textsc{Walk($v_0$, $v_1$, $\bot$, $k$)}\;
		\Return{$\phi$}\;
	}
	\BlankLine
	\SetKwFunction{FWalk}{{\sc Walk}}
	\SetKwProg{Pn}{Function}{:}{{\sc Walk}}
	\Pn{\FWalk{$v$, $w$, $prev$, $depth$}}{
		\If{$depth = 0$}{
			\KwRet\;
		}
		$N_v\gets$ empty list of length $k+2$\;\label{alg:iso:neighborhood-start}
		$N_w\gets$ empty list of length $k+2$\;
		\For{$i \gets 0$ \emph{\textbf{to}} $k+1$} {
			\tcp{if edge $\beta^i$ does not exist, $N_v[i]$ (resp.\ $N_w[i]$) is empty}
			$N_v[i]\gets$ list of new nodes $v'\neq prev$ found using edge $\beta^i$ from $v$\;
			$N_w[i]\gets$ list of new nodes $w'\neq \phi(prev)$ found using edge $\beta^i$ from $w$\;\label{alg:iso:neighborhood-end}
		}
		%$T_0\gets T_0\cup N_v$\;
		%$T_1\gets T_0\cup N_w$\;
		\textsc{Map($N_v$, $N_w$)}\;\label{alg:iso:call-map}
		\For{$i \gets 0$ \emph{\textbf{to}} $k+1$} {
			\For{$v'$ \emph{\textbf{in}} $N_v[i]$}{
				\textsc{Walk($v'$, $\phi(v')$, $v$, $depth-1$)}
			}
		}
	}
	\BlankLine
	\SetKwFunction{FPair}{{\sc Map}}
	\SetKwProg{FPn}{Function}{:}{{\sc Map}}
	\FPn{\FPair{$N_v$, $N_w$}}{
		\For{$i \gets 0$ \emph{\textbf{to}} $k+1$\label{alg:iso:map-outer-for-start}} {
			\tcp{$zip(\cdot,\cdot)$ yields element tuples until the shorter list ends}
			\For{$v',w'$ \emph{\textbf{in}} $zip(N_v[i], N_w[i])$}{
				$\phi(v')\gets w'$\;\label{alg:iso:map-inner-for-end}
			}
		}
	\tcp{$len(\cdot)$ returns the length of a list}
	\If{$\exists~i\in [k+1]_0:  len(N_v[i]) \neq len(N_w[i])$\label{alg:iso:special-start}}{
	\tcp{we will prove that $len(L_v[i]) = len(L_w[i])$ for $i\in[k+1]_0\setminus \{i_v,i_w\}$}
	$i_v \gets i \in [k+1]_0: len(N_v[i]) = len(N_w[i]) + 1$\;
	$i_w \gets i \in [k+1]_0: len(N_v[i]) + 1 = len(N_w[i])$\;
	\tcp{$L[i][-1]$ retrieves the last element from list $i$ in~$L$}
	%\tcp{$L[-1]$ retrieves the last element in $L$}
	$\phi(N_v[i_v][-1]) \gets N_w[i_w][-1]$\;\label{alg:iso:special-end}
}
	
%	\For{$i \gets 0$ \emph{\textbf{to}} $k+1$} {
%		\uIf{$len(N_v[i]) = len(N_w[i]) + 1$\label{alg:iso:special-1-start}}{
%			$\phi(N_v[i][-1])\gets N_w[i+2][-1]$\label{alg:iso:special-1-end}
%		}
%		\ElseIf{$i$ \emph{is odd \textbf{and}} $len(N_v[i]) = len(N_w[i]) - 1$\label{alg:iso:special-2-start}}{
%			$\phi(N_v[i-2][-1])\gets N_w[i][-1]$\label{alg:iso:special-2-end} 
%		}
%	}
	}
	\caption{Find an isomorphism\newline $\phi: V(G_k^k(v_0))\rightarrow V(G_k^k(v_1))$}\label{alg:isomorphism}
\end{algorithm2e}

\begin{theorem}[$k$-hop indistinguishability of nodes in $C_0$ and $C_1$]\label{thm:indistinguishability}
	Let $G_k$ be a CT graph of girth $g\geq 2k+1$. 
	Then any $v_0\in C_0$ and $v_1\in C_1$ are $k$-hop indistinguishable.
\end{theorem}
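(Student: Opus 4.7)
The plan is to show that Algorithm~\ref{alg:isomorphism} computes a valid isomorphism. Since $g \geq 2k+1$, both $G_k^k(v_0)$ and $G_k^k(v_1)$ are trees, and the synchronized BFS inside \textsc{Walk} preserves the parent-child relation by construction. Hence, it suffices to show that every invocation of \textsc{Map} pairs each node of $N_v$ with a distinct node of $N_w$ and conversely, so that $\phi$ is a bijection between the two BFS trees. This reduces to a single invariant on the neighbor lists, which I plan to prove by induction on the recursion depth from $k$ down to $1$.

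The invariant I propose is the following statement about every call \textsc{Walk}$(v, w, prev, depth)$ generated by the algorithm: either \emph{(A)}~$len(N_v[i]) = len(N_w[i])$ for all $i \in [k+1]_0$ (\emph{aligned}), or \emph{(B)}~there is a unique pair $i_v \neq i_w$ with $len(N_v[i_v]) = len(N_w[i_v]) + 1$, $len(N_v[i_w]) + 1 = len(N_w[i_w])$, and equality at every other index (\emph{mismatched}). Under this invariant, \textsc{Map} exhausts both sides: in case (A) the $zip$-loop pairs every element once, and in case (B) the special-case block further pairs the one leftover on each side. The base call \textsc{Walk}$(v_0, v_1, \bot, k)$ lies in case (A): $C_0$ and $C_1$ are both internal in $CT_k$, so they share the outgoing label set $\{\beta^0, \ldots, \beta^k\}$, and with $prev = \bot$ no neighbor is excluded, yielding $len(N_{v_0}[i]) = \beta^i = len(N_{v_1}[i])$ for every~$i$.

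For the inductive step I plan to exploit two structural consequences of Definitions~\ref{def:ct-trunk} and~\ref{def:ct-growth}: every edge of $CT_k$ has labels of the form $(\beta^p, \beta^{p+1})$, and every internal (resp.\ leaf) cluster of $CT_k$ has outgoing label set $\{\beta^0, \ldots, \beta^k\}$ (resp.\ a singleton $\{\beta^{p+1}\}$). In an aligned call, each matched $\beta^i$-pair produces a sub-pair whose return label on both sides is $\beta^{i+1}$; if the two target clusters are both internal, or both leaves, the invariant reproduces in form (A). In a mismatched call, the aligned components behave likewise, while the leftover pair $(v'',w'')$ has return labels $\beta^{i_v+1}$ and $\beta^{i_w+1}$ and typically lands in two distinct clusters of $CT_k$. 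The \emph{main obstacle} is to handle the two remaining scenarios — an aligned $\beta^i$-match in which one target cluster is internal while the other is a leaf (which must be ruled out entirely), and the leftover pair of case (B) (whose new outgoing-label multisets must differ in at most one transposition) — by verifying that the growth rules of Definition~\ref{def:ct-growth} push any asymmetry exactly one level further from $\{C_0, C_1\}$ per recursion step, so that the symmetric difference of outgoing-label multisets across corresponding clusters never exceeds two. This propagation argument, which tracks precisely which outgoing label is ``missing'' or ``extra'' on each side as the walk descends, is the technical heart of the proof and exactly where the specific design of $CT_k$ does its work.
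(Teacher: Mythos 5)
Your proposal sets up the argument correctly — reduce to the claim that every \textsc{Map} call succeeds, classify calls as \emph{aligned} (A) or singly \emph{mismatched} (B), and close the loop by induction — and this is the same scaffolding the paper uses (Corollary~\ref{cor:correctness-sufficient} plays the role of your (A)/(B) dichotomy, via Lemma~\ref{lem:position-history}). However, your proposed invariant is too weak to carry the induction, and the place where you write that the ``propagation argument \ldots is the technical heart'' is precisely where a concrete idea is missing, not just elbow grease.

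The problem is that your invariant talks only about \emph{list lengths}, but a \textsc{Map}-produced $\phi$-pair can have identical list-length profiles and still be unable to preserve the invariant one level down, because list lengths do not determine which \emph{layer} of the skeleton a cluster was added in. Concretely, two internal clusters $C(v)$ and $C(w)$ whose nodes have matching histories and hence aligned lists may have been added in different iterations $j_v\neq j_w$ of Definition~\ref{def:ct-growth}; descending via the same outgoing label $\beta^p$ can then land in a cluster that is internal on the $v$-side but a leaf on the $w$-side (e.g.\ $j_v = 1$, $j_w = k-1$, $p = k-1$), which your invariant would classify as still ``aligned'' at the parent while the child lists become wildly different — not a single transposition. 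You flag this scenario yourself (``must be ruled out entirely'') but offer no mechanism to rule it out; neighbor-list arithmetic alone cannot, because it is blind to the iteration at which a cluster appeared. The paper's invariant (Definition~\ref{def:invariant}) is deliberately stronger: it constrains not only histories but also \emph{membership in the layer $CT_i$} — either both clusters lie in $CT_d$ with bounded history (case~1), or both were added in the \emph{same} later iteration $i$ with matching attachment labels (case~2). That extra structural information is exactly what guarantees that corresponding children have matching positions and compatible histories, which is what your form (A)/(B) dichotomy needs but cannot supply on its own. Supplying a strengthened invariant of this kind is the key idea your proposal lacks; without it, the inductive step does not go through.
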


By Definition~\ref{def:indistinguishability} (p.~\pageref{def:indistinguishability}), 
$v_0\in C_0$ and $v_1\in C_1$ are $k$-hop indistinguishable 
if and only if there exists an isomorphism $\phi: V(G_k^k(v_0))~\rightarrow~V(G_k^k(v_1))$ with $\phi(v_0) = v_1$.
We prove the theorem constructively by showing the correctness of Algorithm~\ref{alg:isomorphism} (p.~\pageref{alg:isomorphism}), which purports to find such an isomorphism.

Algorithm~\ref{alg:isomorphism} (p.~\pageref{alg:isomorphism}) implements a \emph{coupled depth-first search} on the $k$-hop subgraphs of $v_0\in C_0$ and $v_1\in C_1$.
Its main function, \textsc{FindIsomorphism($G_k, k, v_0, v_1$)}, receives a CT graph $G_k$ with high girth, 
along with the parameter $k$, %(both of which we assume to be accessible by the functions we call internally, alongside the mapping from nodes to their clusters), 
and one node from each of $C_0$ and $C_1$ as input, and it outputs the $\phi$ we are looking for.
To obtain $\phi$, \textsc{FindIsomorphism} maps $v_0$ to $v_1$ and then calls the function \textsc{Walk($v_0$, $v_1$, $\bot$, $k$)} before it returns $\phi$.
The \textsc{Walk} function extends $\phi$ by mapping the \emph{newly discovered} nodes in the neighborhoods of its first two input parameters ($v$ and $w:=\phi(v)$, initially: $v_0$ and $v_1$) to each other with the help of the function \textsc{Map}. 
The third parameter of \textsc{Walk} ($prev$, initially: $\bot$) ensures that we only define $\phi$ for newly discovered nodes, 
while the fourth parameter ($depth$, initially: $k$) controls termination when \textsc{Walk} calls itself recursively on the newly discovered neighbors (and the newly discovered neighbors of these neighbors, and so on) until the entire $k$-hop subgraph of $v_0$ has been visited. 

The tricky part now is to ascertain that the interplay between the functions \textsc{Walk} and \textsc{Map} makes $\phi$ a bijection from $V(G_k^k(v_0))$ to $V(G_k^k(v_1))$,
i.e., nodes that are paired up always have the same degree.
Here, the representation of node neighborhoods used by the \textsc{Walk} function is key, 
which is based on the insight that the set of nodes neighboring $v$ (resp. $w$) can be partitioned by the outgoing labels in $CT_k$ through which neighboring nodes are discovered from $v$ ($w$).
Since these labels lie in $\{\beta^i \mid i\in[k+1]_0\}$, 
\textsc{Walk} represents the neighborhood of $v$ ($w$) as a list $N_v$ ($N_w$) of $k+2$ (possibly empty) lists (Algorithm~\ref{alg:isomorphism}, l.~\ref{alg:iso:neighborhood-start}--\ref{alg:iso:neighborhood-end}, p.~\pageref{alg:isomorphism}). 
The list at index $i$ holds all \emph{previously undiscovered} nodes 
(we require $v'\neq prev$ and $w'\neq \phi(prev)$) connected to $v$ ($w$) via $v$'s ($w$'s) outgoing label $\beta^i$, 
in arbitrary order.

The \textsc{Walk} function passes $N_v$ and $N_w$ to the function \textsc{Map} (Algorithm~\ref{alg:isomorphism}, l.~\ref{alg:iso:call-map}, p.~\pageref{alg:isomorphism}), 
which sets $\phi(N_v[i][j]) := N_w[i][j]$ where possible (Algorithm~\ref{alg:isomorphism}, l.~\ref{alg:iso:map-outer-for-start}--\ref{alg:iso:map-inner-for-end}, p.~\pageref{alg:isomorphism}). 
It then treats the special case that some nodes in $N_v$ and $N_w$ remain unmatched (Algorithm~\ref{alg:isomorphism}, l.~\ref{alg:iso:special-start}--\ref{alg:iso:special-end}, p.~\pageref{alg:isomorphism}). 
By construction, without this special case, 
the $\phi$ returned by \textsc{FindIsomorphism} is already an isomorphism between the subgraphs of $G_k^k(v_0)$ and $G_k^k(v_1)$ induced by the nodes of the domain for which $\phi$ is defined (and their images under $\phi$).
However, we still need to show that our special case suffices to extend this restricted isomorphism to a full isomorphism between $G_k^k(v_0)$ and $G_k^k(v_1)$. 
To facilitate our reasoning, we introduce \emph{cluster identities}:

\begin{definition}[Cluster identity $C(v)$]\label{def:cluster-identity}
	Given a node $v$ in a CT graph $G_k$, we refer to its cluster in $CT_k$ as its \emph{cluster identity}, denoted as $C(v)$.
	For example, for $v_0\in C_0$ and $v_1\in C_1$, we have $C(v_0) = C_0$, $C(v_1) = C_1$, and $C(v_0)\neq C(v_1)$. 
\end{definition}

Our argument will crucially rely on the concepts of \emph{node position} and \emph{node history}:

\begin{definition}[Node position]
	For $i\in\{0,1\}$, the \emph{position} of a node $v$ in $G_k^k(v_i)$ is the position of its cluster $C(v)$ in the CT skeleton (\emph{internal} or \emph{leaf}).
\end{definition}

\begin{definition}[Node history]
	For $i\in\{0,1\}$, the \emph{history} of a node $v\neq v_i$ in $G_k^k(v_i)$ is the outgoing label of the edge connecting $C(v)$ to $C(prev)$, i.e., $\beta^x$ if the corresponding edge is $\{(C(v),\beta^x),(C(prev),\beta^{x'})\}$.
\end{definition}

We begin with a simple observation:
\begin{lemma}[Variables determining node neighborhoods]\label{lem:position-history}
	For $v$ in $G_k^k(v_0)\setminus \{v_0\}$,
	let $w:=\phi(v)$.
	When \textsc{Map} is called with parameters $N_v$ and $N_w$ (Algorithm~\ref{alg:isomorphism} l.~\ref{alg:iso:call-map}, p.~\pageref{alg:isomorphism}), 
	the numbers of nodes in $N_v[i]$ and $N_w[i]$ for $i\in [k+1]_0$ are uniquely determined
	the \emph{position} and the \emph{history} of $v$ and $w$.
%	\begin{enumerate}[noitemsep]
%		\item \textbf{position:} the position of the clusters $C(v)$ and $C(w)$ in the CT skeleton (\emph{internal} or \emph{leaf}), 
%		and
%		\item \textbf{history:} the outgoing labels of the edges connecting $C(v)$ to $C(prev)$ and $C(w)$ to $C(\phi(prev))$, respectively,
%		i.e., $\beta^x$ and $\beta^y$ if the corresponding edges are $\{(C(v),\beta^x),(C(prev),\beta^{x'})\}$ and $\{(C(w),\beta^y),(C(\phi(prev)),\beta^{y'})\}$ for some $x',y'\in [k+1]_0$.
%	\end{enumerate}
	%In particular, 
	If $v$ and $w$ agree on \emph{position} and \emph{history}, 
	$len(N_v[i])=len(N_w[i])$ for all $i\in [k+1]_0$.
	If $v$ and $w$ agree on position \emph{internal} but disagree on \emph{history}, 
	where $v$ has history $\beta^x$ and $w$ has history $\beta^y$, 
	we have $len(N_v[i])= len(N_w[i])$ for all $i\in [k+1]_0\setminus \{x,y\}$, $len(N_v[x])= len(N_w[x])-1$, and $len(N_v[y])-1= len(N_w[y])$.
\end{lemma}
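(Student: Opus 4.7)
The plan is to reduce the lemma to a single structural invariant of $CT_k$, which I would prove by induction on $k$: \emph{every internal cluster of $CT_k$ has outgoing labels exactly $\{\beta^0,\beta^1,\ldots,\beta^k\}$ (one per adjacent cluster in $CT_k$), and every leaf cluster of $CT_k$ has exactly one outgoing label, which lies in $\{\beta^1,\ldots,\beta^{k+1}\}$}. The base case $CT_1$ is immediate from Definition~\ref{def:ct-trunk}. For the inductive step, a cluster $C$ that was internal in $CT_{k-1}$ already carries labels $\{\beta^0,\ldots,\beta^{k-1}\}$ by induction, and Rule~\ref{enum:ctgrowth:rule1} of Definition~\ref{def:ct-growth} attaches a single new neighbor contributing label $\beta^k$; a cluster $C$ that was a leaf in $CT_{k-1}$ has a unique outgoing label $\beta^q$ with $q\in[k]_0$ by induction, and Rule~\ref{enum:ctgrowth:rule2} fills in precisely the missing labels $\{\beta^p\mid p\in[k]_0\setminus\{q\}\}$. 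So both types of clusters end with the full label set $\{\beta^0,\ldots,\beta^k\}$ in $CT_k$, while the newly attached leaves each carry a single outgoing label in $\{\beta^1,\ldots,\beta^{k+1}\}$ by direct inspection of the two growth rules.

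Equipped with this invariant, I would combine it with the biregular bipartite edge structure of Definition~\ref{def:ct-skeleton}. Because $G_k$ has girth $\geq 2k+1$, the $k$-hop subgraph is a tree, so when \textsc{Walk} is invoked with parameters $v$ and $prev$, the only already-visited neighbor of $v$ is $prev$, which lies in the unique cluster of $CT_k$ adjacent to $C(v)$ via the edge carrying $v$'s history label $\beta^x$ on its $C(v)$-side. Hence $v$ has exactly $\beta^i$ neighbors along each outgoing label $\beta^i$ of $C(v)$, one of which (along the history label) is $prev$ and is therefore excluded from $N_v[i]$. Together with the structural invariant, this yields closed forms depending solely on position and history: if $v$ is internal with history $\beta^x$, then $|N_v[x]|=\beta^x-1$, $|N_v[i]|=\beta^i$ for $i\in[k]_0\setminus\{x\}$, and $|N_v[k+1]|=0$; if $v$ is a leaf with history $\beta^x$, then $|N_v[x]|=\beta^x-1$ and $|N_v[i]|=0$ for all other $i$. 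Analogous formulas hold for $w$ in $G_k^k(v_1)$, establishing the first claim of the lemma.

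The two comparison statements then follow by direct arithmetic from these formulas. When $v$ and $w$ share both position and history, the formulas coincide entry by entry, giving $|N_v[i]|=|N_w[i]|$ for every $i\in[k+1]_0$. When both are internal with distinct histories $\beta^x$ and $\beta^y$, plugging in yields $|N_v[i]|=|N_w[i]|=\beta^i$ for $i\in[k]_0\setminus\{x,y\}$ and $|N_v[k+1]|=|N_w[k+1]|=0$; at index $x$ we get $|N_v[x]|=\beta^x-1$ but $|N_w[x]|=\beta^x$ (since $x$ is not $w$'s history), and symmetrically at index $y$ we get $|N_v[y]|=\beta^y$ and $|N_w[y]|=\beta^y-1$, exactly matching the stated discrepancies.

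The main obstacle is the induction establishing the outgoing-label invariant; once that is in place, the rest of the proof is essentially arithmetic bookkeeping. A minor but important detail to verify is that the tree structure of $CT_k$ combined with the high-girth assumption guarantees that $\beta^x$ uniquely identifies the cluster $C(prev)$ among the neighbors of $C(v)$ in $CT_k$, so that the history of $v$ is well-defined and exactly one neighbor of $v$ (namely $prev$) is subtracted from exactly one list $N_v[x]$.
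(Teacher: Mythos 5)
Your proof is correct and follows essentially the same route as the paper's: both establish that an internal cluster of $CT_k$ has outgoing labels $\{\beta^i \mid i \in [k]_0\}$ and a leaf has a single outgoing label, then read off the closed forms for $len(N_u[i])$ by subtracting exactly one node ($prev$ or $\phi(prev)$) from the list corresponding to the history label. The only difference is one of explicitness: the paper asserts the label-set structure ``by the construction of the CT skeleton,'' whereas you supply the straightforward induction over the growth rules (Rules~\ref{enum:ctgrowth:rule1} and~\ref{enum:ctgrowth:rule2}) that justifies it, and you explicitly flag that the high-girth assumption makes $prev$ the unique already-visited neighbor so that only one element is removed from one list. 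These are reasonable elaborations rather than a different argument.
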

\begin{proof}
	If $u\in \{v,w\}$ has position \emph{internal}, we know that $C(u)$ has outgoing labels $\{\beta^i\mid i\in [k]_0\}$ by the construction of the CT skeleton.
	Denoting by $z\in \{x,y\}$ the exponent of $u$'s \emph{history},
	we have that there are $\beta^i$ nodes in $N_u[i]$ for $i\in[k]_0\setminus \{z\}$, $\beta^z-1$ nodes in $N_u[z]$ (as $prev$ or $\phi(prev)$ are removed, respectively), and zero nodes in $N_u[k+1]$.

	If $u\in \{v,w\}$ has position \emph{leaf}, all nodes in $N_u$ belong to the same cluster $C'$,
	$u$ has $\beta^z$ neighbors in this cluster,
	and $prev$ (resp.~$\phi(prev)$) lies in this cluster as well.
	Hence, $len(N_u[z])=\beta^z-1$ and $len(N_u[i])=0$ for all $i\in [k+1]_0\setminus \{z\}$.

	From these observations,
	the claims of the lemma follow immediately.
\end{proof}

Using Lemma~\ref{lem:position-history} (p.~\pageref{lem:position-history}),
we can rephrase the task of proving Theorem~\ref{thm:indistinguishability} (p.~\pageref{thm:indistinguishability}) as a simple condition on the pairs of nodes on which \textsc{Walk} is called recursively.

\begin{corollary}[Sufficient condition for correctness of Algorithm~\ref{alg:isomorphism}]\label{cor:correctness-sufficient}
	Given a CT graph $G_k$ with girth at least $2k+1$, 
	if all pairs of nodes created by \textsc{Map} on which \textsc{Walk} is called recursively
	%if during the recursive calls to \textsc{Walk}, 
	%\textsc{Map} always pairs only nodes that 
	(i) agree on \emph{position} and \emph{history} 
	or (ii) agree on position \emph{internal}, 
	Algorithm~\ref{alg:isomorphism} (p.~\pageref{alg:isomorphism}) produces an isomorphism between $G^k_k(v_0)$ and $G^k_k(v_1)$.
\end{corollary}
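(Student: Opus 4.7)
The plan is to show that under the stated hypothesis Algorithm~\ref{alg:isomorphism} constructs $\phi$ incrementally via a coupled DFS, and that each recursive step produces a local bijection on neighborhoods that, thanks to the girth assumption, assembles into a global isomorphism. First, I would use the fact that $g\geq 2k+1$ to observe that both $G^k_k(v_0)$ and $G^k_k(v_1)$ are trees: every edge in $G^k_k(v_0)$ either connects a node to the $prev$ parameter supplied to \textsc{Walk} or connects it to a newly discovered child, and the subtrees rooted at distinct children share no node. Hence it is enough to prove that, for every recursive call $\textsc{Walk}(v,w=\phi(v),prev,\cdot)$, the map \textsc{Map} paired up all new neighbors of $v$ bijectively with all new neighbors of $w$.

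Second, I would run a case split on the two possibilities in the hypothesis, leaning on Lemma~\ref{lem:position-history}. Under condition (i), the lemma gives $len(N_v[i])=len(N_w[i])$ for every $i\in[k+1]_0$, so the main loop (lines~\ref{alg:iso:map-outer-for-start}--\ref{alg:iso:map-inner-for-end}) pairs every new neighbor and the guard on line~\ref{alg:iso:special-start} is false; the local bijection is immediate. Under condition (ii), the lemma states that the length profiles of $N_v$ and $N_w$ agree at every index except the histories $x$ and $y$ of $v$ and $w$, with $len(N_v[x])=len(N_w[x])-1$ and $len(N_v[y])-1=len(N_w[y])$. Therefore exactly one node on each side is left unpaired after the main loop, the indices $i_v$ and $i_w$ computed on lines~\ref{alg:iso:special-start}--\ref{alg:iso:special-end} are uniquely defined as $x$ and $y$, and the final assignment pairs the two leftover nodes. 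In both cases, $\phi$ restricted to $\{v\}\cup(\text{new neighbors of }v)$ is a bijection onto $\{w\}\cup(\text{new neighbors of }w)$.

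Third, I would promote these local bijections to a global isomorphism by induction on $depth$, starting from the root assignment $\phi(v_0)=v_1$. The inductive hypothesis is that \textsc{Walk} applied to a pair $(v,w)$ satisfying (i) or (ii) extends $\phi$ to a bijection between the $depth$-hop subtrees rooted at $v$ and at $w$ that preserves all edges. The inductive step uses the previous paragraph to establish the bijection on the new neighborhood, then invokes the corollary hypothesis once more to conclude that each recursive call $\textsc{Walk}(v',\phi(v'),v,depth-1)$ again operates on a pair satisfying (i) or (ii); the girth bound guarantees that the subtrees processed in these recursive calls are pairwise node-disjoint, so the local extensions compose without conflict. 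Edge preservation is immediate in both cases: by construction $N_w[i]$ contains only nodes adjacent to $w$ in $G_k$, and the leftover edge introduced by the special case likewise connects $w$ to the node assigned to it.

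The main obstacle is the special case: one must be sure that exactly one element of $N_v$ and one element of $N_w$ remain after the main loop and that they lie in the two indices identified by the algorithm. This is precisely what Lemma~\ref{lem:position-history} guarantees for configuration (ii), so the entire argument reduces to a clean bookkeeping of list lengths, with the tree structure from the girth assumption ensuring that the per-neighborhood bijections glue into a genuine isomorphism between $G^k_k(v_0)$ and $G^k_k(v_1)$.
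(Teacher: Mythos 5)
Your proposal is correct and follows essentially the same approach as the paper: appeal to Lemma~\ref{lem:position-history} to match list lengths in each of the two hypothesis cases (exact match under (i), a single $\pm 1$ mismatch under (ii) which lines~\ref{alg:iso:special-start}--\ref{alg:iso:special-end} repair), and use the tree structure implied by $g\geq 2k+1$ to glue the local bijections into a global one. The only cosmetic difference is that the paper explicitly treats the root call on $(v_0,v_1)$ separately, since Lemma~\ref{lem:position-history} is stated for $v\neq v_0$ where $prev\neq\bot$; your proposal folds this in implicitly, and your discussion of case (ii) could note that the nontrivial subcase is precisely when the histories $x\neq y$ differ, the agreeing-history subcase reducing to the analysis under (i).
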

\begin{proof}
	Due to the assumed high girth,
    Algorithm~\ref{alg:isomorphism} (p.~\pageref{alg:isomorphism}) produces an isomorphism between $G^k_k(v_0)$ and $G^k_k(v_1)$ 
	if $\phi{\big|}_{N_v}$ (i.e., $\phi$ with its domain restricted to the neighborhood of $v$) is a bijection from $N_v$ to $N_{\phi(v)}$ for all $v$ in $G^k_k(v_0)$ with $d(v,v_0)< k$.
	For $v_0$ and $\phi(v_0)=v_1$, this holds because they both have $\beta^i$ neighbors in the clusters connected to them via outgoing edge label $\beta^i$ for $i\in [k]_0$,
	i.e., $len(N_v[i])=len(N_w[i])$ for $i\in [k]_0$ (and $len(N_v[k+1]) = len(N_w[k+1]) = 0$). 
	Hence, \textsc{Map} ensures that $\phi(N_v)=N_w$.
	For nodes $v\neq v_0$ and $w:=\phi(v)$ paired by \textsc{Map} that agree on \emph{position} and \emph{history}, 
	Lemma~\ref{lem:position-history} (p.~\pageref{lem:position-history}) shows that $len(N_v[i])=len(N_w[i])$ for all $i\in [k+1]_0$,
	so again \textsc{Map} succeeds.
	The last case is that $v$ and $w$ agree on position \emph{internal}.
	In this case,
	applying Lemma~\ref{lem:position-history} (p.~\pageref{lem:position-history}) 
	and noting that \textsc{Map} takes care of the resulting mismatch in list lengths in Lines~\ref{alg:iso:special-start}--\ref{alg:iso:special-end}
	proves that \textsc{Map} succeeds here, too.
\end{proof}

The next step in our reasoning is to craft an algorithmic invariant establishing the preconditions of Corollary~\ref{cor:correctness-sufficient} (p.~\pageref{cor:correctness-sufficient}).
Reflecting the inductive construction of cluster trees,
we will prove it inductively.
To this end, for $i\in [k]$,
we interpret $CT_i$ as a subgraph of $CT_k$
by simply stripping away all clusters that were added after constructing $CT_i$.

Recall that $G_k^k(v_0)$ and $G_k^k(v_1)$ are trees, because the girth of $G_k$ is at least $2k+1$. 
Treating these trees as rooted at $v_0$ and $v_1$, respectively, 
Algorithm~\ref{alg:isomorphism} (p.~\pageref{alg:isomorphism}) maps nodes at depth $d$ in $G_k^k(v_0)$ to nodes at depth $d$ in $G_k^k(v_1)$.
Accordingly, the following notion will be useful.

\begin{definition}[Node parent]\label{def:parent-node}
	For $v \in G^k_k(v_i)$, $i\in \{0,1\}$, with $d(v_i,v) > 0$, the \emph{parent} of $v$ in $G^k_k(v_i)$, denoted $p_i(v)$,
	is the node through which $v$ is discovered from $v_i$ in Algorithm~\ref{alg:isomorphism}~(p.~\pageref{alg:isomorphism}).
\end{definition}

We are now ready to state the main invariant of Algorithm~\ref{alg:isomorphism} (p.~\pageref{alg:isomorphism}).

\begin{definition}[Main Invariant of Algorithm~\ref{alg:isomorphism}]\label{def:invariant}
	For $0 < d < k$, suppose that $v$ and $w:=\phi(v)$ lie at distance $d$ from $v_0$ and $v_1$, respectively.
	Then exactly one of the following holds:
	\begin{enumerate}[noitemsep]
		\item $C(v),C(w)\in CT_d$, and $v$ and $w$ agree on \emph{history} or
		both have \emph{history} $\leq \beta^{d+1}$.
		\item There is some $i$ with $d < i \leq k$ such that $C(v),C(w)\in CT_i\setminus CT_{i-1}$, 
		$v$ and $w$ agree on \emph{history}, and
		$C(v)$ and $C(w)$ are connected from $CT_{i-1}$ with outgoing labels $(\beta^{j'},\beta^{j'+1})$ for the same $j'\in[i]_0$. 
	\end{enumerate}
\end{definition}

\begin{figure*}[hbt!]
	\centering
	\begin{subfigure}{\textwidth}
		\centering
		% !TeX spellcheck = en_US
% !TeX root = ../cluster_trees.tex
\begin{tikzpicture}[shorten >=0pt,node distance=2cm and 2cm,on grid,auto] 
\node[state] (c00) [align=center,line width=0.5mm,fill=gray!30,rectangle] {$C_0$};
\node[state] (v0) [scale=0.5,align=center,line width=0.0mm,fill=red,above right=0.5cm of c00] {\huge $v_0$};
\node[state] (c01) [align=center,fill=gray!30,line width=0.5mm,rectangle, right=of c00] {$C_1$};
\node[state] (v1) [scale=0.5,align=center,line width=0.0mm,fill=cyan!30,above right=0.5cm of c01] {\huge $v_1$};
\node[state] (c02) [align=center,fill=gray!30, rectangle, left=of c00] {$C_2$};
\node[state] (c03) [align=center,fill=gray!30, rectangle, right=of c01] {$C_3$};
\node[state] (c11) [align=center,fill=black, below=of c00] {};
\node[state] (c12) [align=center,fill=black, below=of c01] {};
\node[state] (c22) [align=center,fill=green!30!darkgray, left=of c02] {};
\node[state] (c21) [align=center,fill=green!30!darkgray, below=of c02] {};
\node[state] (c32) [align=center,fill=green!30!darkgray, below=of c03] {};
\node[state] (c31) [align=center,fill=green!30!darkgray, right=of c03] {};
\node[state] (v00) [scale=0.25,align=center,line width=0.0mm,fill=blue,above left=0.5cm of c01] {};
\node[state] (v01) [scale=0.25,align=center,line width=0.0mm,fill=orange,below right=0.5cm of c02] {};
\node[state] (v02) [scale=0.25,align=center,line width=0.0mm,fill=green!30,above right=0.5cm of c11] {};
\node[state] (v10) [scale=0.25,align=center,line width=0.0mm,fill=blue,above left=0.5cm of c03] {};
\node[state] (v11) [scale=0.25,align=center,line width=0.0mm,fill=orange,below right=0.5cm of c00] {};
\node[state] (v12) [scale=0.25,align=center,line width=0.0mm,fill=green!30,above right=0.5cm of c12] {};
\path
	(c00) edge [pos=0.25, align=left,below] node {$0$} (c01)
	(c01) edge [pos=0.25, align=left,below] node {$1$} (c00)
	(c00) edge [pos=0.25, align=left,below] node {$1$} (c02)
	(c02) edge [pos=0.25, align=left,below] node {$2$} (c00)
	(c01) edge [pos=0.25, align=left,below] node {$0$} (c03)
	(c03) edge [pos=0.25, align=left,below] node {$1$} (c01)
	(c02) edge [pos=0.25, align=left,below] node {$1$} (c22)
	(c22) edge [pos=0.25, align=left,below] node {$2$} (c02)
	(c03) edge [pos=0.25, align=left,below] node {$0$} (c31)
	(c31) edge [pos=0.25, align=left,below] node {$1$} (c03)
	(c02) edge [pos=0.25, align=left,left] node {$0$} (c21)
	(c21) edge [pos=0.25, align=left,left] node {$1$} (c02)
	(c02) edge [pos=0.25, align=left,left] node {$0$} (c21)
	(c21) edge [pos=0.25, align=left,left] node {$1$} (c02)
	(c00) edge [pos=0.25, align=left,left] node {$2$} (c11)
	(c11) edge [pos=0.25, align=left,left] node {$3$} (c00)
	(c01) edge [pos=0.25, align=left,left] node {$2$} (c12)
	(c12) edge [pos=0.25, align=left,left] node {$3$} (c01)
	(c03) edge [pos=0.25, align=left,left] node {$2$} (c32)
	(c32) edge [pos=0.25, align=left,left] node {$3$} (c03)
	;
\path[->]
	(v0) edge [color=blue, line width=0.05cm, bend left=30] node {} (v00)
	(v0) edge [color=orange, line width=0.05cm, bend left=30] node {} (v01)
	(v0) edge [color=green!30, line width=0.05cm, bend left=30] node {} (v02)
	(v1) edge [color=blue, line width=0.05cm, bend left=30] node {} (v10)
	(v1) edge [color=orange, line width=0.05cm, bend left=30] node {} (v11)
	(v1) edge [color=green!30, line width=0.05cm, bend left=30] node {} (v12)
	;
\end{tikzpicture}
		\subcaption{$d = 1$ from $v_0$ and $v_1$: for the blue nodes, the first case of the invariant holds \emph{with} agreement on \emph{history}; 
			for the orange nodes, the first case of the invariant holds \emph{without} agreement on \emph{history}; 
			and for the green nodes, the second case of the invariant holds.}
	\end{subfigure}
	\begin{subfigure}{\textwidth}
		\centering
		% !TeX spellcheck = en_US
% !TeX root = ../cluster_trees.tex
\begin{tikzpicture}[shorten >=0pt,node distance=2cm and 2cm,on grid,auto] 
\node[state] (c00) [align=center,line width=0.5mm,fill=gray!30,rectangle] {$C_0$};
\node[state] (v0) [scale=0.5,align=center,line width=0.0mm,fill=red,above right=0.5cm of c00] {\huge $v_0$};
\node[state] (c01) [align=center,fill=gray!30,line width=0.5mm,rectangle, right=of c00] {$C_1$};
\node[state] (v1) [scale=0.5,align=center,line width=0.0mm,fill=cyan!30,above right=0.5cm of c01] {\huge $v_1$};
\node[state] (c02) [align=center,fill=gray!30, rectangle, left=of c00] {$C_2$};
\node[state] (c03) [align=center,fill=gray!30, rectangle, right=of c01] {$C_3$};
\node[state] (c11) [align=center,fill=black, below=of c00] {};
\node[state] (c12) [align=center,fill=black, below=of c01] {};
\node[state] (c22) [align=center,fill=green!30!darkgray, left=of c02] {};
\node[state] (c21) [align=center,fill=green!30!darkgray, below=of c02] {};
\node[state] (c32) [align=center,fill=green!30!darkgray, below=of c03] {};
\node[state] (c31) [align=center,fill=green!30!darkgray, right=of c03] {};
%\node[state] (v00) [scale=0.25,align=center,line width=0.0mm,fill=blue,above left=0.5cm of c01] {};
\node[state] (v01) [scale=0.25,align=center,line width=0.0mm,fill=orange,below right=0.5cm of c02] {};
%\node[state] (v02) [scale=0.25,align=center,line width=0.0mm,fill=green!30,above right=0.5cm of c11] {};
%\node[state] (v10) [scale=0.25,align=center,line width=0.0mm,fill=blue,above left=0.5cm of c03] {};
\node[state] (v11) [scale=0.25,align=center,line width=0.0mm,fill=orange,below right=0.5cm of c00] {};
%\node[state] (v12) [scale=0.25,align=center,line width=0.0mm,fill=green!30,above right=0.5cm of c12] {};
%
\node[state] (v010) [scale=0.25,align=center,line width=0.0mm,fill=blue,above left=0.5cm of c21] {};
\node[state] (v110) [scale=0.25,align=center,line width=0.0mm,fill=blue,above left=0.5cm of c01] {};
\node[state] (v011) [scale=0.25,align=center,line width=0.0mm,fill=orange,above right=0.5cm of c22] {};
\node[state] (v111) [scale=0.25,align=center,line width=0.0mm,fill=orange,above right=0.5cm of c02] {};
\node[state] (v012) [scale=0.25,align=center,line width=0.0mm,fill=green!30,below left=0.5cm of c00] {};
\node[state] (v112) [scale=0.25,align=center,line width=0.0mm,fill=green!30,above left=0.5cm of c11] {};
\path
	(c00) edge [pos=0.25, align=left,below] node {$0$} (c01)
	(c01) edge [pos=0.25, align=left,below] node {$1$} (c00)
	(c00) edge [pos=0.25, align=left,below] node {$1$} (c02)
	(c02) edge [pos=0.25, align=left,below] node {$2$} (c00)
	(c01) edge [pos=0.25, align=left,below] node {$0$} (c03)
	(c03) edge [pos=0.25, align=left,below] node {$1$} (c01)
	(c02) edge [pos=0.25, align=left,below] node {$1$} (c22)
	(c22) edge [pos=0.25, align=left,below] node {$2$} (c02)
	(c03) edge [pos=0.25, align=left,below] node {$0$} (c31)
	(c31) edge [pos=0.25, align=left,below] node {$1$} (c03)
	(c02) edge [pos=0.25, align=left,left] node {$0$} (c21)
	(c21) edge [pos=0.25, align=left,left] node {$1$} (c02)
	(c02) edge [pos=0.25, align=left,left] node {$0$} (c21)
	(c21) edge [pos=0.25, align=left,left] node {$1$} (c02)
	(c00) edge [pos=0.25, align=left,left] node {$2$} (c11)
	(c11) edge [pos=0.25, align=left,left] node {$3$} (c00)
	(c01) edge [pos=0.25, align=left,left] node {$2$} (c12)
	(c12) edge [pos=0.25, align=left,left] node {$3$} (c01)
	(c03) edge [pos=0.25, align=left,left] node {$2$} (c32)
	(c32) edge [pos=0.25, align=left,left] node {$3$} (c03)
	;
\path[->]
%	(v0) edge [color=blue, line width=0.05cm, bend left=30] node {} (v00)
	(v0) edge [color=orange, line width=0.05cm, bend left=30] node {} (v01)
%	(v0) edge [color=green!30, line width=0.05cm, bend left=30] node {} (v02)
%	(v1) edge [color=blue, line width=0.05cm, bend left=30] node {} (v10)
	(v1) edge [color=orange, line width=0.05cm, bend left=30] node {} (v11)
%	(v1) edge [color=green!30, line width=0.05cm, bend left=30] node {} (v12)
	(v01) edge [color=blue, line width=0.05cm, bend right=30] node {} (v010)
	(v11) edge [color=blue, line width=0.05cm, bend right=30, dashed] node {} (v110)
	(v01) edge [color=orange, line width=0.05cm, bend right=30] node {} (v011)
	(v11) edge [color=orange, line width=0.05cm, bend right=30] node {} (v111)
	(v01) edge [color=green!30, line width=0.05cm, bend right=30, dashed] node {} (v012)
	(v11) edge [color=green!30, line width=0.05cm, bend right=30] node {} (v112)
	;
\end{tikzpicture}
		\subcaption{$d = 2$ from orange nodes at distance $d = 1$: 
			because the invariant holds for $d = 1$, Corollary~\ref{cor:correctness-sufficient} (p.~\pageref{cor:correctness-sufficient}) ensures that Algorithm~\ref{alg:isomorphism} (p.~\pageref{alg:isomorphism}) produces an isomorphism between $G^2_2(v_0)$ and $G^2_2(v_1)$
			by mapping exactly one node in $G^2_2(v_0)$ discovered via the solid blue arrow to one node in $G^2_2(v_1)$ discovered via the solid green arrow (Algorithm~\ref{alg:isomorphism}, l.~\ref{alg:iso:special-start}--\ref{alg:iso:special-end}).
			%for the newly discovered green and blue nodes, 
			%the first case of the invariant holds \emph{without} agreement on \emph{history}; 
			%for the newly discovered orange nodes, the first case of the invariant holds \emph{with} agreement on \emph{history}.
		}
	\end{subfigure}
	\caption{Illustration of Definition~\ref{def:invariant} (p.~\pageref{def:invariant}) for $CT_2$. 
		Cluster colors, shapes, and borders drawn as in Figure~\ref{fig:ct2} (p.~\pageref{fig:ct2}). 
		Nodes $v_0\in C_0$ and $v_1\in C_1$ are depicted as medium-size circles;
		representatives of nodes seen via a certain outgoing edge are depicted as small circles and connected to their parents by arrows.
		Node and arrow colors show outgoing edge labels (e.g., blue nodes are seen via the outgoing edge $\beta^0$); 
		dashed arrows indicate that $\beta^i-1$, rather than $\beta^i$, nodes are discovered via the outgoing label indicated by the arrow color.}\label{fig:invariant}
\end{figure*}
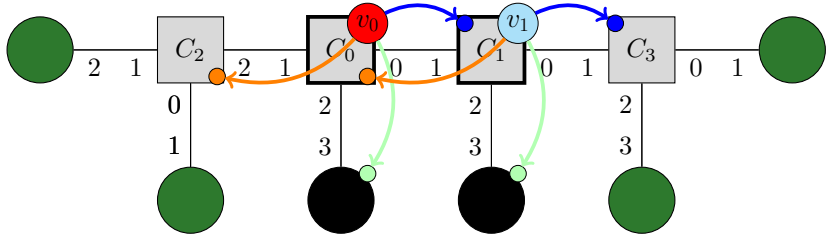
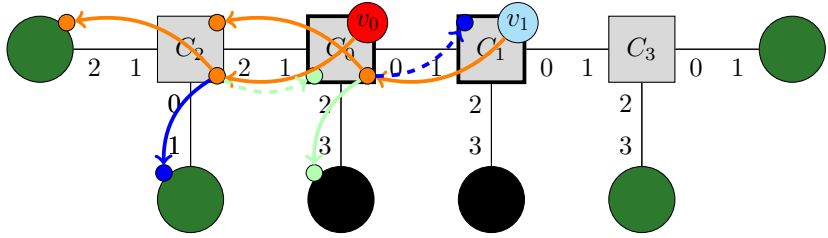

Intuitively, the first case tracks how asymmetry propagates through the construction,
and counts down how many levels of the iterative construction remain that hide it:
If there is a mismatch in history,
it is confined to $CT_d$,
i.e., not only $C(v),C(w)\in CT_d$,
but also the history of $v$ and $w$ has labels corresponding to $CT_d$.

However, we need to take into account that attaching leaf clusters to internal clusters by the first growth rule is needed,
as otherwise ``older'' clusters could be easily recognized without having to inspect the far-away clusters added by the second growth rule.
These leaves then recursively sprout their own subtrees, which again sprout their own subtrees, and so on.
If %this happens 
the recursive construction of the isomorphism visits such clusters 
before reaching distance $k$ from root $C_0$ (resp.\ $C_1$),
by design, it enters subtrees of $CT_k$ that started to grow in the same iteration.
Thus, these subtrees are completely symmetric,
and they are entered with matching history.
This is captured by the second case of the invariant.
The intuition of the invariant and its interplay with Corollary~\ref{cor:correctness-sufficient} (p.~\pageref{cor:correctness-sufficient}) are illustrated in Figure~\ref{fig:invariant} (p.~\pageref{fig:invariant}).

Recall that the growth rules only attach leaves,
and do so for each cluster.
Hence, the clusters in $CT_{k-1}$ are exactly the internal clusters in $CT_k$,
while $CT_k\setminus CT_{k-1}$ contains all leaves.
Thus, in the first case of the invariant,
$v$ and $w$ agree on position \emph{internal},
and in the second case,
they agree on \emph{position} and \emph{history}.
Therefore, Theorem~\ref{thm:indistinguishability} (p.~\pageref{thm:indistinguishability}) follows from Corollary~\ref{cor:correctness-sufficient} (p.~\pageref{cor:correctness-sufficient}) once the invariant is established.

Having carved out the crucial properties of the construction in this invariant,
we can now complete the proof of the theorem with much less effort than in~\cite{kuhn2016}.

\begin{lemma}[Main invariant holds]\label{lem:invariant}
Algorithm~\ref{alg:isomorphism} (p.~\pageref{alg:isomorphism}) satisfies the invariant stated in Definition~\ref{def:invariant} (p.~\pageref{def:invariant}).
\end{lemma}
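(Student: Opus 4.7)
The plan is to prove the invariant by induction on the depth $d$, leaning on two structural facts about the CT skeleton. First, the ``$+1$ rule'': every edge in $CT_k$ has outgoing labels on its two endpoints that differ by exactly one. Second, for any $C \in CT_d$, the $\beta^i$-neighbor of $C$ in $CT_k$ lies in $CT_{d+1}$ whenever $i \leq d+1$, and in $CT_i \setminus CT_{i-1}$ (attached by growth rule 1) whenever $i > d+1$. Both facts follow directly by inspection of $CT_1$ (Definition~\ref{def:ct-trunk}) and the growth rules (Definition~\ref{def:ct-growth}).

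For the base case $d = 1$, the algorithm pairs each child $v'$ of $v_0$ with a child $w'$ of $v_1$ via matching outgoing labels $\beta^i$, with no mismatch (since $v_0, v_1$ lie in internal clusters with identical outgoing-label sets $\{\beta^0,\ldots,\beta^k\}$). Direct inspection then verifies the invariant: for $i \in \{0, 1\}$, both $v', w'$ lie in $CT_1$ with histories at most $\beta^2 = \beta^{d+1}$, placing us in case 1; for $i \geq 2$, both $v', w'$ land in leaves attached by growth rule 1 at iteration $i$, with matching parent-side labels $\beta^i$ and matching histories $\beta^{i+1}$, placing us in case 2 with $j' = i$.

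For the inductive step, I would split by the case of the invariant at depth $d$ and the outgoing label $\beta^i$ used to reach the pair $(v', w')$ at depth $d + 1$. In subcase 1a and case 2, Lemma~\ref{lem:position-history} gives matching list lengths for $N_v, N_w$, so \textsc{Map} pairs children via matching outgoing labels $\beta^i$; combining the two structural facts places the resulting pair either in $CT_{d+1}$ with both histories at most $\beta^{d+2}$ (case 1 at depth $d + 1$, when $i \leq d + 1$) or in $CT_i \setminus CT_{i-1}$ with matching parent-side label $\beta^i$ and matching history $\beta^{i+1}$ (case 2 at depth $d + 1$ with $j' = i$, when $i > d + 1$). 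Case 2 at depth $d$ also exploits the symmetry of the growth rules: both $C(v)$ and $C(w)$ were introduced by growth rule 1 at the same iteration with the same parent-side label, so the $CT_k$ subtrees hanging off them are isomorphic, and child pairs inherit matching history.

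The hard part will be subcase 1b, where $v, w$ have disagreeing histories $\beta^x, \beta^y$ with $x, y \leq d + 1$. For outgoing labels $\beta^i$ with $i \notin \{x, y\}$, pairs are formed as in subcase 1a; the delicate pair is formed by the special case of \textsc{Map}, which pairs the leftover $v' \in N_v[y]$ with $w' \in N_w[x]$. These two nodes lie in \emph{different} corresponding clusters $\mathrm{out}(C(v), y)$ and $\mathrm{out}(C(w), x)$, breaking the ``matching outgoing label'' pattern that underlies every other pair the algorithm forms. The key insight — and the reason the invariant bounds histories by $\beta^{d+1}$ — is that $x, y \leq d+1$ exactly controls how far this asymmetry can propagate: by the $+1$ rule, the history of $v'$ lies in $\{\beta^{y-1}, \beta^{y+1}\}$ and that of $w'$ in $\{\beta^{x-1}, \beta^{x+1}\}$, each at most $\beta^{d+2}$, and the second structural fact places both $C(v'), C(w')$ in $CT_{d+1}$. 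Hence the special-case pair falls into case 1 at depth $d+1$, completing the induction. I expect the remaining work to be a systematic dispatch of the subcases, with the mutual exclusivity of cases 1 and 2 being immediate from $CT_d \cap (CT_i \setminus CT_{i-1}) = \emptyset$ for $i > d$.
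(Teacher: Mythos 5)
Your overall framework — induction on depth, the $+1$ rule, the structural fact about $\beta^i$-neighbors — matches the paper's approach, and your base case and subcases 1a/1b are sound. In particular, your analysis of the special ``mismatch'' pairing in subcase 1b is correct and essentially identical to the paper's.

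However, there is a genuine gap in your treatment of Case 2. Your second structural fact is stated only for $C\in CT_d$, yet in Case 2 the parent cluster $C(v')$ lies in $CT_i\setminus CT_{i-1}$ for some $i>d$, so that fact does not apply as stated. More importantly, your ``isomorphic subtrees'' argument only covers children of $v'$ that descend \emph{into} the subtree of $CT_k$ hanging off $C(v')$. It does not address the sub-case where \textsc{Map} pairs children reached via the label $\beta^{j'+1}$, i.e.\ \emph{back towards the CT-parent} of $C(v')$: these children land in $C(p_0(v'))$ and $C(p_1(w'))$ — the grandparent clusters — which lie \emph{outside} the subtree and may be in $CT_{d+1}$ (putting the pair in Case 1) or in some $CT_{i'}\setminus CT_{i'-1}$ with $i'>d+1$ (putting it in Case 2). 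To decide which, you need to apply the inductive hypothesis one more level up, to the grandparent pair $\bigl(p_0(v'),p_1(w')\bigr)$ at distance $d-1$, which the paper does explicitly (and which requires phrasing the induction as a strong induction, ``the invariant holds up to depth $d$,'' not merely at depth $d$). Without this sub-case, the inductive step is incomplete: the backward direction is the sole place where the two halves of Case 2 can ``leak'' back into Case 1, and it cannot be dispatched by a symmetry argument local to the subtree.
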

\begin{proof}
	We prove the claim for fixed $k$ by induction on $d$.
	For $v$ and $w:=\phi(v)$ at distance $d=1$ from $v_0=p_0(v)$ and $v_1=p_1(w)$, respectively,
	$v$ and $w$ are matched in the initial call to \textsc{Walk} with $v_0$ and $v_1$ as arguments.
	In this call, $len(N_{v_0}[i])=len(N_{v_1}[i])$ for all $i\in [k+1]_0$,
	i.e., only nodes corresponding to the same outgoing labels get matched.
	Inspecting $CT_1$ and taking into account the CT growth rules,
	we see that for $i\in \{0,1\}$, 
	the matched nodes lie in clusters that are present already in $CT_1$ and have outgoing labels of at most $\beta^2$
	(i.e., the first case of the invariant holds),
	while for $i>1=d$,
	both nodes lie in clusters from $CT_i\setminus CT_{i-1}$ with outgoing labels of $\beta^{i+1}$
	and their clusters are connected from $CT_{i-1}$ with outgoing labels $(\beta^{i},\beta^{i+1})$
	(i.e., the second case of the invariant holds).
	
	For the inductive step, assume that the invariant is established up to distance $d$ for $1\leq d<k-1$, 
	and consider $v$, $w:=\phi(v)$ at distance $d+1$ from $v_0$ and $v_1$, respectively.
	We apply the invariant to $v':=p_0(v)$ and $w':=p_1(w)$ and distinguish between its two cases.
	
	(1) Suppose that $C(v'),C(w')\in CT_d$, and $v'$ and $w'$ agree on \emph{history} or both have \emph{history} $\leq \beta^{d+1}$.
	As $d<k$,
	we know that $v'$ and $w'$ agree on position \emph{internal}.
	By Lemma~\ref{lem:position-history} (p.~\pageref{lem:position-history}), 
	the call to \textsc{Walk} on $v'$ and $w'$ thus satisfies that $len(N_{v'}[i])=len(N_{w'}[i])$ for all $i\in [k+1]_0\setminus \{j,j'\}$, where $\beta^j,\beta^{j'}$ for $j, j' \leq d+1$ are the histories of $v'$ and $w'$, respectively.
	If $C(v)\in CT_{d+1}$,
	Lemma~\ref{lem:position-history} (p.~\pageref{lem:position-history}) entails that $v\in N_{v'}[i]$ for some $i\le d+1$, 
	and \textsc{Walk} chooses $w=\phi(v)$ from $N_{w'}[i']$ for some $i'\le d+1$.
	Due to the CT growth rules, 
	since $C(v'),C(w')\in CT_d$, the incident edges of $C(v')$ and $C(w')$ with outgoing labels of at most $\beta^{d+1}$ lead to clusters in $CT_{d+1}$,
	and the history of nodes discovered by traversing these edges is at most $\beta^{d+2}$.
	Hence, if $C(v)\in CT_{d+1}$, it follows that the first case of the invariant holds for $v$ and $w$.
	If $C(v)\notin CT_{d+1}$, we have that $C(v)\in CT_i\setminus CT_{i-1}$ for some $i> d+1$,
	yielding $len(N_{v'}[i])=len(N_{w'}[i])$, 
	and thus, $w\in N_{w'}[i]$.
	As $C(v')$ and $C(w')$ are internal clusters in $CT_{d+1}$,
	we can conclude that both $C(v)$ and $C(w)$ have been added to the cluster tree in the $i$\textsuperscript{th} construction step using growth rule~1.
	Hence, we get that $C(v),C(w)\in CT_i\setminus CT_{i-1}$ with $v$ and $w$ agreeing on \emph{history} $\beta^{i+1}$,
	and since $C(v'), C(w')\in CT_d\subseteq CT_{i-1}$, 
	$C(v')$ and $C(w')$ are connected from $CT_{i-1}$ with outgoing labels $(\beta^i,\beta^{i+1})$, and
	the second case of the invariant holds for $v$ and $w$.
	
	(2) Assume that there is some $i$ with $d < i \leq k$ such that $C(v'),C(w')\in CT_i\setminus CT_{i-1}$, 
	$v'$ and $w'$ agree on \emph{history}, and
	$C(v')$ and $C(w')$ are connected from $CT_{i-1}$ with outgoing labels $(\beta^{j'},\beta^{j'+1})$ for the same $j'\in[i]_0$. 
	%Then by the CT growth rules, 
	Since $C(v')$ and $C(w')$ were added in the same growth round, 
	$v'$ and $w'$ also agree on \emph{position},
	so $v\in N_{v'}[j]$ and $w\in N_{w'}[j]$ for the same $j\in [k+1]_0$ by Lemma~\ref{lem:position-history} (p.~\pageref{lem:position-history}), 
	and similarly, 
	as $C(v')$ and $C(w')$ are both added when forming $CT_i$ from $CT_{i-1}$ and connected from $CT_{i-1}$ with the same labels,
	$v$ and $w$ agree on \emph{history}.
	Hence, if $j\neq j'+1$,
	then $C(v),C(w)\in CT_{i'+1}\setminus CT_{i'}$ for some $i' \geq i+1$, $C(v)$ and $C(w)$ are connected from $CT_i$ with the same labels,
	and since $i+1>d+1$, 
	the second case of the invariant holds for $v$ and $w$.
	If $j=j'+1$,
	$C(v)=C(p_0(v'))$ and $C(w)=C(p_1(w'))$.
	As \textsc{Walk} mapped $v'$ to $w'$,
	we have that $p_0(v')$ was mapped to $\phi(p_0(v'))=p_1(w')$,
	where $p_0(v')$ and $p_1(w')$ lie at distance $d-1$ from $v_0$ and $v_1$, respectively.
	Applying the invariant to these nodes,
	the first case and the second case with $i' \leq d+1$ both imply that $C(v),C(w)\in CT_{d+1}$,
	establishing the first case of the invariant for $v$ and $w$.
	And if the second case applies with $i'>d+1$,
	then the second case of the invariant holds for $v$ and $w$.
\end{proof}

With this result, we can summarize:\\

\begin{xproof}{Theorem~\ref{thm:indistinguishability} (p.~\pageref{thm:indistinguishability})}
	Follows from the correctness of Algorithm~\ref{alg:isomorphism} (p.~\pageref{alg:isomorphism}) for CT graphs $G_k$ with girth $\geq 2k+1$, 
	established via Lemma~\ref{lem:invariant} (p.~\pageref{lem:invariant}) and Corollary~\ref{cor:correctness-sufficient}~(p.~\pageref{cor:correctness-sufficient}).
\end{xproof}

\subsection{Order and Maximum Degree of Cluster Tree Graphs}\label{macro:properties}

As already observed,
the CT skeleton $CT_k$ constrains the number of nodes in a CT graph $G_k$ in several ways:

\begin{observation}[Order constraints for $G_k$ from $CT_k$]\label{obs:orderconstraints}
	\hspace*{6cm}
	\begin{enumerate}
		\item For $k>1$, 
		the number of nodes in a cluster of $CT_k$ on level $k+1$ must be at least $\beta^{k}$, since there is always at least one branch instantiation cluster $C'$ on level $k+1$, and nodes in the parent cluster of $C'$ have $\beta^k$ neighbors in $C'$.
		\item To ensure that the edge labels $(\beta^i,\beta^{i+1})$ define feasible biregular bipartite graphs for all $i\in[k]_0$,
		the number of nodes in a single cluster must fall by a factor of $\beta$ per level.
		\item For $k>1$, 
		in the smallest CT graph $G_k$, 
		clusters on level $l$ have $\beta^{2k-l+1}$ nodes, 
		e.g., a cluster on level $k+1$ has $\beta^k$ nodes, and
		$C_0$ has $\beta^{2k+1}$ nodes.
	\end{enumerate}
\end{observation}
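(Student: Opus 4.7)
The plan is to handle the three items in sequence, with item~2 supplying the scaling relation that feeds into items~1 and~3.

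For item~2, a single double-counting suffices. If clusters $C,C'$ are adjacent in $CT_k$ via the edge $\{(C,\beta^i),(C',\beta^{i+1})\}$, counting the edges of the biregular bipartite graph between them in two ways yields $|C|\beta^i=|C'|\beta^{i+1}$, so $|C'|=|C|/\beta$. Since this ratio is independent of $i$, all children of a given cluster have the same size, and induction on the level shows that every cluster at level $l$ in $CT_k$ has size $|C_0|/\beta^l$.

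For item~1, the aim is to exhibit a level-$(k+1)$ cluster $C'$ of $CT_k$ whose parent $C$ connects to it with parent-outgoing label $\beta^k$; then each node of $C$ needs $\beta^k$ distinct neighbors inside $C'$, forcing $|C'|\geq\beta^k$. The first growth rule at step $k$ attaches leaves only to internal clusters of $CT_{k-1}$, whose deepest level is $k-1$, so it produces no level-$(k+1)$ clusters. Hence every level-$(k+1)$ cluster arises from the second growth rule at step $k$ applied to some level-$k$ leaf $C$ of $CT_{k-1}$ with outgoing label $\beta^q$, and that rule installs children with parent-outgoing labels $\{\beta^p:p\in[k]_0\setminus\{q\}\}$; the target label $\beta^k$ appears precisely when $q<k$. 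Item~3 then drops out by combining items~1 and~2: the tightest local constraint is $|C_0|/\beta^{k+1}\geq\beta^k$, so $|C_0|\geq\beta^{2k+1}$. Taking $|C_0|=\beta^{2k+1}$ realizes equality and yields cluster size $\beta^{2k-l+1}$ at each level $l$; since no outgoing label anywhere in $CT_k$ ever exceeds $\beta^k$, every remaining local size constraint is comfortably satisfied.

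The main obstacle is the subclaim underlying item~1: for every $k>1$, $CT_{k-1}$ contains a level-$k$ leaf with outgoing label strictly below $\beta^k$. For $k=2$ this is immediate, as $C_3\in CT_1$ has outgoing label $\beta^1$. For $k>2$, I would use a short induction on $k$: by the inductive hypothesis, $CT_{k-2}$ has a level-$(k-1)$ leaf, and the second growth rule at step $k-1$ installs on it level-$k$ leaves whose outgoing labels range over $\{\beta^{p+1}:p\in[k-1]_0\setminus\{q'\}\}$; at most one of these equals $\beta^k$, so for $k>2$ at least one of them falls strictly below $\beta^k$, completing the induction.
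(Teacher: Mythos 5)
Your proof is correct and fills in the details that the paper's one-sentence justification (``Follows immediately from the connectivity structure prescribed by $CT_k$'') leaves entirely to the reader. The decomposition---double-counting for item~2, then identifying the binding constraint at level $k+1$ via growth rule~2 for item~1, and combining them for item~3---is the natural reading of the observation, and the subclaim that $CT_{k-1}$ has a level-$k$ leaf with outgoing label below $\beta^k$ (verified by a short induction, or really just by counting the $k-1$ new leaf labels produced at step $k-1$) is exactly the point that needs checking to see why the bound is $\beta^{2k+1}$ rather than $\beta^{2k}$, and why $k>1$ is required.

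One phrase is imprecise and worth fixing: ``no outgoing label anywhere in $CT_k$ ever exceeds $\beta^k$'' is false under the paper's definition of ``outgoing label,'' since leaf clusters attached by growth rule~1 at step $k$ have outgoing label $\beta^{k+1}$ toward their parents. What you actually need---and what your argument uses---is that every edge has the form $\{(C,\beta^i),(C',\beta^{i+1})\}$ with the \emph{parent-side} exponent $i\leq k$, together with the fact that parents sit at level $l\leq k$; both sides of the edge then give the same constraint $i+l\leq 2k$, which is tight precisely at the level-$(k+1)$ cluster reached via parent-outgoing label $\beta^k$. Stating it as a bound on parent-outgoing labels, and making the $l\leq k$ part explicit, would make the last sentence airtight.
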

\begin{proof}
	Follows immediately from the connectivity structure prescribed by $CT_k$.
\end{proof}

Further, we can determine the number of clusters on each level of $CT_k$:

\begin{theorem}[Order of CT skeletons by level]\label{thm:ct-nclusters}
	For $k\in \mathbb{N}$, the number of clusters $n_C$ on level $l\in \mathbb{N}_0$ in $CT_k$ is\footnote{%
		Following widespread conventions, we set $0!:=1$. %, and $\prod_{i=j}^{k}x:=1$ if $j>k$.
	}
	\begin{align*}
	n_C(k,l)=\begin{cases}
	1&l=0\\
	\frac{k!}{(k-l+1)!}\cdot (k-l+2) 
	&1\leq l \leq k+1\\
	0&l>k+1.
	\end{cases}
	\end{align*}
\end{theorem}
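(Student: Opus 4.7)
I would prove the formula by induction on $k$, with $k = 1$ as the base case: direct inspection of $CT_1$ via Definition~\ref{def:ct-trunk} gives $C_0$ at level $0$, $C_1$ and $C_2$ at level $1$, and $C_3$ at level $2$, matching the claim. The $l = 0$ case is immediate for every $k$ since only $C_0$ sits at the root.

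For the inductive step, I would first extract two structural consequences of the growth rules in Definition~\ref{def:ct-growth}. Because rule~1 attaches a new child to every internal cluster and rule~2 attaches new children to every leaf, every cluster of $CT_{k-1}$ becomes internal in $CT_k$, while every freshly added cluster is a leaf. Consequently, for $k \ge 3$ the internal clusters of $CT_{k-1}$ are exactly the clusters of $CT_{k-2}$, and each new level-$l$ cluster in $CT_k$ is a child of some level-$(l-1)$ cluster of $CT_{k-1}$, contributing $1$ child if that parent is internal in $CT_{k-1}$ and $k$ children if it is a leaf. Combining these observations yields the recursion
\begin{equation*}
n_C(k, l) = n_C(k-1, l) + k\, n_C(k-1, l-1) - (k-1)\, n_C(k-2, l-1),
\end{equation*}
with the conventions $n_C(m, -1) := 0$ and $n_C(m, l) := 0$ for $l > m+1$.

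The inductive step then reduces to plugging the hypothesis into the right-hand side and simplifying. In the generic range $2 \le l \le k-1$ this is a short factorial calculation: pull out the common factor $(k-1)!/(k-l+1)!$ and check algebraically that what remains simplifies to $k(k-l+2)$. The boundary values $l = 1$, $l = k$, and $l = k+1$ each require a small separate check, because one of the input terms sits at the boundary of its own piecewise formula (and for $l = k+1$ the term $n_C(k-2, l-1)$ vanishes outright). Finally, the case $l > k+1$ follows from a trivial side-induction showing that $CT_k$ has depth at most $k+1$.

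Two loose ends remain. First, the recursion refers to $CT_{k-2}$, so $k = 2$ needs a separate direct verification: read off from $CT_1$ that $C_0$ is internal at level~$0$, $C_1$ is internal and $C_2$ is a leaf at level~$1$, and $C_3$ is a leaf at level~$2$, then apply the growth rules once and count. Second, the $l = 1$ algebra can be bypassed by observing that $n_C(k, 1)$ is simply the number of children of $C_0$ in $CT_k$, which equals $2$ in $CT_1$ plus one additional child per subsequent growth step via rule~1, giving $k + 1$. I expect the only real obstacle to be bookkeeping: ensuring that none of the boundary cases silently slips through the factorial manipulation.
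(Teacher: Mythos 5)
Your proposal is correct but routes through a more complicated recurrence than the paper uses. The paper observes that, because growth only attaches leaves and does so at every cluster, \emph{every} cluster of $CT_k$ becomes an internal cluster of $CT_{k+1}$ with outgoing labels $\{\beta^0,\ldots,\beta^{k+1}\}$, hence with exactly $k+1$ children when it sits at level $\geq 1$ (one of its $k+2$ edges goes to the parent); since new clusters are leaves and contribute no children, this gives the one-term recurrence $n_C(k+1,l+1)=(k+1)\,n_C(k,l)$, and the inductive step is a single-line factorial manipulation. Your recurrence
\[
n_C(k,l)=n_C(k-1,l)+k\,n_C(k-1,l-1)-(k-1)\,n_C(k-2,l-1)
\]
is obtained by separately counting pre-existing level-$l$ clusters and newly-added ones, distinguishing internal from leaf parents via the identity ``internal clusters of $CT_{k-1}$ = clusters of $CT_{k-2}$.'' It checks out, but notice that once you substitute the closed form, the terms $n_C(k-1,l)$ and $(k-1)\,n_C(k-2,l-1)$ are \emph{equal} and cancel, leaving you with exactly the paper's recurrence. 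That cancellation is not an accident: it encodes the fact that in $CT_{k-1}$, the level-$l$ clusters are precisely the children of internal level-$(l-1)$ clusters, each having $k-1$ of them. The paper's observation absorbs the old/new bookkeeping into the single count ``$k+1$ children per non-root cluster,'' which is why it avoids the extra base case at $k=2$ and the three boundary checks you (correctly) flag as needed. Both proofs are valid; yours just carries around two terms that the construction guarantees will annihilate each other.
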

\begin{proof}
	For $l=1$, $\frac{k!}{(k-l+1)!}\cdot (k-l+2) = k+1$.
	We proceed by induction on $k$. 
	For $k=1$, we have one cluster on the zeroth level, two clusters on the first level, and one cluster on the second level, 
	cf.\ Figure~\ref{fig:ct1} (p.~\pageref{fig:ct1}). 
	Since $n_C(1,0)=1$, $n_C(1,1)=\frac{1!}{1!}\cdot 2=2$,
	$n_C(1,2)=\frac{1!}{0!}\cdot 1=1$, 
	and $n_C(1,l)=0$ for all $l>2$, the claim holds for the base case, 
	i.e., for $k=1$ and all $l$. %, the number of clusters on a level $l$ in $CT_k$ is given by the formula stated above.
	
	Therefore, assume that the claim holds for some $k$, i.e., in $CT_k$, the number of clusters on level $l$ is given by $n_C(k,l)$. 
	Due to Definition~\ref{def:ct-growth} (p.~\pageref{def:ct-growth}), 
	which enforces that all new leaf clusters lie on the level above their parent clusters, 
	the number of clusters on the zeroth level always remains one.
	This level-$0$ cluster is an internal cluster in $CT_1$ and hence also in $CT_k$. 
	By growth rule~\ref{enum:ctgrowth:rule1}, exactly one cluster is added on the level above the zeroth level when transitioning to $CT_{k+1}$. 
	Therefore, if $CT_k$ has $k+1$ clusters on the first level, $CT_{k+1}$ has $k+2$ clusters on the first level.
	Furthermore, new clusters are added only on levels at most one above already existing clusters, so if $CT_k$ has no clusters on levels above $k+1$, 
	$CT_{k+1}$ has no clusters on levels above $k+2$. 
	With these observations, for $l=0$, $l=1$, and $l>k+1$, the number of clusters on level $l$ in $CT_{k+1}$ is given by the formula stated in the theorem.
	
	For the remaining levels, i.e., levels $l$ with $1< l\leq k+1$, observe that by the growth rules of $CT_k$, 
	all clusters present on level $l$ in $CT_k$ are guaranteed to be internal clusters in $CT_{k+1}$, with $k+1$ children on level $l+1$, 
	and these child clusters are the only clusters on level $l+1$. 
	Thus, for $k, l\geq 1$, the number of clusters satisfies the recurrence relation
	\begin{align*}
	n_C(k+1,l+1)=(k+1) \cdot n_C(k,l).
	\end{align*}
	By the inductive hypothesis, we have that $n_C(k,l)=\frac{k!}{(k-l+1)!}\cdot (k-l+2)$ %$l$ with 
	for $1\leq l\leq k+1$,~so
	%Therefore, 
	\begin{align*}
	n_C(k+1,l+1)
	%(k+1)\cdot n_C(k,l)= 
	=\frac{(k+1)! (k-l+2)}{(k-l+1)!}\phantom{,}\\ %\phantom{~.}\\
	=\frac{(k+1)!}{((k+1)-(l+1)+1)!}\cdot ((k+1)-(l+1)+2),
	\end{align*}
	as required.
	As this verifies the claimed expression for $n_C(k+1,l)$ for all $1<l\leq k+2$,
	this completes the inductive step, concluding the proof.
\end{proof}

This allows us to express the order of $G_k$ in terms of $n_0:=|C_0|$, $k$, and $\beta$:

\begin{lemma}[$n$ in terms of $n_0$]\label{lem:largec0}\ \\ 
	In $G_k$, $n< n_0\big(\frac{\beta}{\beta-(k+1)}\big)$ and $n-n_0< n_0\cdot\frac{2(k+1)}{\beta}$.
\end{lemma}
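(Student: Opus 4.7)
The plan is to express $n$ as a sum over the levels of $CT_k$, bound the per-level contributions, and sum a geometric series using the fact that $\beta \geq 2(k+1)$.

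First, I would combine Observation~\ref{obs:orderconstraints} with Theorem~\ref{thm:ct-nclusters}. By parts~2 and~3 of Observation~\ref{obs:orderconstraints}, every cluster on level $l$ has exactly $n_0/\beta^l$ nodes (in the smallest $G_k$; in general the scaling by $\beta$ per level is forced by the labels $(\beta^i,\beta^{i+1})$). Using Theorem~\ref{thm:ct-nclusters} to count clusters per level, this gives
\begin{equation*}
n \;=\; n_0 \;+\; \sum_{l=1}^{k+1} n_C(k,l)\cdot \frac{n_0}{\beta^l}.
\end{equation*}

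Next, I would bound $n_C(k,l)$ crudely but adequately. For $1\leq l \leq k+1$, the expression $\frac{k!}{(k-l+1)!}(k-l+2) = k(k-1)\cdots(k-l+2)\cdot(k-l+2)$ is a product of $l$ factors, each at most $k+1$, so $n_C(k,l) \leq (k+1)^l$. Hence
\begin{equation*}
n - n_0 \;\leq\; n_0 \sum_{l=1}^{k+1}\left(\frac{k+1}{\beta}\right)^{\!l}.
\end{equation*}

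Since $\beta \geq 2(k+1)$ implies $(k+1)/\beta \leq 1/2 < 1$, the partial geometric sum is strictly bounded by the infinite one, yielding
\begin{equation*}
n - n_0 \;<\; n_0 \cdot \frac{(k+1)/\beta}{1-(k+1)/\beta} \;=\; n_0\cdot\frac{k+1}{\beta - (k+1)}.
\end{equation*}
Adding $n_0$ to both sides rearranges directly to the first claimed bound $n < n_0 \cdot \frac{\beta}{\beta-(k+1)}$. For the second, I would use $\beta \geq 2(k+1)$ once more: $\beta - (k+1) \geq \beta/2$, so $\frac{k+1}{\beta-(k+1)} \leq \frac{2(k+1)}{\beta}$, and the bound $n - n_0 < n_0\cdot \frac{2(k+1)}{\beta}$ follows immediately.

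There is no real obstacle here: the only subtlety is the crude bound $n_C(k,l) \leq (k+1)^l$, which might seem wasteful but is exactly tight enough because the assumption $\beta \geq 2(k+1)$ leaves ample room in the geometric summation. The strict inequality is automatic since the sum is truncated at $l=k+1$ rather than taken to infinity.
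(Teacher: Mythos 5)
Your proof is correct and follows essentially the same route as the paper: decompose $n$ by level using Theorem~\ref{thm:ct-nclusters} and Observation~\ref{obs:orderconstraints}, bound $n_C(k,l)\leq(k+1)^l$, and compare against the infinite geometric series. The only cosmetic difference is where strictness is first introduced (you obtain it solely from truncating the geometric sum, the paper additionally uses a strict step in the $(k+1)^l$ bound), which is immaterial.
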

\begin{proof}
	By Theorem~\ref{thm:ct-nclusters} (p.~\pageref{thm:ct-nclusters}) and Observation~\ref{obs:orderconstraints} (p.~\pageref{obs:orderconstraints}), 
	the number of nodes on level $l\in\mathbb{N}_0$ in $G_k$ as a function of $\beta$, $k$, $l$, and $n_0$ is 
	$n(\beta,k,l, n_0)=n_C(k,l)\cdot n_0\cdot \beta^{-l}$. %~.
	Because $n=\sum_{l=0}^{\infty}n(\beta,k,l,n_0)=\sum_{l=0}^{k+1}n(\beta,k,l,n_0)$, we get
	\begin{align*}
	n=n_0+n_0\cdot\sum_{l=1}^{k+1}\frac{k!}{(k-l+1)!}\cdot(k-l+2)\cdot \beta^{-l}
	\phantom{,}\\\leq n_0+n_0\cdot\sum_{l=1}^{k+1}\frac{k!(k+1)}{(k-l+1)!}\cdot\frac{1}{\beta^l}\phantom{,}\\
	< n_0+n_0\cdot\sum_{l=1}^{k+1}\frac{(k+1)^l}{\beta^l}
	= n_0\sum_{l=0}^{k+1}\bigg(\frac{k+1}{\beta}\bigg)^l
	\phantom{,}\\< n_0\sum_{l=0}^{\infty}\bigg(\frac{k+1}{\beta}\bigg)^l = n_0\cdot \frac{\beta}{\beta - (k+1)},
	\end{align*}
	where the last step uses that $\beta>k+1$. 
	Using our requirement that $\beta\geq 2(k+1)$, we obtain
	\begin{align*}
	n-n_0<n_0 \cdot \left(\frac{\beta}{\beta-(k+1)}-1\right)\phantom{,}\\
	=n_0\cdot\frac{k+1}{\beta-(k+1)}\le n_0\cdot\frac{2(k+1)}{\beta},%\qedhere
	\end{align*}
	as desired.
\end{proof}
Finally, the construction of $CT_k$ dictates the largest degree $\Delta$ of a node in $G_k$:

\begin{lemma}[Largest degree $\Delta$ of $G_k$]\label{lem:delta-gk}
	The largest degree of a node in $G_k$ is $\Delta = \beta^{k+1}$.
\end{lemma}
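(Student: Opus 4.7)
The plan is to prove the claim by first establishing, via induction on $k$, a precise characterization of the outgoing labels of each cluster in $CT_k$, and then computing node degrees as sums of these labels.

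More concretely, I will show by induction on $k$ that in $CT_k$ (i) every internal cluster has exactly the set of outgoing labels $\{\beta^0, \beta^1, \ldots, \beta^k\}$, one edge per value, and (ii) every leaf cluster has exactly one outgoing label, which is $\beta^q$ for some $q \in \{1, 2, \ldots, k+1\}$. The base case $k=1$ can be read off directly from Definition~\ref{def:ct-trunk} (p.~\pageref{def:ct-trunk}): $C_0$ is internal with outgoing labels $\{\beta^0, \beta^1\}$, $C_1$ is internal with outgoing labels $\{\beta^0, \beta^1\}$, and $C_2, C_3$ are leaves with outgoing labels $\beta^2$ and $\beta^1$, respectively. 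For the inductive step, I would track each cluster through the two growth rules of Definition~\ref{def:ct-growth} (p.~\pageref{def:ct-growth}): every old internal cluster acquires one additional outgoing label $\beta^k$ by rule~\ref{enum:ctgrowth:rule1}, extending its outgoing labels from $\{\beta^0, \ldots, \beta^{k-1}\}$ to $\{\beta^0, \ldots, \beta^k\}$; every old leaf cluster with outgoing label $\beta^q$ becomes internal and gains the missing outgoing labels $\{\beta^p : p \in [k]_0 \setminus \{q\}\}$ by rule~\ref{enum:ctgrowth:rule2}, so its outgoing labels become $\{\beta^0, \ldots, \beta^k\}$ as well; and the freshly attached leaves carry exactly one outgoing label in $\{\beta^1, \ldots, \beta^{k+1}\}$ by inspection of the growth rules.

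Given this characterization, the degree of a node in an internal cluster of $CT_k$ is $\sum_{i=0}^{k} \beta^i = (\beta^{k+1} - 1)/(\beta - 1)$, while the degree of a node in a leaf cluster is $\beta^q$ for some $q \in \{1, \ldots, k+1\}$. Since $\beta \geq 2(k+1) \geq 2$, one checks that $(\beta^{k+1} - 1)/(\beta - 1) < \beta^{k+1}$, so the maximum is attained at some leaf. Rule~\ref{enum:ctgrowth:rule1} applied in the $k$-th growth step attaches to each internal cluster of $CT_{k-1}$ a new leaf with outgoing label $\beta^{k+1}$, so such a leaf exists and its nodes have degree exactly $\beta^{k+1}$; hence $\Delta = \beta^{k+1}$.

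I do not expect any real obstacle here: the statement is a bookkeeping corollary of Definition~\ref{def:ct-growth} (p.~\pageref{def:ct-growth}), and the only mild subtlety is making sure the induction properly handles both what happens to \emph{existing} leaves (they become internal and acquire all remaining labels) and what happens to \emph{newly attached} leaves in both rules, so that the inductive invariant on outgoing-label sets is preserved. Once the label characterization is in place, the comparison of internal versus leaf degrees is a one-line geometric-series estimate using $\beta \geq 2$.
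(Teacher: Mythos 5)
Your proposal is correct and follows essentially the same approach as the paper: characterize the outgoing labels of internal and leaf clusters, observe that internal-cluster nodes have degree $\sum_{i=0}^{k}\beta^i$ and leaf-cluster nodes have degree at most $\beta^{k+1}$, and compare using $\beta\ge 2$. The only difference is that you make explicit the inductive bookkeeping behind the paper's ``by construction,'' which is a reasonable expansion but not a different argument.
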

\begin{proof}
	By construction, all nodes in internal clusters have degree $\sum_{i=0}^{k}\beta^i$, and the largest degree of nodes in leaf clusters is $\beta^{k+1}$.
	As $\beta\geq 2(k+1) > 2$, $\max\{\sum_{i=0}^{k}\beta^i, \beta^{k+1}\}=\beta^{k+1}$.
\end{proof}
% !TeX spellcheck = en_US
% !TeX root = cluster_trees_sosa_format.tex
\section{Ensuring High Girth}\label{sec:micro_level}

To construct $G_k$ with high girth, we rely on special \emph{graph homomorphisms} called \emph{graph lifts}:

\begin{definition}[Graph homomorphism]\label{def:graph-homomorphism}
	Graph $G_1$ is \emph{homomorphic} to graph $G_2$ if there is a function $\phi\colon V(G_1)\rightarrow V(G_2)$ s.t.\ 
	$\{v,w\}\in E(G_1) \Rightarrow \{\phi(v), \phi(w)\} \in E(G_2)$ (i.e., $\phi$ is adjacency-preserving);
	$\phi$ is called a \emph{homomorphism}.
\end{definition} 

\begin{definition}[Graph lift]\label{def:graph-lift}
	Graph $G_1$ is a \emph{lift} of graph $G_2$ if there is a \emph{surjective} homomorphism $\phi: V(G_1)\rightarrow V(G_2)$~s.t.\ 
	$\forall v \in V(G_1)\colon
	\{v,w\}\in E(G_1) \Leftrightarrow \{\phi\vert_{\Gamma(v)}(v), \phi\vert_{\Gamma(v)}(w)\} \in E(G_2)$ (i.e., $\phi$ is \emph{locally} bijective);
	$\phi$ is called a \emph{covering map}.
\end{definition}

\begin{figure*}[t]
	\centering
	% !TeX spellcheck = en_US
\begin{tikzpicture}
[shorten >=1pt,node distance=2cm and 4cm,on grid,auto] 
\node[] (Gkprime) [align=center] {$G'_k$};
\node[right of=Gkprime] (Hprime) [align=center] {$H'_k$};
\node[right of=Hprime] (H) [align=center] {$H_k$};
\node[below of=Gkprime] (Gk) [align=center] {$G_k$};
\node[right of=Gk] (Htilde) [align=center] {$\tilde{H}_k$};
\node[right=1cm of Gkprime] (subsetone) [align=center] {$\subseteq$};
\node[right=1cm of Gk] (subsettwo) [align=center] {$\subseteq$};
\node[right=2cm of H] (f) [align=center] {};
\node[right=1.875cm of f] (properties) [align=left] {};
\node[below=0cm of f] (phi) [align=center] {$\phi = \psi_1|_{V(G_k)}$};
\node[right=3.75cm of phi, text width=5cm] (phiprop) [align=left] {Covering map $G_k\rightarrow G'_k$};
\node[below=0.5cm of phi] (psione) [align=center] {$\psi_1$};
\node[below=0.5cm of phiprop, text width=5cm] (psioneprop) [align=left] {Covering map $\tilde{H}_k\rightarrow H'_k$};
\node[below=0.5cm of psione] (psitwo) [align=center] {$\psi_2$};
\node[below=0.5cm of psioneprop, text width=5cm] (psitwoprop) [align=left] {Covering map $\tilde{H}_k\rightarrow H_k$};
\path[->]
(Htilde) edge [align=center,left] node {$\psi_1$} (Hprime)
(Htilde) edge [align=center,right] node {$~~~~~\psi_2$} (H)
(Gk) edge [align=center,right] node {$\phi$} (Gkprime)
;
\end{tikzpicture}
	\caption[Existence of CT Graphs with High Girth: Setup]{Setup used to establish the existence of CT graphs with high girth.}\label{fig:ct-graphs-highgirth-existence}
\end{figure*}
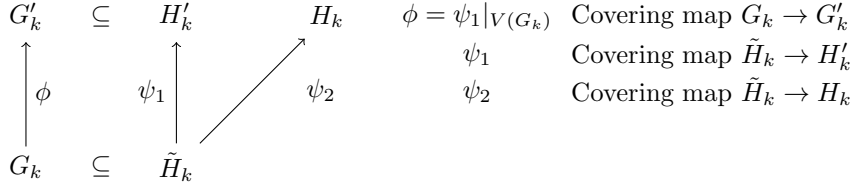
\begin{table*}[t]
	\small
	\centering
	\begin{tabular}{p{0.06\linewidth}p{0.6\linewidth}p{0.25\linewidth}}\hline
		\textbf{Graph}&\textbf{Properties}&\textbf{Existence Proof}\\[3pt]\hline
		$G'_k$&CT graph, parametrized by $\beta$, $\beta^{\mathcal{O}(k)}$ nodes, $\Delta=\beta^{k+1}$&Definition~\ref{def:ct-lowgirth}, Lemma~\ref{lem:ct-lowgirth-structure}\\[3pt]
		$H'_k$&$\Delta$-regular, supergraph of $G'_k$, $\beta^{\mathcal{O}(k)}$ nodes&Lemma~\ref{lem:Hkprime}, Corollary~\ref{cor:Hk}\\[3pt]
		$H_k$&$\Delta$-regular, girth $g=2k+1$, $\mathcal{O}(\Delta^{2k})\subseteq \beta^{\mathcal{O}(k^2)}$ nodes&Lemma~\ref{lem:Hk}, Corollary~\ref{cor:Hk}\\[3pt]
		$\tilde{H}_k$&$\Delta$-regular, lift of $H'_k$ and $H_k$, girth $g\geq 2k+1$, $\beta^{\mathcal{O}(k^2)}$ nodes&Lemma~\ref{lem:Hktilde}, Corollary~\ref{cor:Hktilde}\\[3pt]
		$G_k$&CT graph, subgraph of $\tilde{H}_k$, girth $g\geq 2k+1$, $\beta^{\mathcal{O}(k^2)}$ nodes&Theorem~\ref{thm:Gk-existential}\\[3pt]\hline
	\end{tabular}
	\caption[Existence of CT Graphs with High Girth: Proof Overview]{%
		Proof overview for establishing the existence of CT graphs with high girth.}\label{tab:ct-graphs-highgirth-existence}
\end{table*}

As sketched by Kuhn et al.\ \cite{kuhn2016}, 
we establish the existence of a CT graph $G_k$ with girth $g\geq 2k+1$ and $\mathcal{O}(\beta^{2k^2+4k+1})$ nodes 
using the setup illustrated in Figure~\ref{fig:ct-graphs-highgirth-existence} (p.~\pageref{fig:ct-graphs-highgirth-existence}). 
The intuition of this setup is that we obtain $G_k$ as a subgraph of $\tilde{H}_k$, which is a common lift of a high-girth graph $H_k$ and a graph $H'_k$ that is a supergraph of a low-girth CT graph $G'_k$.
Since taking lifts and subgraphs cannot decrease the girth,\footnote{%
	For subgraphs, this is obvious. 
	For lifts, consider a cycle of length $l$ in the lift and observe that a covering map must map it to a subgraph containing a cycle of length at most $l$.
}
$G_k$ then has large girth,
and because $G_k$ is a lift of a CT graph $G'_k$,
it is a CT graph as well.
Table~\ref{tab:ct-graphs-highgirth-existence} (p.~\pageref{tab:ct-graphs-highgirth-existence}) gives an overview of the graphs involved in our setup, 
along with the properties we seek to establish. % and the reference to their existence proofs.

\subsection{Low-Girth Cluster Tree Graphs}\label{micro:low-girth}

We can easily design low-girth CT graphs by plugging together complete bipartite graphs.

\begin{definition}[$G'_k$ from complete bipartite graphs]\label{def:ct-lowgirth}
	For $k\in\mathbb{N}$ and a parameter $\beta\in\mathbb{N}$, let $CT_k$ be the CT skeleton parametrized by $\beta$. 
	We construct a CT graph $G'_k$ as follows:
	\begin{enumerate}
		\item For cluster $C$ on level $l\in [k+1]_0$ in $CT_k$, add $\beta^{2k-l+1}$ nodes $v$ with $C(v)=C$ to $G_k$.
		\item For clusters $C$ and $C'$ with $\{(C,\beta^x),(C',\beta^{x+1})\} \in E(CT_k)$, 
		connect the nodes representing these clusters in $G_k$ using $\frac{|C|}{\beta^{x+1}}$ copies of $K_{\beta^{x},\beta^{x+1}}$,
		the complete bipartite graph on $A\dot{\cup} B$ with $|A|=\beta^x$ and $|B|=\beta^{x+1}$.
	\end{enumerate}
\end{definition}

\begin{lemma}[CT graph structure]\label{lem:ct-lowgirth-structure}
	Graphs following Definition~\ref{def:ct-lowgirth} (p.~\pageref{def:ct-lowgirth}) are CT graphs.
\end{lemma}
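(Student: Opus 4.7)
The plan is to verify each of the three requirements of Definition~\ref{def:ct-skeleton} (p.~\pageref{def:ct-skeleton}) directly from the construction in Definition~\ref{def:ct-lowgirth} (p.~\pageref{def:ct-lowgirth}). Independence of each cluster is immediate: every inserted edge belongs to a complete bipartite graph that connects two \emph{distinct} clusters, with all participating nodes of one cluster placed on one side of the bipartition, so no edge can have both endpoints in the same cluster. The ``no further nodes or edges'' requirement is similarly immediate, as the construction explicitly introduces only the prescribed vertices and the bipartite edges between adjacent clusters.

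The substance of the proof is checking biregularity. For a skeleton edge $\{(C,\beta^x),(C',\beta^{x+1})\}$ I would place $C$'s nodes on the size-$\beta^{x+1}$ side and $C'$'s nodes on the size-$\beta^x$ side of each copy of $K_{\beta^x,\beta^{x+1}}$, so that each participating node of $C$ has exactly $\beta^x$ neighbors in $C'$ and each participating node of $C'$ has exactly $\beta^{x+1}$ neighbors in $C$. It then remains to verify that the prescribed number $|C|/\beta^{x+1}$ of copies is a positive integer and that these disjoint copies use every node of both clusters exactly once. By Definitions~\ref{def:ct-trunk} (p.~\pageref{def:ct-trunk}) and~\ref{def:ct-growth} (p.~\pageref{def:ct-growth}), the parent always carries the smaller outgoing label, so $C$ lies at some level $l$ and $C'$ at level $l+1$, yielding $|C|=\beta^{2k-l+1}$ and $|C'|=\beta^{2k-l}=|C|/\beta$. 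Moreover, the outgoing labels at any internal cluster form the set $\{\beta^0,\dots,\beta^k\}$ and internal clusters sit at levels $l\le k$, so $x\le k\le 2k-l$ and thus $|C|/\beta^{x+1}=\beta^{2k-l-x}\in\mathbb{N}$. Summing across copies consumes $(|C|/\beta^{x+1})\cdot\beta^{x+1}=|C|$ nodes of $C$ and $(|C|/\beta^{x+1})\cdot\beta^{x}=|C|/\beta=|C'|$ nodes of $C'$, so every node of both clusters participates in exactly one copy.

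The only ingredient that requires any thought is the divisibility check, which reduces to arithmetic on cluster sizes and outgoing-label exponents together with the bound $x\le k$ on outgoing labels from internal clusters and $l\le k$ on levels of internal clusters. No induction on $k$ is needed, and I expect no real obstacles beyond careful bookkeeping of which cluster lands on which side of the bipartition.
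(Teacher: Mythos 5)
Your proof is correct and takes essentially the same approach as the paper: counting that $|C|/\beta^{x+1}$ disjoint copies of $K_{\beta^x,\beta^{x+1}}$ consume exactly $|C|$ nodes with degree $\beta^x$ and $|C'|=|C|/\beta$ nodes with degree $\beta^{x+1}$, matching the biregularity requirement of the skeleton edge. You additionally make the divisibility check ($x\le k$, $l\le k$, hence $|C|/\beta^{x+1}=\beta^{2k-l-x}\in\mathbb{N}$) and the trivial independence / no-extra-edges requirements explicit, which the paper leaves implicit.
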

\begin{proof}
	As can be easily verified from the definition of $CT_1$ and the growth rules,
	in $CT_k$, all edges are of the form $\{(C,\beta^x),(C',\beta^{x+1})\}$ for some $x\in [k+1]_0$,
	where $C$ is on level $\ell\in [k]_0$ and $C'$ on level $\ell+1$.
	By Definition~\ref{def:ct-lowgirth} (p.~\pageref{def:ct-lowgirth}),
	we thus have that $|C'|=\frac{|C|}{\beta}$.
	Now, $\frac{|C|}{\beta^{x+1}}$ copies of $K_{\beta^{x},\beta^{x+1}}$ contain $\frac{|C|}{\beta^{x+1}}\cdot \beta^{x+1}=|C|$ nodes with degree $\beta^x$ 
	and $\frac{|C|}{\beta^{x+1}}\cdot \beta^{x}=\frac{|C|}{\beta}=|C'|$ nodes with degree $\beta^{x+1}$. 
	Hence, using $\frac{|C|}{\beta^{x+1}}$ copies of $K_{\beta^{x},\beta^{x+1}}$, 
	we exactly fulfill the requirements imposed by $CT_k$ on the connectivity between $C$ and $C'$.
\end{proof}

\subsection{Regular Graphs with Desirable Properties}\label{micro:regular}

The construction from Definition~\ref{def:ct-lowgirth} (p.~\pageref{def:ct-lowgirth}) results in CT graphs of girth four.
It remains 
%We now set up the machinery needed 
to lift these low-girth graphs to high girth.
First, we embed $G'_k$ into a $\Delta$-regular graph.

\begin{lemma}[$\Delta$-regular supergraphs of general graphs]\label{lem:Hkprime}	
	Let $G$ be a simple graph with maximum degree $\Delta$.
	Then there exists a $\Delta$-regular supergraph $H$ of $G$ with $|V(H)|< |V(G)|+4\Delta$.\footnote{%
		A bound of $|V(H)|\leq |V(G)|+\Delta+2$ that is optimal in the worst case is shown in \cite{akiyama1983}.
	}
\end{lemma}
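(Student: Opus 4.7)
My plan is a two-phase construction that adds at most $O(\Delta)$ new vertices. In Phase~1 (``internal fill''), I greedily close deficiencies inside $V(G)$: while there exist two vertices $u, v \in V(G)$ with $d(u), d(v) < \Delta$ and $\{u, v\} \notin E(G)$, add the edge $\{u, v\}$. This preserves the maximum degree $\Delta$, is well-defined, and terminates with a simple graph $G'$ in which the set $D$ of vertices of degree below $\Delta$ forms a clique---otherwise another edge could have been added. In particular $|D| \leq \Delta + 1$, and every $v \in D$ is adjacent to the other $|D| - 1$ vertices of $D$, so the residual deficiency satisfies $f(v) := \Delta - d_{G'}(v) \leq \Delta - |D| + 1$. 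Summing yields a total residual deficiency of at most $|D|(\Delta - |D| + 1) \leq (\Delta + 1)^2 / 4$.

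In Phase~2, I attach a $\Delta$-regular gadget $K$ on at most $2\Delta + 2$ fresh vertices (for instance $K_{\Delta,\Delta}$, with a minor adjustment if parity demands it) and absorb the residual deficiency via \emph{edge swaps}: to eliminate two units of deficiency at vertices $v, v' \in D$ (allowing $v = v'$), pick an edge $\{a, b\}$ of the current gadget with $a$ not yet adjacent to $v$ and $b$ not yet adjacent to $v'$, delete $\{a, b\}$, and insert $\{v, a\}$ and $\{v', b\}$. Each swap preserves $d(a) = d(b) = \Delta$ (their $\Delta$-th edge simply moves from inside $K$ to $D$) and reduces the total deficiency by exactly two, consuming one of the $\Delta^2$ edges of $K_{\Delta,\Delta}$. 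Since the residual deficiency after Phase~1 is strictly less than $\Delta^2$ for $\Delta \geq 2$, the edge budget suffices.

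The hard part will be verifying that a usable swap edge always exists throughout Phase~2: at every moment each $v \in D$ has accumulated at most $\Delta$ neighbors in $V(K)$ (its final $H$-degree being $\Delta$), hence at least $\Delta$ non-neighbors among the $2\Delta$ vertices of $K$; a short counting argument exploiting the bipartite structure of $K_{\Delta,\Delta}$ rules out that the neighborhoods of $v$ and $v'$ simultaneously cover every remaining gadget edge, provided swaps are scheduled in a balanced fashion across the two sides of the bipartition. A single extra vertex suffices to fix an odd total residual deficiency (attach it to the lone odd-deficient vertex and close it into $K$ with one further swap), and the degenerate cases $\Delta \leq 1$ are immediate. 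Altogether $|V(H)| \leq |V(G)| + 2\Delta + 2 < |V(G)| + 4\Delta$, establishing the claim.
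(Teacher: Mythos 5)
Your high-level strategy matches the paper's: Phase~1 is exactly the same greedy internal fill, and Phase~2 also attaches a $K_{\Delta,\Delta}$ gadget and absorbs deficiency via edge swaps. The difference---and the gap---is in how you control Phase~2.

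You explicitly flag the hard part yourself: ``verifying that a usable swap edge always exists throughout Phase~2.'' You then gesture at ``a short counting argument exploiting the bipartite structure \dots provided swaps are scheduled in a balanced fashion,'' but neither the counting argument nor the scheduling rule is given. That is not a detail to be filled in later; it is the entire content of the lemma once Phase~1 is done, since without it you cannot rule out reaching a state where $v$'s neighbors in $K$ block every edge on one side and $v'$'s block the rest. The paper sidesteps this issue structurally: it decomposes $E(K_{\Delta,\Delta})$ into $\Delta$ disjoint perfect matchings $M_i = \{\{l_x, r_y\} : (x-y) \bmod \Delta = i\}$ and assigns each $v \in D$ its \emph{own} matching. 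Swaps for $v$ only ever consume edges of $M_v$, so different deficient vertices never compete for an edge, and since $|M_v| = \Delta$ while $v$ needs at most $\lfloor (\Delta - 1)/2 \rfloor < \Delta$ swaps, availability is automatic. Your proposal would become correct and clean if you adopted this disjoint-matching assignment instead of an unspecified ``balanced'' schedule.

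Two smaller points. First, the clique bound should be $|D| \le \Delta$, not $\Delta + 1$: if $|D| \ge \Delta+1$, each member of the clique already has at least $\Delta$ neighbors inside $D$ and would not be deficient. Second, your parity fix (``a single extra vertex suffices'') is plausible in spirit but under-argued; note that the total deficiency is odd only when $\Delta$ is odd (since total deficiency $\equiv |V(G')|\cdot\Delta \pmod 2$), which is why the new vertex can then be closed out by an even number of further swaps. The paper instead attaches a second gadget $K_{\Delta,\Delta-1}$ plus a matching, paying $4\Delta-1$ vertices in total; your tighter count of $2\Delta+2$ would be a mild improvement if the swap-availability argument were actually supplied.
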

\begin{proof}
	Let $G=(V,E)$ with maximum degree $\Delta$.
	We modify $G$ to form $H$ as follows. 
	While there are nodes $v, w \in V$ with degree $\delta(v)<\Delta$, $\delta(w)< \Delta$, and $\{v,w\}\notin E$, 
	we add $\{v,w\}$ to $E$. 
	Let $D$ be the set of remaining nodes with degree less than $\Delta$. 
	By construction, we know that the nodes in $D$ form a clique of size at most $\Delta$.

	Now add a complete bipartite graph $K_{\Delta,\Delta}$ with node bipartition $\{l_i \mid i \in [\Delta]\}\dot{\cup}\{r_i \mid i \in [\Delta]\}$ 
	and define $\Delta$ disjoint perfect matchings $M_i := \{\{l_x,r_y\} \mid (x-y)\bmod \Delta = i\}$. %for $i\in [\Delta-1]_0$.	
	%By Hall's Theorem \cite{konig1916,hall1935}, $K_{\Delta,\Delta}$ has $\Delta$ disjoint perfect matchings.
	Assign to each $v\in D$ a unique such matching,
	remove the edges containing $l_i$ for $i\in[\lfloor(\Delta-\delta(v))/2\rfloor]$,
	and connect all endpoints of these edges to $v$.
	Afterwards, nodes in $D$ are missing at most one edge,
	while all other nodes have degree $\Delta$.
	Next, arbitrarily match the nodes still missing edges.
	For each such pair $(v,w)\in D^2$,
	choose the remaining edge from the matching of $v$ that contains $l_{\Delta}$, 
	%in $K_{\Delta,\Delta}$ ($\lceil(\Delta-\delta(v))/2\rceil>0$ remain),
	remove it, connect $w$ to $l_{\Delta}$, and $v$ to the other endpoint.

	After this step, at most one node does not have degree $\Delta$ yet and is missing at most one edge.
	If this case occurs,
	$\Delta$ must be odd (otherwise, $v$ would be the only node with odd degree, while the number of nodes with odd degree in any graph must be even).
	We complete the procedure by adding a copy of $K_{\Delta,\Delta-1}$,
	connecting $v$ to one of the $\Delta$ nodes of degree $\Delta-1$ in $K_{\Delta,\Delta-1}$,
	and adding a perfect matching between the remaining nodes of degree $\Delta-1$ (whose number is even).
	The resulting graph $H$ is a $\Delta$-regular supergraph of $G$ with $|V(H)|\leq |V(G)|+4\Delta -1$ nodes.
\end{proof}

Next, we ensure that $\Delta$-regular graphs of girth $2k+1$ without too many nodes exist:

\begin{lemma}[$\Delta$-regular graphs with prescribed girth and order \cite{erdos1963}]\label{lem:Hk}
	For $2 \leq \Delta \in\mathbb{N}$ and $3\leq g \in \mathbb{N}$, 
	there exist $\Delta$-regular graphs with girth at least $g$ and $2m$ nodes for each 
	$m \geq 2\cdot \sum_{i=0}^{g-2}(\Delta-1)^i$.
\end{lemma}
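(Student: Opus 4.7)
The plan is to deploy the classical edge-swap argument of Erdős and Sachs. First, I would note that the family $\mathcal{G}$ of $\Delta$-regular simple graphs on $2m$ vertices is nonempty: since the hypothesis forces $m \geq 2\sum_{i=0}^{g-2}(\Delta-1)^i \geq 2\Delta$, we have $2m \geq 4\Delta > \Delta + 1$, and $2m \cdot \Delta$ is even, so a standard construction (e.g., a suitable circulant graph, or a disjoint union of $K_{\Delta+1}$'s together with a small adjustment) produces such a graph. Among all graphs in $\mathcal{G}$, select $G^\ast$ minimizing the number $s(G)$ of edges that lie on at least one cycle of length strictly less than $g$; the goal is to show that $s(G^\ast) = 0$.

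Assume for contradiction that $s(G^\ast) > 0$, and fix an edge $e = \{a,b\}$ lying on some short cycle. The idea is to exhibit a second edge $f = \{c,d\}$ for which the two-swap $\{e, f\} \mapsto \{\{a,c\}, \{b,d\}\}$ produces a graph $G' \in \mathcal{G}$ with $s(G') < s(G^\ast)$, contradicting minimality. Sufficient conditions on $f$ are that $a,b,c,d$ are pairwise distinct, neither $\{a,c\}$ nor $\{b,d\}$ is already an edge of $G^\ast$, and both $c$ and $d$ lie outside the combined ball $\Gamma_{G^\ast}^{g-2}(a) \cup \Gamma_{G^\ast}^{g-2}(b)$. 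Under these conditions the short cycle through $e$ is destroyed, while the distance bound on the new edges rules out any new short cycle through them. To prove such an $f$ exists, invoke the Moore-type estimate $|\Gamma_{G^\ast}^{g-2}(v)| \leq 1 + \Delta \sum_{i=0}^{g-3}(\Delta-1)^i$, so the forbidden vertex set has size at most $2 \sum_{i=0}^{g-2}(\Delta-1)^i \leq m$ by hypothesis. The total degree of this set is then at most $m \Delta$, which is a strict minority share of the $m\Delta$ edges of $G^\ast$ once we account for double-counting of edges internal to the forbidden set, so some edge $f$ has both endpoints outside it.

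The main obstacle is the bookkeeping around the swap: one must verify that the distance threshold genuinely precludes \emph{every} new short cycle (including cycles that traverse both new edges $\{a,c\}$ and $\{b,d\}$ at once, which calls for a combined distance argument), and that the finitely many pre-existing edges and coincidences of vertex labels can be dodged. The freedom to exchange the pairing (using $\{\{a,d\}, \{b,c\}\}$ instead of $\{\{a,c\}, \{b,d\}\}$) takes care of a pre-existing edge, and taking the threshold slightly larger than $g-2$ if needed absorbs the remaining slack into the constants of the Moore bound; the precise form of the hypothesis $m \geq 2\sum_{i=0}^{g-2}(\Delta-1)^i$ is calibrated exactly so that this program goes through.
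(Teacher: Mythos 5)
Your proposal takes a genuinely different route from the paper's. You work directly with $\Delta$-regular graphs on $2m$ vertices and minimize the number $s(G)$ of edges on short cycles; the paper instead does induction on $\Delta$: given the claim for $\Delta$, it considers all graphs on $2m$ vertices with degrees in $\{\Delta,\Delta+1\}$ and girth already at least $g$, maximizes the number of edges, and argues the extremal graph is $(\Delta+1)$-regular. That difference is not cosmetic, and it is exactly where your argument has a gap you flag but do not close.

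The problem is the cycle that uses both new edges $\{a,c\}$ and $\{b,d\}$. Such a cycle decomposes as $a\!-\!c\!-\!P_1\!-\!d\!-\!b\!-\!P_2\!-\!a$ with $P_1$ a $c$--$d$ path and $P_2$ a $b$--$a$ path, both inside $G^\ast\setminus\{e,f\}$, and it is short precisely when $|P_1|+|P_2|\le g-3$. You control $d_{G^\ast}(a,c)$, $d_{G^\ast}(b,d)$, etc., via the ball constraint, but you have no control over $|P_1|$ or $|P_2|$: $e$ lies on a short cycle by assumption, so $|P_2|$ can be as small as $g-2$; and if the chosen $f$ also happens to lie on a short cycle, $|P_1|$ can be as small as $2$, producing a new cycle of length as small as $\ell_e+\ell_f<g$ for, say, $g\ge 7$ and two triangles. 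Worse, the edges along $P_1$ and $P_2$ need not have been counted in $s(G^\ast)$, so $s$ can strictly \emph{increase} under the swap. The hypothesis $m\ge 2\sum_{i=0}^{g-2}(\Delta-1)^i$ is a vertex-counting bound that lets you place $f$ outside the forbidden ball; it says nothing about whether $f$ avoids short cycles, which is the extra property your combined-distance argument actually needs. (A side remark: your chain $|S|\le 2|\Gamma^{g-2}(v)|\le 2\sum_{i=0}^{g-2}(\Delta-1)^i$ is not valid as written, since the single-ball Moore estimate exceeds $\sum_{i=0}^{g-2}(\Delta-1)^i$; the bound on $|S|$ is nevertheless correct via a joint BFS from the \emph{pair} $\{a,b\}$, since $a$ and $b$ are adjacent.)

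The paper's inductive formulation makes this obstruction disappear by construction. There, the edge $\{x,y\}$ being removed lives in a graph $G'$ that \emph{already} has girth at least $g$, so $d_{G'\setminus\{x,y\}}(x,y)\ge g-1$; consequently any cycle through both new edges has length at least $2+(g-1)+1>g$ automatically, with no extra case analysis. So the key property your plan is missing — that the deleted edge is not on a short cycle — is baked in for free by restricting the optimization to high-girth near-regular graphs and maximizing edges, rather than optimizing over all $\Delta$-regular graphs and minimizing short-cycle incidence. To salvage your route you would need to additionally guarantee that $f$ can be chosen off all short cycles and outside the ball simultaneously, which requires a further counting argument you have not supplied.
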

\begin{proof}
% 	We summarize the nonconstructive proof of Erd\H os \cite{erdos1963},
% 	which proceeds by induction on $\Delta$, keeping $g$ fixed.
	Fix $g$.
	The claim trivially holds for $\Delta=2$,
	as any cycle of length $2m$ satisfies the requirements.
	Now assume that the claim holds for some $\Delta\geq 2$. 
	Thus, for any $m\geq 2\cdot\sum_{i=0}^{g-2}\Delta^i$, 
	there exists a $\Delta$-regular graph $G$ with $2m\geq 4\cdot \sum_{i=0}^{g-2}(\Delta-1)^i$ nodes and girth at least $g\geq 3$. 
	Now let $G'$ be a graph satisfying the following conditions:
	\begin{enumerate}[noitemsep]
		\item $|V(G')| = 2m$,
		\item $\Delta\leq \delta(v)\leq \Delta+1$ for all nodes $v \in V(G')$,
		\item $G'$ has girth at least $g$, and
		\item $|E(G')|$ is maximal among all graphs (including $G$) that satisfy the other three conditions.
	\end{enumerate} 
	We show that $G'$ is $(\Delta+1)$-regular.
	To this end, assume towards a contradiction 
	that $G'$ is \emph{not} $(\Delta+1)$-regular.
	Then either $G'$ has exactly one node with degree $\Delta$ or $G'$ has at least two nodes $v'$ and $w'$ with degree $\Delta$.

	The first case cannot occur because it would require $G'$ to have exactly one node of odd degree for $\Delta+1$ even, 
	and exactly $2m-1$ nodes of odd degree for $\Delta+1$ odd, 
	contradicting the fact that in any graph, the number of nodes with odd degree must be even.
	So assume that there are at least two nodes $v'$ and $w'$ with degree $\Delta$ in $G'$. 
	Observe that all nodes of degree $\Delta$ must lie within distance $g-2$ of $v'$ and $w'$,
	i.e., in $N := \Gamma^{g-2}(v')\cap \Gamma^{g-2}(w')$,
	as otherwise we could add an edge to $G'$ without violating the first three properties, 
	contradicting the fourth property. 
	Since
	$|\Gamma^j(v)| \leq \sum_{i=0}^{j}\Delta^i$
	for any node $v\in V(G')$ with $\delta(v) = \Delta$, 
	we have
	$|N|\leq m$,
	and consequently,
	$|N|\leq |V(G')|-|N|$.
	
	Now let $\{x, y\}$ be an edge between two nodes $x, y\in V(G')\setminus N$.
	We know such an edge must exist,
	because otherwise $(\Delta+1)\cdot(|V(G')|-|N|)$ edges would need to run between nodes in $V(G')\setminus N$ and nodes in $N$, 
	which would force $\delta(v) = \Delta+1$ for all $v\in N$, 
	contradicting the fact that $\delta(v')=\delta(w')=\Delta$.
	But then $\bar{G}'$ with
	$V(\bar{G}') := V(G')$ and 
	$E(\bar{G}') := (E(G') \setminus \{x, y\}) \cup \{\{x, v'\}, \{y, w'\}\}$
	is a graph with more edges than $G'$ that satisfies the first three requirements 
	(in particular, the new edge set does not introduce a cycle of length $< g$ since $x$ and $y$ lie at distance $\geq g-1$ from $v'$ and $w'$), 
	contradicting the maximality of $G'$.
	Therefore, no node with degree $\Delta$ can exist in $G'$, i.e., $G'$ must be $(\Delta+1)$-regular. 
\end{proof}

\begin{corollary}[Existence of $H'_k$ and $H_k$]\label{cor:Hk}
	There exists a $\Delta$-regular supergraph $H'_k$ of $G'_k$ with $\mathcal{O}(|V(G'_k)|)$ nodes,
	and for $\Delta\geq 2$ and $g\geq 3$, 
	there exists a $\Delta$-regular graph $H_k$ with girth $g=2k+1$ and $\mathcal{O}(\Delta^{2k})$ nodes.
\end{corollary}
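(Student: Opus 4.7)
The plan is to obtain both parts of the corollary as direct instantiations of the two preceding lemmas, leaving only a small amount of asymptotic bookkeeping to verify.

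For the first part (existence of $H'_k$), I would apply Lemma~\ref{lem:Hkprime} to the graph $G'_k$. By Lemma~\ref{lem:delta-gk}, $G'_k$ has maximum degree $\Delta = \beta^{k+1}$, so Lemma~\ref{lem:Hkprime} produces a $\Delta$-regular supergraph on at most $|V(G'_k)| + 4\Delta - 1$ nodes. To see that this bound is $O(|V(G'_k)|)$, I would invoke Observation~\ref{obs:orderconstraints} together with Definition~\ref{def:ct-lowgirth}, by which the cluster $C_0$ alone contains $\beta^{2k+1}$ nodes in $G'_k$; hence $|V(G'_k)| \geq \beta^{2k+1} = \beta^k \cdot \Delta$, which dominates $4\Delta$ by a wide margin.

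For the second part (existence of $H_k$), I would apply Lemma~\ref{lem:Hk} with $g = 2k+1$. Taking the smallest admissible $m$, namely $m = \lceil 2\sum_{i=0}^{2k-1}(\Delta-1)^i \rceil$, the geometric sum evaluates to $O(\Delta^{2k-1})$, so the resulting $\Delta$-regular graph of girth at least $2k+1$ has $2m = O(\Delta^{2k-1}) \subseteq O(\Delta^{2k})$ nodes, as required.

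Since each part reduces to a direct application of one preceding lemma, no step poses a substantive obstacle; all the technical work sits inside Lemmas~\ref{lem:Hkprime} and~\ref{lem:Hk}. The only point that needs checking is that the additive $4\Delta$ overhead from Lemma~\ref{lem:Hkprime} is absorbed into $O(|V(G'_k)|)$, and this is immediate from the explicit cluster sizes dictated by $CT_k$.
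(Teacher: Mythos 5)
Your proposal is correct and follows essentially the same route as the paper: both parts are obtained by direct instantiation of Lemma~\ref{lem:Hkprime} and Lemma~\ref{lem:Hk}, respectively, followed by a small amount of bookkeeping. Your additional observation that $|V(G'_k)| \geq \beta^{2k+1} = \beta^k\Delta$ absorbs the $4\Delta$ overhead makes the $\mathcal{O}(|V(G'_k)|)$ claim explicit where the paper leaves it implicit, and your $\mathcal{O}(\Delta^{2k-1})$ estimate for $2m$ (keeping the base $\Delta-1$ rather than relaxing it to $\Delta$ as the paper does) is a marginally tighter route to the same $\mathcal{O}(\Delta^{2k})$ conclusion.
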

\begin{proof}
	The existence of $H'_k$ follows from Lemma~\ref{lem:Hkprime} (p.~\pageref{lem:Hkprime}) as a special case.
	The existence of $H_k$ follows from Lemma~\ref{lem:Hk} (p.~\pageref{lem:Hk}) as a special case, noting that 
% 	\begin{equation*}
	$4\cdot \sum_{i=0}^{(2k+1)-2}\Delta^i 
	= 4\cdot\frac{\Delta^{2k}-1}{\Delta-1}
	\leq 4\cdot \Delta^{2k}~\in\mathcal{O}(\Delta^{2k})$.
% 	\qedhere
% 	\end{equation*}
\end{proof}

\subsection{High-Girth Cluster Tree Graphs}\label{micro:high-girth}

Our final tool allows us to construct small common lifts of regular graphs:

\begin{lemma}[Common lifts of $\Delta$-regular graphs \cite{angluin1981}]\label{lem:Hktilde}
	Let $H$ and $H'$ be two $\Delta$-regular graphs. 
	Then there exists a graph $\tilde{H}$ that is a lift of $H$ and $H'$ s.t. $|V(\tilde{H})|\leq 4|V(H)||V(H')|$.
\end{lemma}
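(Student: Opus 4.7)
The plan is to build $\tilde{H}$ as a common lift in three stages: first pass from $H$ and $H'$ to their bipartite double covers, then apply König's edge coloring theorem to encode each double cover as $\Delta$ perfect matchings (equivalently, permutations), and finally splice these permutations together on a product vertex set so that each coordinate projection becomes a covering map.

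Concretely, I would define $B := H \times K_2$ on vertex set $V(H) \times \{0,1\}$, declaring $\{(u,0),(v,1)\}$ an edge whenever $\{u,v\} \in E(H)$. Then $B$ is $\Delta$-regular bipartite and the coordinate projection $(u,b) \mapsto u$ is a $2$-to-$1$ covering map $B \to H$. Since $B$ is $\Delta$-regular bipartite, König's theorem decomposes its edge set into $\Delta$ perfect matchings $M_1, \ldots, M_\Delta$; record each $M_i$ as a bijection $\alpha_i : V(H) \to V(H)$ via $\{(u,0),(\alpha_i(u),1)\} \in M_i$. Construct $B'$ and bijections $\beta_1, \ldots, \beta_\Delta$ from $H'$ analogously.

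Next I would define $\tilde{H}$ on vertex set $V(H) \times V(H') \times \{0,1\}$, with edge set $\{\{(u,x,0),(\alpha_i(u),\beta_i(x),1)\} : i \in [\Delta],\ u \in V(H),\ x \in V(H')\}$. The coordinate projections $(u,x,b)\mapsto(u,b)$ and $(u,x,b)\mapsto(x,b)$ are the candidate covering maps onto $B$ and $B'$; composing with the $2$-to-$1$ covers $B \to H$ and $B' \to H'$ then yields the desired covering maps $\tilde{H} \to H$ and $\tilde{H} \to H'$. The vertex count is $2|V(H)||V(H')|$, comfortably within the claimed bound of $4|V(H)||V(H')|$.

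The only substantive verification is the local bijectivity of, say, $\tilde{H} \to B$, and this will be the main obstacle to dispatch. A vertex $(u,x,0)$ has exactly $\Delta$ neighbors $(\alpha_i(u),\beta_i(x),1)$ for $i \in [\Delta]$, which project to $\{(\alpha_i(u),1) : i \in [\Delta]\}$; these are pairwise distinct because the matchings $M_i$ are edge-disjoint, so they fill out precisely the neighborhood of $(u,0)$ in $B$. The mirror argument handles right-vertices and the projection to $B'$, and composing with the already-established $B \to H$ and $B' \to H'$ preserves local bijectivity. The key external input is König's edge coloring theorem for $\Delta$-regular bipartite graphs, which guarantees the permutation decomposition used in the second stage; everything else is bookkeeping.
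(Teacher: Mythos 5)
Your proof is correct, and at the level of ideas it follows the paper's approach: reduce to the bipartite case via canonical double covers, decompose each bipartite $\Delta$-regular graph into $\Delta$ perfect matchings, and form a ``matching-aligned'' product whose coordinate projections are locally bijective. (The paper invokes Hall's theorem for the matching decomposition and cites K\"onig alongside it, so your reliance on K\"onig's edge-coloring theorem is the same ingredient.) The one genuine difference is quantitative, and it is in your favor. The paper first builds the double covers $B$ and $B'$ and then forms the product on $V(B)\times V(B')$, which has $4|V(H)||V(H')|$ vertices; that product is in fact disconnected into two halves, and the paper does not bother to take a component. Your construction on $V(H)\times V(H')\times\{0,1\}$ uses a single shared parity bit rather than two independent ones, which amounts to taking one of those halves explicitly and yields $2|V(H)||V(H')|$ vertices. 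This is a clean observation, although it does not affect the asymptotics used downstream (the $\beta^{\mathcal{O}(k^2)}$ bound in Corollary~\ref{cor:Hktilde} and Theorem~\ref{thm:Gk-existential} absorbs the constant). Your verification of local bijectivity is the right one: the neighbors $(\alpha_i(u),\beta_i(x),1)$ of $(u,x,0)$ have pairwise distinct first coordinates $\alpha_i(u)$ precisely because the matchings $M_i$ are edge-disjoint, and symmetrically for right-vertices via the inverses $\alpha_i^{-1}$; composing with the double-cover projections $B\to H$ and $B'\to H'$ finishes, since covering maps compose.
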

\begin{proof}
	By Hall's Theorem \cite{konig1916,hall1935}, 
	any regular bipartite graph has a perfect matching, 
	and the edge set of a $\Delta$-regular bipartite graph can be partitioned into $\Delta$ perfect matchings.
	For the special case that $H$ and $H'$ are both bipartite, 
	let $M_1,\dots,M_\Delta$ and $M'_1,\dots,M'_\Delta$ be partitions of their respective edge sets into perfect matchings. 
	We define $\tilde{H}$ with
	%\begin{align*}
	$V(\tilde{H}) = V(H)\times V(H')~\text{~and~}~%\\
	E(\tilde{H}) = \{\{(v,w),(v',w')\} \mid \exists i \in [\Delta]\colon \{v,v'\}\in M_i \wedge \{w,w'\}\in M'_i\}$.
	%\end{align*}
	$\tilde{H}$ has $|V(H)||V(H')|$ nodes, 
	and for each $(v,w)\in V(\tilde{H})$ and $i\in [\Delta]$, 
	there are unique $\{v,v'\}\in M_i$ and $\{w,w'\}\in M'_i$. 
	Hence, $(v,w)$ has $\Delta$ neighbors, and if these neighbors are $(v_1,w_1),\dots,(v_\Delta,w_\Delta)$, 
	the neighbors of $v$ in $H$ are $v_1,\dots, v_\Delta$, while the neighbors of $w$ in $H'$ are $w_1,\dots, w_\Delta$.
	Therefore, $\tilde{H}$ is a lift of $H$ via
	$\phi_H: V(\tilde{H})\rightarrow V(H)~\text{~with~}~
	\phi_H((v,w)) = v,$
	and a lift of $H'$ via
	$\phi_{H'}: V(\tilde{H})\rightarrow V(H')~\text{~with~}~
	\phi_{H'}((v,w)) = w$.
	
	If a graph is not bipartite, 
	we construct its canonical double cover, 
	i.e., its tensor product with $K_2$,
	to obtain a bipartite regular graph, 
	with which we proceed as described above.
	The canonical double cover has twice as many nodes as the original graph, 
	and we might need it for $H$ and $H'$, 
	so $|V(\tilde{H})|\leq 4|V(H)||V(H')|$.
	Further, if $\chi_H$ is covering map of the canonical double cover $C$ of $H$, 
	then $\tilde{H}$ is a lift of $H$ via $\phi_C\circ\chi_H$;
	analogously, $\tilde{H}$ is a lift of $H'$.
\end{proof}

\begin{corollary}[Existence of $\tilde{H}_k$]\label{cor:Hktilde}
	There exists a $\Delta$-regular common lift $\tilde{H}_k$ of $H_k$ and $H'_k$ with girth $g\geq 2k+1$ and $\mathcal{O}(\beta^{2k^2+4k+1})$ nodes.
\end{corollary}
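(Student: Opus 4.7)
The plan is to apply Lemma~\ref{lem:Hktilde} (p.~\pageref{lem:Hktilde}) directly to the two $\Delta$-regular graphs $H_k$ and $H'_k$ supplied by Corollary~\ref{cor:Hk} (p.~\pageref{cor:Hk}), and then verify the three asserted properties of the resulting common lift $\tilde{H}_k$ in turn: $\Delta$-regularity, girth at least $2k+1$, and the node bound.

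The first two properties come essentially for free. For $\Delta$-regularity, I would either read it off the explicit product construction inside Lemma~\ref{lem:Hktilde} (p.~\pageref{lem:Hktilde}) or, more conceptually, note that a covering map is locally bijective and therefore preserves degrees, so a lift of a $\Delta$-regular graph is itself $\Delta$-regular. For the girth, I would invoke the footnote observation just before Section~\ref{micro:low-girth}: any cycle of length $l$ in $\tilde{H}_k$ projects under the covering map to $H_k$ to a closed walk of length $l$, which must contain a cycle of length at most $l$; since $H_k$ has girth $2k+1$ by Corollary~\ref{cor:Hk} (p.~\pageref{cor:Hk}), so does $\tilde{H}_k$.

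The bulk of the work is the bookkeeping for the node count, so that is where I expect to spend the most care. Lemma~\ref{lem:Hktilde} (p.~\pageref{lem:Hktilde}) gives $|V(\tilde{H}_k)| \le 4\,|V(H_k)|\,|V(H'_k)|$, and Lemma~\ref{lem:delta-gk} (p.~\pageref{lem:delta-gk}) gives $\Delta = \beta^{k+1}$. Hence by Corollary~\ref{cor:Hk} (p.~\pageref{cor:Hk}), $|V(H_k)| \in \mathcal{O}(\Delta^{2k}) = \mathcal{O}(\beta^{2k^2+2k})$. For $H'_k$ I would first bound $|V(G'_k)|$: by Definition~\ref{def:ct-lowgirth} (p.~\pageref{def:ct-lowgirth}) each cluster at level $l$ has $\beta^{2k-l+1}$ nodes, so the argument used in the proof of Lemma~\ref{lem:largec0} (p.~\pageref{lem:largec0}), applied to $G'_k$, shows that $|V(G'_k)| \in \mathcal{O}(|C_0|) = \mathcal{O}(\beta^{2k+1})$ (the higher levels contribute only a geometric tail, which is $\mathcal{O}(1)$ for $\beta \ge 2(k+1)$). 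Corollary~\ref{cor:Hk} (p.~\pageref{cor:Hk}) then gives $|V(H'_k)| \in \mathcal{O}(\beta^{2k+1})$, and multiplying yields $|V(\tilde{H}_k)| \in \mathcal{O}(\beta^{2k^2+2k} \cdot \beta^{2k+1}) = \mathcal{O}(\beta^{2k^2+4k+1})$. The only place one can slip is conflating the bound for $G'_k$ with that for $H'_k$, or miscounting which factor of $\Delta$ gets raised to the $2k$; apart from that, the corollary is simply the point at which the three prior ingredients compose.
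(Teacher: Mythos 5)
Your proposal is correct and follows essentially the same route as the paper: apply Lemma~\ref{lem:Hktilde} to the graphs from Corollary~\ref{cor:Hk}, invoke the girth-preservation of lifts, and compose the node-count bounds $|V(H_k)|\in\mathcal{O}(\Delta^{2k})=\mathcal{O}(\beta^{2k^2+2k})$ and $|V(H'_k)|\in\mathcal{O}(\beta^{2k+1})$ via Lemma~\ref{lem:largec0}. The only cosmetic difference is that you spell out the $\Delta$-regularity of the lift explicitly (via degree-preservation of covering maps), which the paper leaves implicit in Lemma~\ref{lem:Hktilde}.
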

\begin{proof}
	Recall that we start with $G'_k$ as specified in  Definition~\ref{def:ct-lowgirth} (p.~\pageref{def:ct-lowgirth}),
	parametrized by $\beta\geq 2k+1$.
	By Lemma~\ref{lem:delta-gk} (p.~\pageref{lem:delta-gk}),
	$G'_k$ has maximum degree $\Delta = \beta^{k+1}$,
	and by Lemma~\ref{lem:largec0} (p.~\pageref{lem:largec0}), 
	$G'_k$ has order $n<|C_0|\big(\frac{\beta}{\beta-(k+1)}\big)\le 2|C_0|=2\beta^{2k+1}$.
	We apply Lemma~\ref{lem:Hkprime} (p.\pageref{lem:Hkprime}) to obtain a $\Delta$-regular supergraph $H'_k$ of $G'_k$ with $\mathcal{O}(\beta^{2k+1})$ nodes.
	By Corollary~\ref{cor:Hk} (p.~\pageref{cor:Hk}), 
	there also exists a $\Delta$-regular graph $H_k$ with girth $\geq 2k+1$ and $|V(H_k)|\in \mathcal{O}(\Delta^{2k})=\mathcal{O}(\beta^{2k^2+2k})$.
	Therefore,
	from Lemma~\ref{lem:Hktilde} (p.~\pageref{lem:Hktilde}) along with the observation that lifting cannot decrease girth,
	we can infer the existence of a graph $\tilde{H}_k$ which is a lift of $H_k$ and $H'_k$,
	has girth at least $2k+1$, and satisfies $|V(\tilde{H})|\leq 4|V(H_k)||V(H'_k)| \in \mathcal{O}(\beta^{2k^2+4k+1})$.
\end{proof}

Finally, we construct $G_k$ as a subgraph of $\tilde{H}_k$:

\begin{theorem}[Existence of $G_k$ \cite{kuhn2016}]\label{thm:Gk-existential}
	There exists a CT graph $G_k$ of girth $g\geq 2k+1$ with $\mathcal{O}(\beta^{2k^2+4k+1})$ nodes.
\end{theorem}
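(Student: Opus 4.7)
The plan is to obtain $G_k$ as the preimage of $G'_k$ inside the common lift $\tilde{H}_k$, then check that this preimage inherits (i) the CT structure from $G'_k$, (ii) the girth bound from $\tilde{H}_k$, and (iii) the size bound from $\tilde{H}_k$.

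Concretely, I would start from the graphs constructed earlier: the low-girth CT graph $G'_k$ from Definition~\ref{def:ct-lowgirth}, its $\Delta$-regular supergraph $H'_k$ from Corollary~\ref{cor:Hk}, the high-girth $\Delta$-regular graph $H_k$ from Corollary~\ref{cor:Hk}, and the common lift $\tilde{H}_k$ of $H'_k$ and $H_k$ from Corollary~\ref{cor:Hktilde}, which already has girth at least $2k+1$ and $\mathcal{O}(\beta^{2k^2+4k+1})$ nodes. Let $\psi_1\colon V(\tilde{H}_k)\to V(H'_k)$ be the covering map provided by Corollary~\ref{cor:Hktilde}. Define $G_k$ as the subgraph of $\tilde{H}_k$ with vertex set $\psi_1^{-1}(V(G'_k))$ and edge set $\{\{\tilde{u},\tilde{v}\}\in E(\tilde{H}_k)\mid \{\psi_1(\tilde{u}),\psi_1(\tilde{v})\}\in E(G'_k)\}$, and let $\phi:=\psi_1|_{V(G_k)}$.

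The main claim to verify is that $\phi$ is a covering map from $G_k$ to $G'_k$, i.e., that $\phi$ is locally bijective. Since $\psi_1$ is locally bijective, each $\tilde{v}\in V(G_k)$ has neighbors in $\tilde{H}_k$ that are in bijection (via $\psi_1$) with the neighbors of $\psi_1(\tilde{v})$ in $H'_k$; restricting on both sides to $V(G'_k)$ and to $E(G'_k)$, respectively, yields a bijection between $\Gamma_{G_k}(\tilde{v})$ and $\Gamma_{G'_k}(\phi(\tilde{v}))$. From this, I would deduce that $G_k$ inherits the CT structure: for each cluster $C$ in the CT skeleton, pull it back to $\tilde{C}:=\phi^{-1}(C)$; these sets partition $V(G_k)$, local bijectivity forces each $\tilde{C}$ to be independent (since $C$ is) and forces each adjacent pair $\tilde{C},\tilde{C}'$ to induce a biregular bipartite graph with the prescribed labels $(\beta^x,\beta^{x+1})$ inherited from $G'_k$. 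Hence $G_k$ is a CT graph by Definition~\ref{def:ct-skeleton}.

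The remaining bookkeeping is short: girth cannot drop when taking a subgraph, so $g(G_k)\geq g(\tilde{H}_k)\geq 2k+1$; and $|V(G_k)|\leq |V(\tilde{H}_k)|\in\mathcal{O}(\beta^{2k^2+4k+1})$. The only step that requires care is the local-bijectivity verification above; everything else is a direct invocation of results already in the paper. In particular, one must be careful to restrict edges to those whose image lies in $E(G'_k)$, rather than keeping all edges of $\tilde{H}_k$ between preimage vertices, since $H'_k$ may have extra edges outside $G'_k$ that would otherwise inflate degrees and break the biregularity between clusters.
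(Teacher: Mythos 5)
Your proposal is correct and follows exactly the paper's own proof: take the common lift $\tilde{H}_k$ from Corollary~\ref{cor:Hktilde}, define $G_k$ as the preimage of $G'_k$ under the covering map $\psi_1$ (restricting both vertices and edges), observe that the restriction $\phi$ is still a covering map so $G_k$ inherits the CT structure, and inherit girth and size from $\tilde{H}_k$. You spell out the local-bijectivity check and the reason for restricting the edge set more explicitly than the paper does, but the construction and logical structure are the same.
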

\begin{proof}
	By Corollary~\ref{cor:Hktilde} (p.~\pageref{cor:Hktilde}), 
	there exists a graph $\tilde{H}_k$ with girth $g\geq 2k+1$ and $\mathcal{O}(\beta^{2k^2+4k+1})$ nodes that is a common lift of $H_k'$ and $H_k$.
	Now let $\psi_1$ be a covering map from $\tilde{H}_k$ to $H'_k$.
	We construct $G_k$ as a subgraph of $\tilde{H}_k$ with 
	%\begin{align*}
	$V(G_k) := \{v \in \tilde{H}_k \mid \psi_1(v) \in G'_k\}$ and %~\text{~and~}~%\phantom{~.}\\
	$E(G_k) := \{\{v,w\} \mid v, w \in V(\tilde{H}_k) \wedge \{\psi_1(v),\psi_1(w)\}\in E(G'_k)\}$.
	%\end{align*}
	Then $\phi := \psi_1\vert_{V(G_k)}$ is a covering map from $G_k$ to $G'_k$.
	To see that $G_k$ is a CT graph, 
	observe that $\phi$ is indeed a bijection on node neighborhoods, 
	and set $C(v) := C(\phi(v))$ for $v\in V(G_k)$.
	Hence,
	we can conclude that $G_k$ is a CT graph (inherited from $G'_k$) with  $\mathcal{O}(\beta^{2k^2+4k+1})$ nodes and girth $g\geq 2k+1$ (inherited from $\tilde{H}_k$).
\end{proof}

% !TeX spellcheck = en_US
% !TeX root = cluster_trees_sosa_format.tex
\section{Lower Bound on Minimum Vertex Cover Approximation}\label{sec:bounds}

\begin{definition}[Vertex cover]\label{def:vc}
Given a finite, simple graph $G=(V,E)$,
a \emph{vertex cover} is a node subset $S\subseteq V$ meeting all edges,
i.e., for each $\{v,w\}\in E$, $v\in S$ or $w\in S$.
A \emph{Minimum Vertex Cover (MVC)} is a vertex cover of minimum cardinality,
and an $\alpha$-approximate MVC is a vertex cover that is at most factor $\alpha$ larger than an MVC.
\end{definition}

We begin by bounding the size of an MVC of any CT graph $G_k$.
To this end, recall that $n_0$ is the number of nodes in $C_0$,
which we have shown to contain a large fraction of all nodes.

\begin{observation}[Size of an MVC of $G_k$]\label{obs:mvcingk}
	%The size of an MVC of $G_k$ is at most $n-n_0$, i.e., 
	$|MVC| \leq n-n_0$.
\end{observation}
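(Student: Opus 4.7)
The plan is straightforward: exhibit a specific vertex cover of $G_k$ whose size is exactly $n - n_0$, namely the complement $V(G_k) \setminus C_0$. Since a minimum vertex cover is at most as large as any particular vertex cover, this immediately yields the bound.

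First I would recall from Definition~\ref{def:ct-skeleton} (p.~\pageref{def:ct-skeleton}) that each cluster $C \in \mathcal{C}_k$, including $C_0$, corresponds to an independent set in $G_k$, and all edges of $G_k$ run between distinct clusters (no cluster contains an internal edge). In particular, no edge of $G_k$ has both endpoints in $C_0$. Consequently, for every edge $\{v,w\} \in E(G_k)$, at least one of $v,w$ must lie outside $C_0$, which is precisely the statement that $V(G_k) \setminus C_0$ is a vertex cover of $G_k$.

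Taking cardinalities, $|V(G_k) \setminus C_0| = n - n_0$, so $|MVC| \leq n - n_0$, as claimed. The main (and only) observation needed is that $C_0$ is an independent set in $G_k$; there is no real obstacle here, since this is built into the definition of a CT graph. The purpose of the observation in the larger argument is to combine with Lemma~\ref{lem:largec0} (p.~\pageref{lem:largec0}), which ensures $n - n_0 < n_0 \cdot 2(k+1)/\beta$, so that a genuine MVC is far smaller than any cover forced to include a constant fraction of $C_0$; that comparison is the content of the subsequent lower bound argument, not of this observation itself.
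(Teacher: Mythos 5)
Your proof is correct and uses exactly the same argument as the paper: since $C_0$ is an independent set in $G_k$, its complement $V(G_k)\setminus C_0$ is a vertex cover of size $n-n_0$, which upper-bounds the size of an MVC. The additional context you give about the role of this bound alongside Lemma~\ref{lem:largec0} is accurate but not part of the paper's one-line proof.
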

\begin{proof}
As $C_0$ is an independent set, %i.e., for $v,w\in C_0$, $\{v,w\}\notin E(G_k)$,
$V(G_k)\setminus C_0$ is a vertex cover.
\end{proof}

Due to the indistinguishability of nodes in $C_0$ and $C_1$ in $G_k$  (Theorem~\ref{thm:indistinguishability}, p.~\pageref{thm:indistinguishability}), 
we obtain the following requirement for the behavior of any $k$-round distributed algorithm:

\begin{lemma}[Size of a computed vertex cover]\label{lem:mvcsize}
On a CT graph $G_k$ of girth at least $2k+1$ with uniformly random node identifiers,
in the worst case (in expectation), 
a $k$-round deterministic (randomized) MVC algorithm in the LOCAL model must select at least $\frac{n_0}{2}$ nodes.
\end{lemma}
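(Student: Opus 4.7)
The plan is to combine Theorem~\ref{thm:indistinguishability} with a symmetry argument over the random identifier assignment. For each $v_0\in C_0$, the construction of $CT_k$ guarantees that $v_0$ has a unique neighbor $v_1\in C_1$, since $C_0$ is connected to $C_1$ with outgoing label $\beta^0=1$. By Theorem~\ref{thm:indistinguishability}, there is an isomorphism $\phi\colon V(G_k^k(v_0))\to V(G_k^k(v_1))$ with $\phi(v_0)=v_1$.

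The next step is to show that $\Pr[v_0\in S]=\Pr[v_1\in S]$, where $S$ denotes the computed vertex cover. Under uniformly random identifiers (and, in the randomized case, independent random bits per node), the labeled $k$-hop subgraph $G_k^k(v_0)$ rooted at $v_0$ and the labeled $k$-hop subgraph $G_k^k(v_1)$ rooted at $v_1$ are identically distributed: their underlying structures coincide via $\phi$, and the marginal of a uniformly random injective identifier assignment on a vertex subset depends only on the size of that subset. Since by Definition~\ref{def:distributed-algorithm} the output at a node is a function of its labeled $k$-hop subgraph, the two membership probabilities coincide.

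The vertex cover constraint forces $\Pr[v_0\in S \vee v_1\in S]=1$, because the edge $\{v_0,v_1\}$ must be covered. A union bound together with the symmetry from the previous paragraph then yields $\Pr[v_0\in S]\geq 1/2$. Summing over $v_0\in C_0$ by linearity of expectation gives $\mathbb{E}[|S\cap C_0|]\geq n_0/2$, which establishes the claim for randomized algorithms directly. For deterministic algorithms, the probabilistic method (with the expectation taken over the uniformly random identifier assignment) yields the existence of an identifier assignment under which $|S\cap C_0|\geq n_0/2$, establishing the worst-case bound.

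The main obstacle lies in the distributional symmetry claim in the second paragraph. Identifiers are drawn as a uniformly random injection into an identifier set rather than independently across nodes, so the argument is not by factorization of a product distribution. The cleanest justification is that for any two subsets $A,B\subseteq V(G_k)$ with $|A|=|B|$ and any bijection $\psi\colon A\to B$, the joint distribution of identifiers on $A$ equals the pullback along $\psi$ of the joint distribution of identifiers on $B$; applying this to $A=V(G_k^k(v_0))$, $B=V(G_k^k(v_1))$, and $\psi=\phi$ gives exactly the required equality of distributions of the two labeled $k$-hop subgraphs.
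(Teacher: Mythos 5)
Your proposal is correct and follows essentially the same route as the paper: use Theorem~\ref{thm:indistinguishability} plus the exchangeability of a uniformly random identifier assignment to equate the selection probabilities of a node in $C_0$ and a node in $C_1$, use the coverage constraint on an edge between them to lower-bound that probability by $1/2$, and finish by linearity of expectation (with the probabilistic method handling the deterministic case). The only cosmetic differences are that the paper asserts a single common probability $p$ for all of $C_0\cup C_1$ and sums over that whole set, whereas you argue pairwise for each $v_0$ and its unique $C_1$-neighbor and sum over $C_0$; both are valid and your extra care in justifying the exchangeability of the random injection is a nice touch.
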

\begin{proof}
% CC @ CL: are you sure that we need to recall all of this? IMHO makes things appear (a little) overly complicated
Recall that a $k$-round LOCAL algorithm is a function $f$ mapping $k$-hop subgraphs labeled by inputs, node identifiers,
and, in case of a randomized algorithm, strings of unbiased independent random bits to outputs (cf. Definition~\ref{def:distributed-algorithm}, p.~\pageref{def:distributed-algorithm}).
Noting that the vertex cover task has no inputs,
by Theorem~\ref{thm:indistinguishability} (p.~\pageref{thm:indistinguishability}),
nodes in $C_0$ and $C_1$ are $k$-hop indistinguishable.
Hence, restricting $f$ to the $k$-hop subgraphs of nodes in $C_0\cup C_1$,
we get a function depending only on the node identifiers and random strings observed up to distance $k$ in the (isomorphic) trees that constitute the $k$-hop subgraphs.
Now assign the node identifiers uniformly at random (from the feasible range, drawn without repetition).
By the above observations,
the output of each node $v\in C_0\cup C_1$ then depends on the random labeling of its $k$-hop subgraph only,
which is drawn from the same distribution for each node.
Thus, there is some $p\in [0,1]$ such that for each $v\in C_0\cup C_1$,
the probability that $v$ enters the vertex cover computed by the algorithm equals $p$.
Now consider $v\in C_0$.
By the construction of $G_k$,
there is some edge $\{v,w\}\in E(G_k)$ such that $w\in C_1$.
Because $v$ or $w$ must be in the vertex cover the algorithm computes,
we have that $1=P[v$ or $w$ are in the vertex cover$]\le 2p$.
By linearity of expectation,
we conclude that the expected size of the vertex cover is at least $p|C_0\cup C_1|\geq \frac{n_0}{2}$.
\end{proof}

Choosing $\beta$ appropriately,
we %can plug all statements together to 
arrive at the desired MVC lower bound:

\begin{theorem}[MVC lower bound]\label{thm:bound-mvc}
In the family of graphs with at most $n$ nodes and degrees of at most $\Delta$,
the worst-case (expected) approximation ratio $\alpha$ of a deterministic (randomized) $k$-round MVC algorithm in the LOCAL model satisfies
$\alpha \in \min\big\{n^{\Omega(1/(k^2\log k))},\Delta^{\Omega(1/(k\log k))}\big\}$.
In particular, achieving an (expected) approximation ratio $\alpha \in \log^{\mathcal{O}(1)}\min\{n,\Delta\}$ requires 
$k\in \Omega(\min\{\sqrt{\log n/\log \log n},\log \Delta/\log \log \Delta\})$ communication rounds.
\end{theorem}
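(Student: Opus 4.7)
\begin{xproof}{Theorem~\ref{thm:bound-mvc} (proposal)}
The plan is to combine the three ingredients that have already been prepared:
(i) the selection lower bound of Lemma~\ref{lem:mvcsize} (at least $n_0/2$ nodes enter the computed cover),
(ii) the MVC upper bound from Observation~\ref{obs:mvcingk} together with the $C_0$-dominance estimate from Lemma~\ref{lem:largec0} ($|MVC|\le n-n_0<n_0\cdot 2(k+1)/\beta$), and
(iii) the existence of a CT graph $G_k$ of girth at least $2k+1$ with $|V(G_k)|\in \mathcal{O}(\beta^{2k^2+4k+1})$ and $\Delta=\beta^{k+1}$ given by Theorem~\ref{thm:Gk-existential} and Lemma~\ref{lem:delta-gk}. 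Dividing (i) by (ii) immediately yields that on such a $G_k$,
$$\alpha \;\ge\; \frac{n_0/2}{n_0\cdot 2(k+1)/\beta} \;=\; \frac{\beta}{4(k+1)}.$$
Everything from here on is choosing $\beta$ optimally as a function of $n$ and~$\Delta$.

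For the $\Delta$-parametrized bound, I would take $\beta := \lfloor \Delta^{1/(k+1)}\rfloor$, so that $\Delta \geq \beta^{k+1}$ matches the maximum degree constraint exactly. If this choice violates $\beta \geq 2(k+1)$, then $\Delta < (2(k+1))^{k+1}$, which forces $k = \Omega(\log\Delta/\log\log\Delta)$ and the claimed bound becomes trivial; otherwise the inequality above gives $\alpha \geq \Delta^{1/(k+1)}/(4(k+1))$, and taking logs yields $\log\alpha \geq \log\Delta/(k+1) - O(\log k) = \Omega(\log\Delta/(k\log k))$ in the regime where the bound is non-trivial. For the $n$-parametrized bound, I would instead pick $\beta$ maximal subject to $c\beta^{2k^2+4k+1}\le n$ (with $c$ the hidden constant from Theorem~\ref{thm:Gk-existential}) and $\beta\ge 2(k+1)$, obtaining $\beta = \Omega(n^{1/(2k^2+4k+1)})$ as long as the regime is non-trivial, hence $\log\alpha \ge \Omega(\log n / k^2) - O(\log k) = \Omega(\log n/(k^2\log k))$ whenever $k = O(\sqrt{\log n/\log\log n})$. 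Taking the minimum of the two bounds gives the first claim of the theorem.

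For the ``in particular'' clause, I would contrapose: assuming $\alpha \le \log^{\mathcal{O}(1)}\min\{n,\Delta\}$, plugging this into the first claim gives $\log^{\mathcal{O}(1)} n \ge n^{\Omega(1/(k^2\log k))}$ and $\log^{\mathcal{O}(1)} \Delta \ge \Delta^{\Omega(1/(k\log k))}$. Taking logarithms of both and rearranging yields $k^2\log k = \Omega(\log n/\log\log n)$ and $k\log k = \Omega(\log\Delta/\log\log\Delta)$, which in turn imply $k = \Omega(\sqrt{\log n/\log\log n})$ and $k = \Omega(\log\Delta/\log\log\Delta)$, as desired.

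The main obstacle I anticipate is not conceptual but careful bookkeeping: tracking the hidden constant in $|V(G_k)|\in\mathcal{O}(\beta^{2k^2+4k+1})$ and verifying that the boundary constraint $\beta\ge 2(k+1)$ (needed both by Definition~\ref{def:ct-skeleton} and by Lemma~\ref{lem:largec0}) is either satisfied or makes the claimed bound vacuous. Once these edge regimes are handled, converting between $\Delta^{1/(k+1)}/(4(k+1))$ and $\Delta^{\Omega(1/(k\log k))}$ (and analogously for $n$) is routine.
\end{xproof}
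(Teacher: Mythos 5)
Your main derivation matches the paper up to an inversion of the parameterization: the paper fixes $\alpha$ and sets $\beta := 4(k+1)\alpha$, then checks that the resulting $G_k$ fits within the $n$- and $\Delta$-budgets, whereas you choose $\beta$ to saturate a budget and read off $\alpha \ge \beta/(4(k+1))$. These are the same argument. Your inventory of ingredients --- Lemma~\ref{lem:mvcsize}, Observation~\ref{obs:mvcingk}, Lemma~\ref{lem:largec0}, Theorem~\ref{thm:Gk-existential}, Lemma~\ref{lem:delta-gk} --- is exactly the paper's, and the core inequality $\alpha \ge n_0/(2(n-n_0)) \ge \beta/(4(k+1))$ is right.

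There is, however, a real gap in how you derive the ``in particular'' clause. You contrapose from the \emph{stated} first claim $\alpha \in n^{\Omega(1/(k^2\log k))}$ to get $k^2\log k = \Omega(\log n/\log\log n)$, and then conclude $k = \Omega(\sqrt{\log n/\log\log n})$. That last step does not follow: at $k = \Theta(\sqrt{\log n}/\log\log n)$ one has $\log k = \Theta(\log\log n)$ and hence $k^2\log k = \Theta(\log n/\log\log n)$, so the constraint $k^2\log k = \Omega(\log n/\log\log n)$ is already satisfied by some $k = o(\sqrt{\log n/\log\log n})$. You only get $k = \Omega(\sqrt{\log n}/\log\log n)$, which is weaker by a $\sqrt{\log\log n}$ factor; the $\Delta$-branch loses a $\log\log\Delta$ factor for the same reason. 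The issue is that $n^{\Omega(1/(k^2\log k))}$ is a deliberately lossy compression of what your computation actually establishes, namely $\alpha \ge c_1 n^{c_2/k^2}/k$ for constants $c_1, c_2 > 0$ (equivalently, $\log\alpha \ge c_2\log n / k^2 - \log k - O(1)$). Deriving the ``in particular'' clause from \emph{that} bound works: if $\alpha \le \log^{C} n$ and, without loss of generality, $k \le \sqrt{\log n}$, then $\log k \le \tfrac{1}{2}\log\log n$, so $c_2 \log n/k^2 \le (C+\tfrac{1}{2})\log\log n + O(1)$ and hence $k = \Omega(\sqrt{\log n/\log\log n})$; the $\Delta$-branch is analogous. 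You already have this raw bound in hand --- just derive the round lower bound from it directly rather than routing through the published $\Omega(1/(k^2\log k))$ exponent, which was never meant to carry that weight.
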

\begin{proof}
Given any $\alpha>1$,
fix $\beta:=4(k+1)\alpha$.
By Theorem~\ref{thm:Gk-existential} (p.~\pageref{thm:Gk-existential}),
CT graphs of girth $2k+1$ with $\mathcal{O}(\beta^{2k^2+4k+1})\subseteq 2^{\mathcal{O}(k^2(\log k+\log \alpha))}$ nodes exist,
which by Lemma~\ref{lem:delta-gk} (p.~\pageref{lem:delta-gk}) have maximum degree $\Delta=\beta^{k+1}\in 2^{\mathcal{O}(k(\log k + \log \alpha))}$.
We need these bounds to be smaller than $n$ and $\Delta$, respectively.
As we want to show an asymptotic bound for $\alpha$, we may assume that $n$ and $\Delta$ are sufficiently large constants.
Hence, it is sufficient to satisfy the constraints
%\begin{align*}
$\alpha \le 2^{ck^{-2}\log n - \log k}$ and %~\text{and}~
$\alpha \le 2^{ck^{-1}\log \Delta-\log k}$,
%\end{align*}
respectively, where $c>0$ is a sufficiently small constant.
For $k\le \frac{c}{2}\cdot \min\{\sqrt{{\log n}/{\log \log n}},{\log \Delta}/{\log \log \Delta}\}$,
the $\log k$ terms are dominated and the constraints are met for $\alpha \in \min\{n^{\Omega(1/(k^2\log k))},$ $\Delta^{\Omega(1/(k\log k))}\}$.

In particular, $k \in \omega(\min\{\sqrt{{\log n}/{\log \log n}},$ ${\log \Delta}/{\log \log \Delta}\})$ enables us to choose $\alpha = \min\{\log^{\omega(1)} n,\log^{\omega(1)} \Delta\}$.
Hence, if we can show that for any (feasible) choice of $\alpha$ and CT graph $G_k$ with parameter $\beta = 4(k+1)\alpha$ and girth $2k+1$,
any algorithm in the LOCAL model has approximation ratio at least $\alpha$ in the worst case (in expectation),
the claim of the theorem follows.
To see this,
note that $G_k$ contains a vertex cover of size $n-n_0$ by Observation~\ref{obs:mvcingk} (p.~\pageref{obs:mvcingk}) 
and any $k$-round algorithm selects at least $\frac{n_0}{2}$ nodes (in expectation) under uniformly random node identifiers by Lemma~\ref{lem:mvcsize} (p.~\pageref{lem:mvcsize}).
By Lemma~\ref{lem:largec0} (p.~\pageref{lem:largec0}),
this results in an (expected) approximation ratio of at least
$\frac{n_0}{2(n-n_0)}\ge \frac{\beta}{4(k+1)}=\alpha$.
\end{proof}

\bibliographystyle{plain}
\balance
\bibliography{bibliography}
%\newpage

\appendix
% !TeX spellcheck = en_US
% !TeX root = cluster_trees_sosa_format.tex
\section{Further Lower Bounds}\label{app:bounds}
\begin{definition}[Fundamental graph problems]\label{def:graphproblems}
	Given a finite, simple graph $G=(V,E)$,
	
	\begin{itemize}[label=--,noitemsep]
		\item \textbf{Minimum Vertex Cover (MVC)} %\\
		\dots find a minimum vertex subset $S\subseteq V$ s.t. $\forall~\{u,v\} \in E:~u\in S \vee v\in S$.
		
		\item \textbf{Minimum Dominating Set (MDS)} %\\
		\dots find a minimum vertex subset $S\subseteq V$ s.t. $\forall~v\in V:~v\in S~\vee~\exists~u\in S: \{u,v\} \in E$.
		
		\item \textbf{Maximum Matching (MaxM)} %\\
		\dots find a maximum edge subset $T\subseteq E$ s.t. $\forall~e_1, e_2\in T:~e_1\cap e_2 =~\emptyset$.

		\item \textbf{Maximal Matching (MM)} %\\
		\dots find an inclusion-maximal edge subset $T \subseteq E$ s.t. $\forall~e_1, e_2\in T:~e_1\cap e_2 = \emptyset$.

		\item \textbf{Maximal Independent Set (MIS)} %\\
		\dots find an inclusion-maximal vertex subset $S\subseteq V$ s.t. $\forall~u, v\in S: \{u,v\}\notin E$.
	\end{itemize}
\end{definition}
% GRAPH PROBLEM DEFINITIONS END

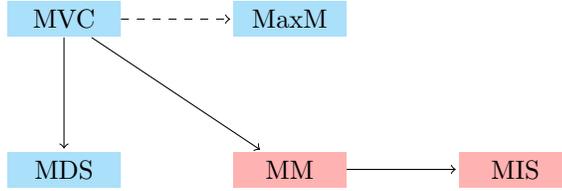
\begin{figure*}[ht!]
	\centering
	% !TeX spellcheck = en_US
\begin{tikzpicture}
[shorten >=1pt,minimum width=1.5cm,node distance=2cm,on grid,auto] 
\node[fill=cyan!30] (MVC) [align=center] {MVC};
\node[below of=MVC,fill=cyan!30] (MDS) [align=center] {MDS};
\node[right=3cm of MVC,fill=cyan!30] (MaxM) [align=center] {MaxM};
\node[below=of MaxM,fill=red!30] (MM) [align=center] {MM};
\node[right=3cm of MM,fill=red!30] (MIS) [align=center] {MIS};

\path[->]
(MVC) edge [align=center,left] node {} (MDS)
(MVC) edge [align=center,right,dashed] node {} (MaxM)
(MVC) edge [align=center,right] node {} (MM)
(MM) edge [align=center,right] node {} (MIS)
;
\end{tikzpicture}
	\caption[Relationships between Lower Bounds]{%
		Relationships between the lower bounds derived for fundamental graph problems (adapted from \cite{kuhn2016}). 
		Optimization problems marked blue; 
		binary problems marked red. 
		Solid arrows indicate reductions; 
		dashed arrows indicate analogy.
	}\label{fig:reductions}
\end{figure*}

Figure~\ref{fig:reductions} (p.~\pageref{fig:reductions}) shows how the relationships between these problems are leveraged in the reductions that follow.

\subsection{Minimum Dominating Set (MDS)}\label{bounds:mds}
\begin{theorem}[MDS lower bound]\label{thm:bound-mds}
	The best approximation ratio a $k$-round deterministic (randomized) MDS algorithm in the LOCAL model can achieve is
	%\begin{align*}
	$\alpha \in n^{\Omega(1/k^2)}/{k}$ and %~~~\text{and}~~~
	$\alpha \in \Omega\big({\Delta^{\frac{1}{k+1}}}/{k}\big)$. %~~~.\phantom{.}
	%\end{align*}
	Hence, to obtain an approximation ratio polylogarithmic in $n$ or $\Delta$, 
	in the worst case (in expectation),
	an algorithm needs to run for 
	%\begin{align*}
	$k\in \Omega\big(\sqrt{{\log n}/{\log\log n}}\big)$ or %~~~\text{and}~~~
	$k\in \Omega\big({\log \Delta}/{\log\log \Delta}\big)$ rounds, respectively. %~~~\mathit{rounds}.
	%\end{align*}
\end{theorem}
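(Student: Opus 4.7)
The plan is to mirror the proof of Theorem~\ref{thm:bound-mvc}: upper-bound the size of an optimal dominating set of $G_k$, lower-bound the expected output size of any $k$-round LOCAL algorithm on $G_k$ under uniformly random identifiers, and combine these with a suitable choice of $\beta$ to conclude via Theorem~\ref{thm:Gk-existential} (p.~\pageref{thm:Gk-existential}).

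For the upper bound, note that every $v_0 \in C_0$ has a $C_1$-neighbor (the edge $\{(C_0,\beta^0),(C_1,\beta^1)\}$ in $CT_k$ gives each $v_0$ exactly one neighbor in $C_1$), and every node outside $C_0$ trivially dominates itself. Thus $V(G_k)\setminus C_0$ is a dominating set, and Lemma~\ref{lem:largec0} (p.~\pageref{lem:largec0}) yields $|MDS| \leq n - n_0 \leq 2 n_0(k+1)/\beta$.

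For the lower bound on $E[|S|]$, I would start from Theorem~\ref{thm:indistinguishability} (p.~\pageref{thm:indistinguishability}): under uniformly random identifiers, each node in $C_0 \cup C_1$ is selected with the same probability $p$. Tracing Algorithm~\ref{alg:isomorphism} (p.~\pageref{alg:isomorphism}) through its recursive calls, the bijection $\phi$ maps nodes in $G_k^k(v_0)$ onto nodes in $G_k^k(v_1)$ across cluster boundaries; since the $k$-hop labeled subgraphs of all nodes within a given cluster are isomorphic in distribution, this forces $p_{C(u)} = p_{C(\phi(u))}$ for every pair $(u,\phi(u))$, propagating equalities across the cluster tree. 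Combining these cluster-level equalities with the domination constraint $\Pr[v \text{ is dominated}]=1$ summed over $v \in C_0$---which yields $1 \leq p + \sum_{i=0}^k \beta^i p_i$, where $p_i$ is the selection probability in the cluster adjacent to $C_0$ via outgoing label $\beta^i$---and with the analogous constraint $p_k + \beta^{k+1} p \geq 1$ for the leaf cluster attached to $C_0$ in iteration $k$ (whose nodes are adjacent in $G_k$ only to $C_0$), I would conclude $E[|S|] = \Omega(n_0)$. The hard part is this step: the MVC proof uses the edge-covering requirement to force $2p \geq 1$ directly, but the weaker domination requirement admits seemingly cheap solutions relying on deep clusters, and ruling these out requires both the chaining of indistinguishability through $\phi$ and the leaf-cluster constraints to close off the recursion.

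Combining the two bounds gives an (expected) approximation ratio of $\Omega(\beta/(k+1))$. Setting $\beta = \Theta(k\alpha)$ and invoking Theorem~\ref{thm:Gk-existential} (p.~\pageref{thm:Gk-existential}) produces CT graphs with $n \in 2^{O(k^2(\log k + \log \alpha))}$ nodes and maximum degree $\Delta = \beta^{k+1} \in 2^{O(k(\log k + \log \alpha))}$. Translating the resulting constraints on $\alpha$ exactly as in the proof of Theorem~\ref{thm:bound-mvc} (p.~\pageref{thm:bound-mvc}) gives $\alpha \in n^{\Omega(1/k^2)}/k$ and $\alpha \in \Omega(\Delta^{1/(k+1)}/k)$, and in turn the claimed round lower bounds for polylogarithmic $\alpha$.
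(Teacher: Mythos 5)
Your proposed route diverges substantially from the paper's, and the divergent step has a genuine hole. The paper proves Theorem~\ref{thm:bound-mds} by a short \emph{reduction}, not a direct probabilistic argument: a vertex cover of $G$ induces a dominating set of the line graph $L(G)$ of at most the same size (pick one incident edge per cover vertex), and a dominating set of $L(G)$ induces a vertex cover of $G$ of at most twice the size (take both endpoints of each dominating edge), so MVC on $G$ and MDS on $L(G)$ are equivalent up to a factor of two in approximation ratio; moreover, a $k$-round LOCAL computation on $L(G_k)$ can be simulated in $k+1$ rounds on $G_k$. This transfers Theorem~\ref{thm:bound-mvc} to MDS with no new probabilistic reasoning.

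Your direct argument on $G_k$ breaks at the step you yourself flag as ``the hard part.'' The claim that the bijection $\phi$ from Algorithm~\ref{alg:isomorphism} ``forces $p_{C(u)}=p_{C(\phi(u))}$ for every pair $(u,\phi(u))$'' is false. The isomorphism $\phi$ witnesses that $G^k_k(v_0)$ and $G^k_k(v_1)$ are isomorphic as rooted trees; for a node $u$ at depth $d>0$ in $G^k_k(v_0)$, restricting $\phi$ to the subtree below $u$ gives at best a $(k-d)$-hop indistinguishability between $u$ and $\phi(u)$, not $k$-hop indistinguishability, and a $k$-round algorithm may well distinguish them. Concretely, for $k=1$, $\phi$ maps the $C_2$-neighbors of $v_0$ to the $C_3$-neighbors of $v_1$, yet a node in $C_2$ has degree $\beta^2$ and a node in $C_3$ has degree $\beta^1$: these are distinguishable already in one round, so nothing forces $p_{C_2}=p_{C_3}$. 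Theorem~\ref{thm:indistinguishability} asserts indistinguishability \emph{only} between $C_0$ and $C_1$; it does not propagate across the cluster tree. Without the postulated propagation, the system of domination inequalities you write down ($1\le p + \sum_i \beta^i p_i$, etc.) does not pin down $p=\Omega(1)$, since an adversarial algorithm could concentrate selection probability on a few non-$C_0$ clusters. To close the argument directly you would need a separate, harder analysis of the achievable selection probabilities across \emph{all} clusters; the paper sidesteps this entirely via the line graph reduction. Your upper bound $|MDS|\le n-n_0$ and the bookkeeping with Theorem~\ref{thm:Gk-existential} are fine, but they are the easy parts.
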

\begin{proof}
	Observe that a vertex cover $VC$ of any graph $G$ can be transformed into a dominating set $DS$ of its line graph $L(G)$ 
	without increasing its cardinality by adding one edge $\{v,w\}\in V(L(G))$ for every node $v\in V(G)$.
	Similarly, a dominating set of $L(G)$ can be turned into a vertex cover of $G$ by adding the nodes $v, w$ for all $\{v,w\}\in DS(L(G))$, at most doubling its size. 
	Therefore, in general, 
	MVC and MDS are equivalent up to a factor of two in the approximation ratio.
	Now consider a CT graph $G_k$ and its line graph $L(G_k)$. 
	In the LOCAL model of computation, 
	a $k$-round computation on the line graph can be simulated in $k+1$ rounds on the original graph, 
	i.e., $G_k$ and $L(G_k)$ have the same locality properties. 
	Hence, up to a factor of two in the approximation ratio, 
	MVC and MDS are equivalent also in our computational model. 
	The stated bounds hence follow analogously to Theorem~\ref{thm:bound-mvc} (p.~\pageref{thm:bound-mvc}).
\end{proof}

\subsection{Maximum Matching (MaxM)}\label{bounds:maxm}

Here, we are dealing with a packing problem, 
rather than a covering problem,
and we are asked to select edges, rather than nodes.
Therefore, 
instead of using a reduction from MVC, 
we amend the CT graph construction to allow for \emph{edge} indistinguishability arguments:
\begin{definition}[$k$-hop edge indistinguishability]\label{def:indistinguishability-edges}
Two~edges $\{v,w\}$ and $\{v',w'\}$ are $k$-hop indistinguishable if 
%it holds that $G^k(\{v,w\}) := G^k(\{v\}) \cup G^k(\{w\}) \cong G^k(\{v'\}) \cup G^k(\{w'\}) =: G^k(\{v',w'\})$ and 
there exists an isomorphism $\phi: V(G^k(\{v,w\})) \rightarrow V(G^k(\{v',w'\}))$ with $\phi(v) = v'$ and $\phi(w) = w'$,
where $G^k(\{v,w\}) := G^k(\{v\}) \cup G^k(\{w\})$ and $G^k(\{v',w'\}) := G^k(\{v'\}) \cup G^k(\{w'\})$.
%their endpoints $v$ and $v'$ as well as $w$ and $w'$ are pairwise indistinguishable.
\end{definition}

\begin{theorem}[MaxM lower bound]\label{thm:bound:maxm}
	The best approximation ratio a $k$-round deterministic (randomized) MaxM algorithm in the LOCAL model can achieve is
	%\begin{align*}
	$\alpha \in n^{\Omega(1/k^2)}/{k}$ and %~~~\text{and}~~~
	$\alpha \in \Omega\big({\Delta^{\frac{1}{k+1}}}/{k}\big)$. %~~~.\phantom{.}
	%\end{align*}
	Hence, to obtain an approximation ratio polylogarithmic in $n$ or $\Delta$, 
	in the worst case (in expectation),
	an algorithm needs to run for 
	%\begin{align*}
	$k\in \Omega\big(\sqrt{{\log n}/{\log\log n}}\big)$ or %~~~\text{and}~~~
	$k\in \Omega\big({\log \Delta}/{\log\log \Delta}\big)$ rounds, respectively. %~~~\mathit{rounds}.
	%\end{align*}
\end{theorem}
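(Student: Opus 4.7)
The plan is to extend the MVC lower bound (Theorem~\ref{thm:bound-mvc}) to maximum matching by shifting from node-level to edge-level indistinguishability (Definition~\ref{def:indistinguishability-edges}), mirroring the overall proof template.

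First, I would establish an edge-level analog of Theorem~\ref{thm:indistinguishability}: for any two ``corresponding'' $C_0$-$C_1$ edges $\{v_0, v_1\}$ and $\{v_0', v_1'\}$ in a high-girth CT graph (where $v_1, v_1'$ are the respective unique $C_1$-neighbors of $v_0, v_0'$), the two edges are $k$-hop edge-indistinguishable. This follows by adapting Algorithm~\ref{alg:isomorphism}: initialize by mapping both endpoints $v_0 \mapsto v_0'$ and $v_1 \mapsto v_1'$, then run coupled \textsc{Walk} calls from both roots, using the partner endpoint as $prev$. The invariant of Definition~\ref{def:invariant} and the inductive proof of Lemma~\ref{lem:invariant} carry over with only minor adjustments to the base case for the two coupled roots.

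Second, because the raw CT graph $G_k$ admits only a small maximum matching---$|M^*| \leq n - n_0 = O(k n_0 / \beta)$, of the same order as what any algorithm trivially achieves in expectation---I would amend the construction to a graph $\hat{G}_k$ that admits a large matching. A natural candidate is to take the disjoint union of two copies of $G_k$ and add a cluster-wise biregular matching between corresponding clusters, symmetrized so that the new edges are edge-indistinguishable from ``partner'' edges in the resulting graph. The amendment must preserve (i) the CT-graph structure so that edge indistinguishability still applies, (ii) the girth bound $g \geq 2k+1$ (via a lift argument analogous to Section~\ref{sec:micro_level}), and (iii) the order and maximum-degree bounds up to constant factors.

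The lower bound then proceeds in parallel with Theorem~\ref{thm:bound-mvc}. Under uniformly random identifiers, edge indistinguishability forces every edge in a given $k$-hop equivalence class to be selected with the same probability $p$. The matching constraint at the higher-degree endpoint of the class (e.g., a $C_1$-node with $\beta$ incident $C_0$-$C_1$ edges) forces $\beta p \leq 1$, so $p \leq 1/\beta$. Summing such class-wise bounds over all equivalence classes bounds the expected matching output by the algorithm by $O(n_0/\beta)$, while the amendment guarantees $|M^*| = \Omega(n_0)$ in $\hat{G}_k$, yielding an expected approximation ratio of $\Omega(\beta/k) = \Omega(\alpha)$ for $\beta = 4(k+1)\alpha$. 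The stated asymptotics then follow by the same parameter tuning as in the MVC proof.

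The main obstacle will be the amendment step: engineering $\hat{G}_k$ so that it admits a near-perfect matching without disrupting either the CT skeleton, the high-girth existence argument of Section~\ref{sec:micro_level}, or the delicate inductive invariant of Lemma~\ref{lem:invariant}. Newly introduced edges must preserve the symmetries exploited by Algorithm~\ref{alg:isomorphism}; an over-aggressive amendment would break the invariant, so a careful symmetry-preserving construction followed by re-running the lift argument to recover high girth is required.
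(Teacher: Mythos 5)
Your proposal follows the paper's construction and proof template closely: two copies of $G_k$ joined by a perfect matching between corresponding clusters, lifted to high girth as in Section~\ref{sec:micro_level}, with the lower bound derived from edge-level indistinguishability. This is exactly the paper's route, so the high-level approach is correct.

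Two imprecisions are worth flagging, because they reveal a subtle part of the paper's argument that your sketch elides. First, the key indistinguishability claim is \emph{not} between two corresponding $C_0$--$C_1$ edges (that is immediate from node indistinguishability); it is between a $C_0$--$C_1$ edge and a $C_0$--$\bar{C}_0$ matching edge (equivalently, a $C_1$--$\bar{C}_1$ edge). Only this cross-class indistinguishability lets you constrain the selection probability of the very edges that constitute the near-perfect matching, which is the entire point of the amendment. Second, the claim that summing class-wise bounds caps the expected computed matching at $\mathcal{O}(n_0/\beta)$ is too strong and also inconsistent with the ratio $\Omega(\beta/k)$ you write next to it: if the output were really $\mathcal{O}(n_0/\beta)$ against an optimum of $\Theta(n_0)$, the ratio would be $\Omega(\beta)$, not $\Omega(\beta/k)$. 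The correct bound is $\mathcal{O}(kn_0/\beta)$, and it is obtained by \emph{combining} the indistinguishability bound on $C_0$--$\bar{C}_0$ edges (contributing $\le n_0/(\beta+1)$ in expectation) with a purely combinatorial count: every matching edge not of that form has an endpoint outside $C_0 \cup \bar{C}_0$, of which there are only $2(n-n_0) = \mathcal{O}(kn_0/\beta)$ by Lemma~\ref{lem:largec0}. A ``sum over all equivalence classes'' does not give this, because the non-$C_0$ clusters collectively contain $\Theta(kn_0/\beta)$ nodes, not $\Theta(n_0/\beta)$, and classifying every edge by indistinguishability is unnecessary once you have the simple degree-sum cap. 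These slips do not break the final asymptotics, which come out right, but they are exactly the places where a careful write-up would need to follow the paper's split between the probabilistic argument (for $C_0$--$\bar{C}_0$) and the counting argument (for everything else).
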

\begin{proof}
	We create a hard graph $H_k$ from two low-girth copies of $G_k$, $G'_k$ and $\bar{G}'_k$, by first adding a perfect matching
	to connect each node from $G'_k$ with its counterpart in $\bar{G}'_k$ to form a low-girth graph $H'_k$ 
	and then lifting $H'_k$ to high girth using the construction detailed in Section~\ref{sec:micro_level} 
	to obtain $H_k$ with high girth.\footnote{%
		This idea appears already in \cite{kuhn2006},
		but the construction differs from the one presented here in that all powers of $\beta$ are shifted by one, 
		e.g., nodes in $C_0$ have $\beta^1$, rather than $\beta^0$, neighbors in $C_1$.}
	Figure~\ref{fig:hk} (p.~\pageref{fig:hk}) illustrates the idea.
	We refer to the part of $H_k$ corresponding to $G'_k$ as $G_k$ and to the part of $H_k$ corresponding to $\bar{G}'_k$ as $\bar{G}_k$.
	
	Since $G_k$ and $\bar{G}_k$ are high-girth CT graphs, 
	all nodes in the clusters $C_0$, $\bar{C}_0$, $C_1$, 
	and $\bar{C}_1$---and hence, the endpoints of edges $\{v_0,v_1\}$, $\{v_0, \bar{v}_0\}$, $\{\bar{v}_0,\bar{v}_1\}$, and $\{v_1, \bar{v}_1\}$ (where $v_i\in C_i$ and $\bar{v}_i\in \bar{C}_i$ for $i\in\{0,1\}$)---%
	have isomorphic $k$-hop subgraphs if the matching is not added before the lift. 
	In $H_k$, each node from $G_k$ has $1 = \beta^0$ additional neighbor in the copy of its own cluster in $\bar{G}_k$. 
	Since $H_k$ has high girth, however, the $k$-hop subgraphs of nodes in $C_0$, $\bar{C}_0$, $C_1$, 
	and $\bar{C}_1$ are still $k$-hop indistinguishable, 
	and an isomorphism $\phi: V(G^k_k(\{v_0,v_1\})) \rightarrow V(G^k_k(\{v_0,\bar{v}_0\}))$ ($\phi: V(G^k_k(\{\bar{v}_0,\bar{v}_1\})) \rightarrow V(G^k_k(\{v_1,\bar{v}_1\}))$) can map nodes from these clusters onto each other as needed to satisfy the requirements of Definition~\ref{def:indistinguishability-edges} 
	(we could again define an algorithm analogous to Algorithm~\ref{alg:isomorphism}, p.~\pageref{alg:isomorphism}, to construct this isomorphism explicitly).
	It follows that the edges running between $C_0$ and $C_1$ ($\bar{C}_0$ and $\bar{C}_1$) are $k$-hop edge indistinguishable from the edges running between $C_0$ and $\bar{C}_0$ ($C_1$ and $\bar{C}_1$).
	
	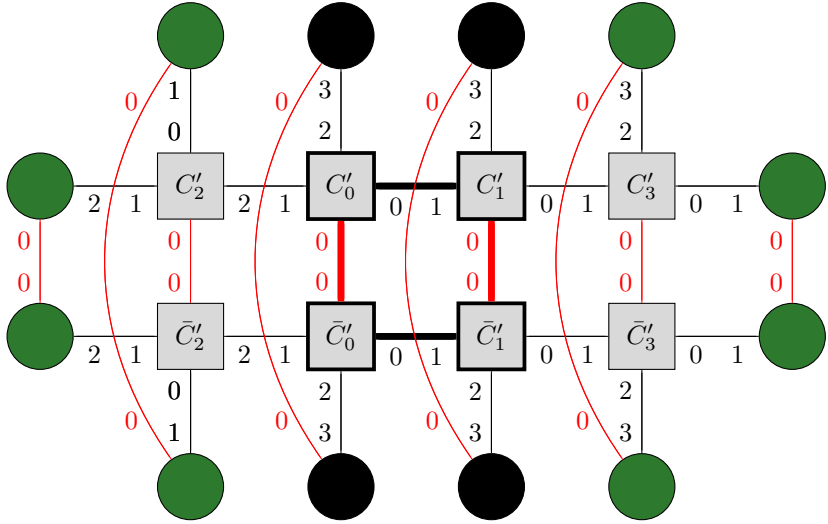
\begin{figure*}[hbt!]
		\centering
		% !TeX spellcheck = en_US
\begin{tikzpicture}[shorten >=1pt,node distance=2cm and 2cm,on grid,auto] 
\node[state] (c00) [align=center,line width=0.5mm,fill=gray!30,rectangle] {$\bar{C}'_0$};
\node[state] (c01) [align=center,fill=gray!30,line width=0.5mm,rectangle, right=of c00] {$\bar{C}'_1$};
\node[state] (c02) [align=center,fill=gray!30, rectangle, left=of c00] {$\bar{C}'_2$};
\node[state] (c03) [align=center,fill=gray!30, rectangle, right=of c01] {$\bar{C}'_3$};
\node[state] (c11) [align=center,fill=black, below=of c00] {};
\node[state] (c12) [align=center,fill=black, below=of c01] {};
\node[state] (c22) [align=center,fill=green!30!darkgray, left=of c02] {};
\node[state] (c21) [align=center,fill=green!30!darkgray, below=of c02] {};
\node[state] (c32) [align=center,fill=green!30!darkgray, below=of c03] {};
\node[state] (c31) [align=center,fill=green!30!darkgray, right=of c03] {};
\path
	(c00) edge [pos=0.25, align=left,below,line width=2.5pt] node {$0$} (c01)
	(c01) edge [pos=0.25, align=left,below,line width=2.5pt] node {$1$} (c00)
	(c00) edge [pos=0.25, align=left,below] node {$1$} (c02)
	(c02) edge [pos=0.25, align=left,below] node {$2$} (c00)
	(c01) edge [pos=0.25, align=left,below] node {$0$} (c03)
	(c03) edge [pos=0.25, align=left,below] node {$1$} (c01)
	(c02) edge [pos=0.25, align=left,below] node {$1$} (c22)
	(c22) edge [pos=0.25, align=left,below] node {$2$} (c02)
	(c03) edge [pos=0.25, align=left,below] node {$0$} (c31)
	(c31) edge [pos=0.25, align=left,below] node {$1$} (c03)
	(c02) edge [pos=0.25, align=left,left] node {$0$} (c21)
	(c21) edge [pos=0.25, align=left,left] node {$1$} (c02)
	(c02) edge [pos=0.25, align=left,left] node {$0$} (c21)
	(c21) edge [pos=0.25, align=left,left] node {$1$} (c02)
	(c00) edge [pos=0.25, align=left,left] node {$2$} (c11)
	(c11) edge [pos=0.25, align=left,left] node {$3$} (c00)
	(c01) edge [pos=0.25, align=left,left] node {$2$} (c12)
	(c12) edge [pos=0.25, align=left,left] node {$3$} (c01)
	(c03) edge [pos=0.25, align=left,left] node {$2$} (c32)
	(c32) edge [pos=0.25, align=left,left] node {$3$} (c03)
	;
	% SECOND G_k
	\node[state] (c00prime) [align=center,line width=0.5mm,fill=gray!30,rectangle,above=of c00] {$C'_0$};
	\node[state] (c01prime) [align=center,fill=gray!30,line width=0.5mm,rectangle, right=of c00prime] {$C'_1$};
	\node[state] (c02prime) [align=center,fill=gray!30, rectangle, left=of c00prime] {$C'_2$};
	\node[state] (c03prime) [align=center,fill=gray!30, rectangle, right=of c01prime] {$C'_3$};
	\node[state] (c11prime) [align=center,fill=black, above=of c00prime] {};
	\node[state] (c12prime) [align=center,fill=black, above=of c01prime] {};
	\node[state] (c22prime) [align=center,fill=green!30!darkgray, left=of c02prime] {};
	\node[state] (c21prime) [align=center,fill=green!30!darkgray, above=of c02prime] {};
	\node[state] (c32prime) [align=center,fill=green!30!darkgray, above=of c03prime] {};
	\node[state] (c31prime) [align=center,fill=green!30!darkgray, right=of c03prime] {};
	\path
	(c00prime) edge [pos=0.25, align=left,below,line width=2.5pt] node {$0$} (c01prime)
	(c01prime) edge [pos=0.25, align=left,below,line width=2.5pt] node {$1$} (c00prime)
	(c00prime) edge [pos=0.25, align=left,below] node {$1$} (c02prime)
	(c02prime) edge [pos=0.25, align=left,below] node {$2$} (c00prime)
	(c01prime) edge [pos=0.25, align=left,below] node {$0$} (c03prime)
	(c03prime) edge [pos=0.25, align=left,below] node {$1$} (c01prime)
	(c02prime) edge [pos=0.25, align=left,below] node {$1$} (c22prime)
	(c22prime) edge [pos=0.25, align=left,below] node {$2$} (c02prime)
	(c03prime) edge [pos=0.25, align=left,below] node {$0$} (c31prime)
	(c31prime) edge [pos=0.25, align=left,below] node {$1$} (c03prime)
	(c02prime) edge [pos=0.25, align=left,left] node {$0$} (c21prime)
	(c21prime) edge [pos=0.25, align=left,left] node {$1$} (c02prime)
	(c02prime) edge [pos=0.25, align=left,left] node {$0$} (c21prime)
	(c21prime) edge [pos=0.25, align=left,left] node {$1$} (c02prime)
	(c00prime) edge [pos=0.25, align=left,left] node {$2$} (c11prime)
	(c11prime) edge [pos=0.25, align=left,left] node {$3$} (c00prime)
	(c01prime) edge [pos=0.25, align=left,left] node {$2$} (c12prime)
	(c12prime) edge [pos=0.25, align=left,left] node {$3$} (c01prime)
	(c03prime) edge [pos=0.25, align=left,left] node {$2$} (c32prime)
	(c32prime) edge [pos=0.25, align=left,left] node {$3$} (c03prime)
	;
	
	% CONNECTIONS
	\path[red]
	(c00prime) edge [pos=0.25, align=left,left,line width=2.5pt] node {$0$} (c00)
	(c00) edge [pos=0.25, align=left,left,line width=2.5pt] node {$0$} (c00prime)
	(c01prime) edge [pos=0.25, align=left,left,line width=2.5pt] node {$0$} (c01)
	(c01) edge [pos=0.25, align=left,left,line width=2.5pt] node {$0$} (c01prime)
	(c02prime) edge [pos=0.25, align=left,left] node {$0$} (c02)
	(c02) edge [pos=0.25, align=left,left] node {$0$} (c02prime)
	(c03prime) edge [pos=0.25, align=left,left] node {$0$} (c03)
	(c03) edge [pos=0.25, align=left,left] node {$0$} (c03prime)
	(c21prime) edge [pos=0.1, align=left,left,bend right=35] node {$0$} (c21)
	(c21) edge [pos=0.1, align=left,left,bend left=35] node {$0$} (c21prime)
	(c12prime) edge [pos=0.1, align=left,left,bend right=35] node {$0$} (c12)
	(c12) edge [pos=0.1, align=left,left,bend left=35] node {$0$} (c12prime)
	(c32prime) edge [pos=0.1, align=left,left,bend right=35] node {$0$} (c32)
	(c32) edge [pos=0.1, align=left,left,bend left=35] node {$0$} (c32prime)
	(c11prime) edge [pos=0.1, align=left,left,bend right=35] node {$0$} (c11)
	(c11) edge [pos=0.1, align=left,left,bend left=35] node {$0$} (c11prime)
	(c31prime) edge [pos=0.25, align=left,left] node {$0$} (c31)
	(c31) edge [pos=0.25, align=left,left] node {$0$} (c31prime)
	(c22prime) edge [pos=0.25, align=left,left] node {$0$} (c22)
	(c22) edge [pos=0.25, align=left,left] node {$0$} (c22prime)
	;
\end{tikzpicture}
		\caption[CT Graphs for Edge Indistinguishability Arguments]{%
			Construction of $H'_2$ from $G'_2$ and $\bar{G}'_2$; 
			edges corresponding to the perfect matching between nodes in $G'_2$ and $\bar{G}'_2$ are marked red. 
			The edges represented by the thickened lines are $2$-hop indistinguishable in $H_2$.
		}\label{fig:hk}
	\end{figure*}
	
	Now consider a node $v\in C_0$ and the set $E_v$ of $\beta + 1$ pairwise indistinguishable edges that have $v$ as an endpoint.
	To guarantee a valid matching, 
	a deterministic algorithm operating on a labeling chosen uniformly at random must ensure $\sum_{e\in E_v}p(e)\leq 1$,
	so each edge $e$ (including the edge running from $v$ to $v'\in \bar{C}_0$) must be selected into the matching with probability $p(e) \leq \frac{1}{\beta +1}$.
	Consequently, 
	the expected number of edges contributed to the matching by edges running between $C_0$ and $\bar{C}_0$
	%, $\mathbb{E}[|M(C_0,\bar{C}_0)_D|]$, 
	is 
	%\begin{align*}
	$\mathbb{E}[|M(C_0,\bar{C}_0)_D|]\leq \frac{n_0}{\beta+1}$ 
	%\end{align*}
	by linearity of expectation, where $n_0 := |C_0|$.
	To obtain a feasible matching, 
	the number of edges in the matching without an endpoint in $C_0\cup\bar{C}_0$ can be at most $2n-2n_0$, 
	where $n:= |V(G_k)|$.
	It follows that there exists at least one labeling for which a $k$-round deterministic algorithm produces a matching with
	\begin{align*}
	|M_D|\leq \frac{n_0}{\beta+1} + 2n-2n_0&\leq~n_0\cdot \frac{4(k+1)+1}{\beta}\\
	&\in~ \mathcal{O}\left(n\cdot\frac{k}{\beta}\right),
	\end{align*}
	where in the second to last step we applied Lemma~\ref{lem:largec0} (p.~\pageref{lem:largec0}) to bound $n-n_0$.
	With $\mathbb{E}[|M_{R}|]\leq \mathbb{E}[|M_D|]$ from Yao's principle, the bound generalizes to $k$-round randomized algorithms.
	
	To see that this enforces the approximation ratios stated above, 
	observe that the maximum matching for $H_k$ has cardinality $n$ by construction,
	i.e., $\alpha\in \Omega\big(\frac{\beta}{k}\big)$.
	The trade-offs between running time and approximation ratio in terms of $n$ and $\Delta$ now follow analogously to the proof of Theorem~\ref{thm:bound-mvc} (p.~\pageref{thm:bound-mvc}),
	noting that the increase of factor $2$ in the number of nodes and additive $1$ in node degrees has no asymptotic effect.
\end{proof}

\subsection{Maximal Matching (MM)}\label{bounds:mm}

We start by establishing the lower bound in the deterministic setting by exploiting the relationship between MM and MVC:

\begin{theorem}[MM lower bound for deterministic algorithms]\label{thm:bound-mm-deterministic}
	Any \emph{deterministic} MM algorithm needs to run for
	$k~\in~\Omega\big(\min\big\{\sqrt{\log n/\log\log n},$ $\log \Delta / \log\log \Delta \}\big)$
	in the worst case.
\end{theorem}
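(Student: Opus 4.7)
The plan is to reduce approximate MVC computation to MM computation and invoke Theorem~\ref{thm:bound-mvc} (p.~\pageref{thm:bound-mvc}). The key classical observation is that for any maximal matching $M$ of a graph $G$, the set $V(M)$ of matched endpoints is a vertex cover: if some edge $\{u,v\}\in E(G)$ had both endpoints outside $V(M)$, it could be added to $M$, contradicting maximality. Moreover, any vertex cover must contain at least one endpoint of each edge of $M$, so $|MVC|\ge |M|$, yielding $|V(M)|=2|M|\le 2|MVC|$. Hence $V(M)$ is a $2$-approximate MVC.

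The next step is to note that this transformation is purely local and incurs no additional communication rounds. In the LOCAL model, the output of an MM algorithm is a marking of each edge as matched or unmatched, which each endpoint learns by the time the algorithm terminates. A node can therefore decide whether to place itself in the vertex cover $V(M)$ by inspecting only the states of its own incident edges, i.e., with zero further rounds of communication. Consequently, any $k$-round deterministic MM algorithm can be post-processed into a $k$-round deterministic algorithm that outputs a $2$-approximate minimum vertex cover.

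Finally, I would apply Theorem~\ref{thm:bound-mvc} (p.~\pageref{thm:bound-mvc}) with $\alpha=2$. Since $2\in\log^{\mathcal{O}(1)}\min\{n,\Delta\}$ trivially, the theorem forces any such algorithm to run for $k\in\Omega(\min\{\sqrt{\log n/\log\log n},\log\Delta/\log\log\Delta\})$ rounds in the worst case. By the reduction above, this lower bound transfers verbatim to deterministic MM algorithms, establishing the claim.

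There is no real technical obstacle here: the argument is a one-step reduction, and the only thing to verify carefully is that the conversion from MM output to vertex cover is genuinely local in the LOCAL model (which it is, as each node needs only its own incident-edge labels). The heavy lifting has already been done in Theorem~\ref{thm:bound-mvc} (p.~\pageref{thm:bound-mvc}), so this proof reduces to an observation plus a citation.
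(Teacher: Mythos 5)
Your proposal is exactly the paper's argument: endpoints of a maximal matching form a $2$-approximate vertex cover, the conversion is local, and the bound follows by invoking Theorem~\ref{thm:bound-mvc} with $\alpha=2$. You merely spell out the standard $2$-approximation fact and the zero-round post-processing, which the paper leaves implicit.
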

\begin{proof}
	Since taking the endpoints of a maximal matching yields a $2$-approximation of MVC,
	the claim follows immediately from the bounds established in Theorem~\ref{thm:bound-mvc} (p.~\pageref{thm:bound-mvc}).
\end{proof}

For the randomized setting, we do not obtain the same bounds as in Theorem~\ref{thm:bound-mvc} (p.~\pageref{thm:bound-mvc}) immediately. 
The reason is that randomized algorithms for binary problems lend themselves to Las Vegas algorithms, 
whereas randomized algorithms for optimization problems lend themselves to Monte Carlo algorithms.
We establish the bounds for the randomized setting by showing how a randomized MM algorithm that operates in $T$ rounds in expectation 
can be used to compute an $\mathcal{O}(1)$ approximation in expectation for MVC in $2T+2$ rounds:

\begin{theorem}[MM lower bound for randomized algorithms]\label{thm:bound-mm-randomized}
	In expectation, to find a solution, 
	any \emph{randomized} MM algorithm needs to run for 
	%\begin{align*}
	$k\in \Omega\big(\sqrt{{\log n}/{\log\log n}}\big)$ and %~~~\text{and}~~~
	$k\in \Omega\big({\log \Delta}/{\log\log \Delta}\big)$ rounds. %~~~\mathit{rounds}.
	%\end{align*}
\end{theorem}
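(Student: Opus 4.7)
The plan is to reduce MM to MVC approximation on cluster tree graphs. Given a randomized MM algorithm $\mathcal{A}$ with expected round complexity $T$ (in the sense that every run produces a valid maximal matching and the expected number of rounds until global termination is $T$), I would construct an $\mathcal{O}(T)$-worst-case-round MVC algorithm $\mathcal{B}$ whose expected output on the CT family contradicts Lemma~\ref{lem:mvcsize} whenever $T$ is too small, and then invoke the reasoning of Theorem~\ref{thm:bound-mvc} to translate the contradiction into the claimed asymptotic lower bound. The construction of $\mathcal{B}$ is natural: simulate $\mathcal{A}$ for $cT$ rounds for a sufficiently large constant $c$, then spend two additional rounds of local postprocessing. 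Node $v$ enters the output vertex cover if either (i) $\mathcal{A}$ terminated at $v$ within $cT$ rounds with $v$ reporting itself as matched, or (ii) $v$ or one of its neighbors has not yet terminated. Validity follows by case analysis on any edge $\{u,v\}$: if both endpoints terminated, maximality of $\mathcal{A}$'s matching forces at least one of them into VC via (i); otherwise both endpoints are in VC via (ii).

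By Markov's inequality applied to the global termination time of $\mathcal{A}$, the probability that some node has not terminated within $cT$ rounds is at most $1/c$, whence $\mathbb{E}[|VC_{\mathcal{B}}|] \leq 2|OPT_{MVC}| + n/c$. Instantiating $\mathcal{B}$ on a cluster tree graph $G_{cT+2}$ with parameter $\beta$ chosen (as in the proof of Theorem~\ref{thm:bound-mvc}) as a sufficiently large multiple of $T$, Lemma~\ref{lem:largec0} gives $|OPT_{MVC}| \leq n - n_0 = \mathcal{O}(n_0 \cdot T/\beta)$ and $n \leq 2n_0$, so both terms in the upper bound on $\mathbb{E}[|VC_{\mathcal{B}}|]$ can be driven strictly below $n_0/4$ each by choosing $c$ and $\beta$ sufficiently large. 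This contradicts the unconditional lower bound $\mathbb{E}[|VC_{\mathcal{B}}|] \geq n_0/2$ from Lemma~\ref{lem:mvcsize} applied to the $(cT+2)$-round algorithm $\mathcal{B}$ on $G_{cT+2}$, and the contradiction is available whenever $G_{cT+2}$ can be realized within the $n$-node and $\Delta$-degree budget. Using Theorem~\ref{thm:Gk-existential} and Lemma~\ref{lem:delta-gk} exactly as in the proof of Theorem~\ref{thm:bound-mvc} then yields $T \in \Omega(\sqrt{\log n / \log \log n})$ and $T \in \Omega(\log \Delta / \log \log \Delta)$.

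The main obstacle is the parameter balancing: the bare Markov bound with $c = 2$ leaves a residual $n/2$ term in $\mathbb{E}[|VC_{\mathcal{B}}|]$ that coincides with the $n_0/2$ lower bound from Lemma~\ref{lem:mvcsize}, so the two estimates precisely cancel. A contradiction only materializes once $c$ is taken as a constant strictly larger than $2$ and $\beta$ as a sufficiently large constant multiple of $T$, both adjustments being possible because $c$ is a free constant of the Markov amplification and $\beta$ is a free parameter of the CT construction. These changes affect only constants in $\mathcal{B}$'s round complexity and in the CT graph's size and maximum degree, all of which are absorbed into the asymptotic notation for the final bound on $T$.
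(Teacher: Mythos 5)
Your proposal is correct, and it takes a genuinely simpler route than the paper. Both proofs reduce a randomized MM algorithm with expected running time $T$ to an MVC algorithm with \emph{fixed} round complexity $\Theta(T)$ and then derive a contradiction with the MVC hardness result. However, the paper's proof runs $c\ln\Delta$ independent parallel simulations of the MM algorithm (each truncated at $2T$ rounds), cleans up conflicting edges in each run, aggregates via a fractional threshold rule, and applies a Chernoff bound to show that the aggregated set is a vertex cover except with probability $\mathcal{O}(1/\Delta)$; the fallback step then contributes only $\mathcal{O}(|MVC|)$ in expectation, yielding a $14$-approximation that is plugged into Theorem~\ref{thm:bound-mvc}. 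You instead run a \emph{single} truncated simulation, use a termination-aware postprocessing rule whose validity is immediate, and bound the residual contribution of non-terminated nodes by a bare Markov argument: $\mathbb{E}[|VC_{\mathcal{B}}|]\le 2|OPT|+n/c$. This is not a constant-factor approximation on general graphs, so you cannot invoke Theorem~\ref{thm:bound-mvc} as a black box; instead, you exploit the CT-graph structure ($n<2n_0$ and $|OPT|\le n-n_0<2n_0(k'+1)/\beta$ from Lemma~\ref{lem:largec0}) to drive both terms below $n_0/4$ by choosing $c>8$ and $\beta$ a sufficiently large constant multiple of $T$, contradicting Lemma~\ref{lem:mvcsize} directly. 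What your approach buys is a substantially shorter and more elementary argument: no logarithmic blow-up in the number of simulations, no Chernoff bound, and a postprocessing rule whose correctness is a one-line case analysis. What the paper's approach buys is a reduction that yields a genuine $\mathcal{O}(1)$-approximate MVC algorithm on \emph{arbitrary} graphs, which is conceptually cleaner to state and potentially reusable outside the CT-graph family, at the cost of more machinery. Both are sound; yours is the more economical proof of the theorem as stated.
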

\begin{proof}
	Let $\mathcal{A}_M$ be an MM algorithm with expected time complexity $T$, running on some graph $G = (V,E)$ with maximum degree $\Delta$.
	The following MVC approximation algorithm $\mathcal{A}_{VC}$ 
	runs with fixed time complexity $2T+2$:
	\begin{enumerate}[noitemsep]
		\item For a sufficiently large constant $c$, execute $c\ln \Delta$ independent runs $i$ of the following in parallel:
		\begin{enumerate}[noitemsep]
			\item All nodes simulate $\mathcal{A}_M$ for $2T$ rounds.
			\item If $E_M$ is the edge set selected after these rounds, every node that is incident with more than one selected edge removes \emph{all} selected incident edges from $E_M$ in an additional round of communication.
			\item All nodes (locally) output the endpoints of all edges remaining in $E_M$ as $V'_{VC, i}$.
		\end{enumerate}
		\item Define 
		%\begin{align*}
		$x_v := 6\cdot\frac{|\{i\mid v\in V'_{VC,i}\}|}{c\ln\Delta}~,~\text{and set}~V_{VC} := \{v\in V\mid x_v\geq 1\}$.%~.		
		%\end{align*}
		\label{step:first-shot}
		\item All nodes communicate whether they are in $V_{VC}$, 
		and nodes with a remaining uncovered edge join $V_{VC}$.\label{step:last-resort}
	\end{enumerate}
	The final step ensures that the algorithm returns a vertex cover.
	
	To see that not too many nodes are selected in expectation,
	observe first that by construction,
	$V'_{VC,i}$ is a matching for each $i$.
	Therefore, we have that
	%\begin{equation*}
	$\sum_{v\in V(G)} x_v \le 6\cdot 2 \cdot |MVC|$,
	%\end{equation*}
	where $MVC$ is a minimum vertex cover.
	As only nodes with $x_v\ge 1$ are selected in Step~\ref{step:first-shot},
	the total number of nodes selected in this step is (deterministically) at most $12 \cdot |MVC|$.
	
	It remains to bound the expected number of nodes selected in Step~\ref{step:last-resort}.
	To this end,
	observe that by Markov's bound, 
	each independent run of $\mathcal{A}_M$ yields a maximal matching with probability $\geq \frac{1}{2}$,
	and hence, each $V'_{VC, i}$ forms a VC with that same probability.
	Whenever this is the case, $V'_{VC, i}$ contains at least one endpoint of each edge $\{v,w\}\in E$. 
	Hence, if at least one third of all runs are successful, we have $x_v+x_w\geq 2$ for all edges $\{v,w\}\in E$, 
	and $V_{VC}$ is a vertex cover already at the end of Step~\ref{step:first-shot}.
	Letting $X$ be sum of the independent and identically distributed Bernoulli variables $X_i$ indicating whether run $i$ is successful, 
	we have $\mathbb{E}[X] \geq \frac{c\ln \Delta}{2}$.
	Using a Chernoff bound,
	we can then bound the probability to have less than $\frac{c\ln \Delta}{3}$ runs in which $V'_{VC, i}$ forms a VC as
	\begin{align*}
	\mathbb{P}\bigg[X<\frac{c \ln \Delta}{3}\bigg]
	\leq~& %\mathbb{P}\bigg[X\leq\frac{c \ln \Delta}{3}\bigg] = 
	\mathbb{P}\bigg[X\leq\left(1-\frac{1}{3}\right)\frac{c \ln \Delta}{2}\bigg]
	\\\leq~ 
	&e^{-\frac{(\frac{1}{3})^2\cdot c\ln \Delta}{4}}
	= e^{-\frac{c\ln \Delta}{36}}
	= \frac{1}{\Delta^{\frac{c}{36}}}.
	\end{align*}
	Hence, for $c\geq 36$, the probability that $V_{VC}$ is not a VC after Step~\ref{step:first-shot} 
	is at most $\frac{1}{\Delta}$. 
	Therefore,
	with probability at least $1-1/\Delta$,
	no further nodes are added in Step~\ref{step:last-resort} of the algorithm.
	Otherwise, i.e.,
	with probability at most $1/\Delta$,
	we add no more than $2|E(G)|$ nodes.
	Given that any vertex cover must contain at least $|E(G)|/\Delta$ nodes,
	we conclude that the expected size of the VC computed via the procedure described above is at most
	\begin{align*}
	12\cdot |MVC| + \frac{1}{\Delta}\cdot 2\Delta\cdot |MVC| = 14\cdot |MVC|.
	\end{align*}
	Thus, a randomized MM algorithm beating the stated bounds would imply an MVC algorithm beating the bounds from Theorem~\ref{thm:bound-mvc} (p.~\pageref{thm:bound-mvc}).
	
	Since such an MVC algorithm cannot exist, 
	the stated bounds must hold.\footnote{%
		In \cite{kuhn2016}, the size of the VC computed in Step~\ref{step:first-shot} is bounded as at most $10\cdot |MVC|$ without explanation, 
		and an expected VC size of at most $11 \cdot |MVC|$ is derived.
	}
\end{proof}

\balance
\subsection{Maximal Independent Set (MIS)}\label{bounds:mis}

We establish our last lower bound via reduction from MM:

\begin{theorem}[MIS lower bound]\label{thm:bound-mis}
	In the worst case (in expectation), to find a solution, 
	any deterministic (randomized) MIS algorithm needs to run for 
	%\begin{align*}
	$k\in \Omega\big(\sqrt{{\log n}/{\log\log n}}\big)$ or %~~~\text{and}~~~
	$k\in \Omega\big({\log \Delta}/{\log\log \Delta}\big)$ rounds. %~~~\mathit{rounds}.
	%\end{align*}
\end{theorem}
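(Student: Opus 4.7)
The plan is to reduce MM to MIS via the classical line-graph construction. Given any graph $G=(V,E)$, form the line graph $L(G)$ with node set $E$ and an edge between $e,e'\in E$ iff $e\cap e'\neq\emptyset$. A set $M\subseteq E$ is a matching of $G$ iff it is an independent set of $L(G)$, and maximality is preserved under this bijection; hence an MIS of $L(G)$ is exactly a maximal matching of $G$. Assuming for contradiction an MIS algorithm $\mathcal{A}_{MIS}$ that violates one of the stated bounds, I would apply $\mathcal{A}_{MIS}$ to $L(H_k)$, where $H_k$ is the hard MM instance from the proof of Theorem~\ref{thm:bound-mm-deterministic} (resp.\ Theorem~\ref{thm:bound-mm-randomized}), and show that the result is an MM of $H_k$ computable too fast on $H_k$.

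The next step is to simulate $\mathcal{A}_{MIS}$ on $L(G)$ using only $O(k)$ rounds on $G$. Each edge $e=\{u,v\}\in E$ is jointly simulated by its two endpoints $u,v$: they can agree on a canonical identifier (e.g., the sorted pair of their IDs) and, in the randomized case, on a shared random string derived from the random bits of the lower-ID endpoint. Because any two edges at distance $d$ in $L(G)$ have endpoints at distance at most $d+1$ in $G$, the $k$-hop neighborhood of $e$ in $L(G)$ is determined by the $(k+1)$-hop neighborhoods of $u$ and $v$ in $G$, so a $k$-round execution on $L(G)$ is faithfully simulated in $k+O(1)$ rounds on $G$.

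It remains to translate parameters. If $G$ has $n$ nodes and maximum degree $\Delta$, then $L(G)$ has $|E(G)|\le n\Delta/2$ nodes and maximum degree at most $2(\Delta-1)<2\Delta$. Applied to $H_k$, which by the construction underlying Theorem~\ref{thm:bound-mm-deterministic} has $n\in\beta^{\mathcal{O}(k^2)}$ and $\Delta\in\beta^{\mathcal{O}(k)}$, the line graph $L(H_k)$ has $n'\in\beta^{\mathcal{O}(k^2)}$ and $\Delta'\in\beta^{\mathcal{O}(k)}$. Consequently $\sqrt{\log n'/\log\log n'}=\Theta\bigl(\sqrt{\log n/\log\log n}\bigr)$ and $\log\Delta'/\log\log\Delta'=\Theta(\log\Delta/\log\log\Delta)$, so an MIS algorithm beating the bounds in the theorem would yield an MM algorithm beating the bounds of Theorems~\ref{thm:bound-mm-deterministic} and~\ref{thm:bound-mm-randomized}, a contradiction.

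The main obstacle is essentially bookkeeping rather than a genuinely hard step: one must verify that the line-graph simulation is clean in the LOCAL model (unique identifiers and, for randomized algorithms, independent random bits for each virtual $L(G)$-node), that the distance distortion between $G$ and $L(G)$ is only an additive constant, and that the asymptotic parameter translation truly preserves the bound. None of these requires novel ideas beyond what is already developed in the paper, so the result follows by a short reduction argument.
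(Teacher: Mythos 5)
Your proposal is correct and follows essentially the same route as the paper: reduce MM to MIS via the line graph, observe that a $k$-round MIS computation on $L(G)$ can be simulated in $k+O(1)$ rounds on $G$, and check that the parameter translation ($|V(L(G))|\leq n\Delta/2 \leq n^2/2$, $\Delta_{L(G)}\leq 2\Delta$) does not affect the asymptotic bounds. The only small slip is notational: you write $H_k$ for the hard MM instance, but the paper reserves $H_k$ and $H'_k$ for auxiliary regular graphs (Section~\ref{sec:micro_level}) and the MaxM instance (Theorem~\ref{thm:bound:maxm}); the MM bounds are inherited from MVC on the CT graphs $G_k$, so $L(G_k)$ is the graph you want to feed to $\mathcal{A}_{MIS}$.
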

\begin{proof}
	Observe that an MM of $G_k$ is an MIS of the line graph $L(G_k)$,
	and that a $k$-round MIS computation on $L(G_k)$ can be simulated in $k+1$ rounds on $G_k$.
	Furthermore, $n_{L(G_k)}\leq n_{G_k}^2/2$, 
	and $\Delta_{L(G_k)}\leq 2\Delta_{G_k}$. 
	As $\log n \in \Theta(\log (n^2/2))$, 
	an MIS algorithm beating the stated bounds on $L(G_k)$ would imply an MM algorithm
	beating the bounds from Theorems~\ref{thm:bound-mm-deterministic} or \ref{thm:bound-mm-randomized} (pp.~\pageref{thm:bound-mm-deterministic} et seq.). 
	It follows that the stated bounds must hold also for MIS.
\end{proof}
\vspace*{0pt}

\balance

\end{document}